\DeclareMathAlphabet{\mathcalligra}{T1}{calligra}{m}{n}
\tikzset{
->, 
>=stealth, 
node distance=2cm, 
every state/.style={thick, fill=gray!10,ellipse}, 
interm state/.style={thick, fill=gray!10, draw, rectangle, inner sep=6}, 
initial text=$ $, 
}
\newcommand{\unzip}{\mathrm{unzip}}
\newcommand{\alphabet}{\Sigma}
\newcommand{\domain}{\Sigma}
\newcommand{\Prop}{X}
\newcommand{\trace}{\tau}
\newcommand{\atrace}{\rho}
\newcommand{\val}{v}
\newcommand{\fininf}{\omega}
\newcommand{\pad}{\#}
\newcommand{\AllAsyncTn}{{(\alphabet^*)}^{\Prop}}
\newcommand{\lt}{\sigma}
\newcommand{\setTraces}{T}
\newcommand{\setATraces}{R}
\newcommand{\setsetTraces}{\textbf{T}}
\newcommand{\traceVar}{\pi}
\newcommand{\traceAssign}{\Pi}
\newcommand{\VarTrace}{\mathcal{V}}
\newcommand{\FOL}{\text{FO}}
\newcommand{\Rel}{\mathop{\precsim}}
\newcommand{\NodeFOL}{\FOL[\Rel]}
\newcommand{\Redux}[1]{\lfloor #1 \rfloor}
\newcommand{\sync}{\mathrm{zip}}
\newcommand{\async}{\mathrm{unzip}}
\newcommand{\Sync}[1]{\mathrm{Zip}(#1)}
\newcommand{\Async}[1]{\mathrm{Unzip}(#1)}
\newcommand{\As}{\mathop{:}}
\newcommand{\States}{Q}
\newcommand{\SFinal}{F}
\newcommand{\SInitial}{\hat{\States}}
\newcommand{\SLabel}{\gamma}
\newcommand{\Transition}{\delta}
\newcommand{\actpat}{p}
\newcommand{\Actions}{A}
\newcommand{\Lang}{\mathcal{L}}
\newcommand{\std}{q}
\newcommand{\initstd}{\hat{\std}}
\newcommand{\runA}{r}
\newcommand{\node}{\textit{node}}
\newcommand{\fhnaut}{\mathcal{H}}
\newcommand{\stutFreeA}{\mathcal{A}}
\newcommand{\alphNSFA}{\alphabet^{\Prop}}
\newcommand{\Pre}[2]{\textit{In}(#2)}
\newcommand{\Pos}[2]{\textit{Out}(#2)}
\newcommand{\deter}[1]{\text{det}(#1)}
\newcommand{\filter}[3]{#1^{#2}[#3]}
\newcommand{\compl}{\mathrm{complete}}
\newcommand{\uniNSFA}{\mathcal{U}}
\newcommand{\as}[2]{#1\mathop{:}#2}
\newcommand{\nextA}{\mathrm{next}}
\newcommand{\ProTraceVar}{\Prop_{\VarTrace}}
\newcommand{\Kpke}{K}
\newcommand{\KStates}{W}
\newcommand{\KState}{w}
\newcommand{\KTransition}{\Delta}
\newcommand{\KLabel}{V}
\newcommand{\KSFull}{\Kpke \mathop{=} (\KStates, \domain^\Prop, \KTransition,  \KLabel)}
\newcommand{\KPaths}{\mathrm{Paths}}
\newcommand{\Kpath}{\varrho}
\newcommand{\ActLab}{\mathbb{A}}
\newcommand{\KpkeOp}{\mathbb{W}}
\newcommand{\KSIn}{\KStates_{\mathrm{in}}}
\newcommand{\KSOut}{\KStates_{\mathrm{out}}}
\newcommand{\Slice}[1]{\mathrm{Slice}(#1)}
\newcommand{\Intersect}{\mathrm{Join}}
\newcommand{\toAct}{\mathop{\downarrow}_{\ActLab}}
\newcommand{\pInput}{\textbf{read}}
\newcommand{\pOutput}{\textbf{output}}
\newcommand{\pStatus}{\mathrm{status}}
\newcommand{\pVar}{\mathrm{var}}
\newcommand{\pDebugW}[1]{\textit{Deb}_{#1}}
\newcommand{\pClear}[1]{\textit{Clear}_{#1}}
\newcommand{\tAnd}{\text{ and }}
\newcommand{\tOr}{\text{ or }}
\newcommand{\tIf}{\text{ if }}
\newcommand{\tthen}{\text{ then }}
\newcommand{\tIff}{\text{ iff }}
\newcommand{\tSt}{\text{ s.t.\ }}
\newcommand{\Def}{\stackrel{\mathclap{\text{def}}}{=}}
\title{Hypernode Automata} 
\author{Ezio Bartocci}{Technische Universit\"at Wien, Vienna, Austria \and \url{http://www.eziobartocci.com} }{ezio.bartocci@tuwien.ac.at}{https://orcid.org/0000-0002-8004-6601}{}
\author{Thomas A. Henzinger}{IST Austria, Klosterneuburg, Austria \and \url{http://pub.ist.ac.at/~tah/} }{tah@ist.ac.at}{https://orcid.org/0000-0002-2985-7724}{}
\author{Dejan Nickovic}{AIT Austrian Institute of Technology, Vienna, Austria }{dejan.nickovic@ait.ac.at}{https://orcid.org/0000-0001-5468-0396}{}%
\author{Ana Oliveira da Costa}{Technische Universit\"at Wien, Vienna, Austria }{ana.costa@tuwien.ac.at}{}{}
\authorrunning{E. Bartocci et al.} 
\keywords{Hyperproperties, Asynchronous, Automata, Logic} 
\begin{document}

\maketitle

\begin{abstract}
We introduce \emph{hypernode automata} as a new specification formalism for hyperproperties of concurrent systems. 
They are finite automata with nodes labeled with \emph{hypernode logic} formulas and transitions labeled with actions. 
A hypernode logic formula specifies relations between sequences of variable values in different system executions. 
Unlike HyperLTL, hypernode logic takes an \emph{asynchronous} view on execution traces by constraining the values and the order of value changes of each variable without correlating the timing of the changes. 
Different execution traces are synchronized solely through the transitions of hypernode automata. 
Hypernode automata naturally combine asynchronicity at the node level with synchronicity at the transition level.
We show that the model-checking problem for hypernode automata is decidable over action-labeled Kripke structures, 
whose actions induce transitions of the specification automata.
For this reason, hypernode automaton is a suitable formalism for specifying and verifying asynchronous hyperproperties, such as declassifying observational determinism in multi-threaded programs.
\end{abstract}

\section{Introduction}
\label{sec:intro}

Formalisms like Linear temporal logic (LTL) or automata are commonly used to specify and verify trace properties of concurrent systems.
Security requirements such as information-flow policies require simultaneous reasoning about multiple execution traces; hence, they cannot be expressed as trace properties.
Hyperproperties address this limitation by specifying properties of trace sets~\cite{ClarksonS10}.  HyperLTL~\cite{ClarksonFKMRS14}, an extension of LTL with trace quantifiers, has emerged as a popular formalism for both the specification and verification of an important class of hyperproperties.
The temporal operators of HyperLTL and related hyperlogics progress in lockstep over all traces that are bound to a trace variable;
they specify \emph{synchronous} hyperproperties.
As a consequence, HyperLTL cannot specify, for instance, an information-flow policy that changes from one system mode to another if the mode transition can occur at different times in different system executions~\cite{BartocciFHNC22}.
This limitation has been observed repeatedly and independently in recent years by \cite{GutsfeldMO21,lics2021,cav2021}
all of whom have proposed asynchronous versions of hyperlogics to address the problem.

We take a different route and propose a specification language for hyperproperties, called \emph{hypernode automata}, which combines synchronicity and asynchronicity by combining automata and logic.
\emph{Hypernode automata} are finite automata with nodes labeled with formulas from a fully asynchronous, non-temporal hyperlogic, called \emph{hypernode logic}, and transitions labeled with actions used to synchronize different execution traces. 
While automata-based languages have been used before for specifying synchronous hyperproperties \cite{BonakdarpourS21}, 
hypernode automata are the first language that systematically separates trace synchronicity from trace asynchronicity in the specification of hyperproperties:
within hypernodes (i.e., states of the hypernode automata), different execution traces proceed at independent speeds,
only to ``wait for each other'' when they transition to the next hypernode.
This separation leads to natural specifications and plays to the strengths of both automata-based and logic-based formalisms.

Hypernodes' specification adopts a maximally asynchronous view over finite trace segments: each program variable can progress independently.
We introduce \emph{hypernode logic} to specify such asynchronous hyperproperties.
Hypernode logic includes quantification over finite 
traces and the binary relation $x(\pi)\Rel y(\pi')$, for trace variables $\pi$ and~$\pi'$, and system (or program) variables $x$ and $y$.
This relation asserts that the program variable $x$ undergoes the same
ordered value changes in the trace assigned to $\pi$ as the variable $y$ does in~$\pi'$,
but the changes may happen at different times (stuttering), 
and there may be additional changes of $y$ in $\pi'$ (prefixing). 
%
%
The \emph{stutter-reduced prefixing} relation $\Rel$ between finite traces is the only nonlogical operator
of hypernode logic, yielding a novel, elegant, and powerful method to specify asynchronous hyperproperties
over finite traces.

For each finite or infinite action sequence, a \emph{hypernode automaton} specifies a corresponding sequence of formulas from hypernode logic.
%
This paper's main contribution is a model-checking algorithm for hypernode automata.
Our algorithm checks if a given hypernode automaton accepts the set of (possibly) infinite traces defined by an action-labeled Kripke structure.
The action labels on transitions of the Kripke structure (the ``model'') induce equally labeled transitions of the hypernode automaton (the ``specification''). 
The subroutine that model-checks formulas of hypernode logic is technically novel: it introduces automata-theoretic constructions on a new concept called \emph{stutter-free automata}, which are then used in familiar logical contexts such as filtration and self-composition.
While hypernode logic has existential trace quantifiers and thus can specify nonsafety hyperproperties such as the independence of variables \cite{BartocciFHNC22}, we focus in this paper on safe hypernode automata to specify hyperproperties of infinite executions.
%
%
To our knowledge, this is the first decidability result for a simple but general formalism that can specify important asynchronous hyperproperties.

Perhaps the most famous asynchronous hyperproperty is 
\emph{observational determinism}~\cite{ZdancewicMyers2003}.
In Section~\ref{sec:example}, we motivate hypernode logic by providing a formal specification of observational determinism as defined by~\cite{ZdancewicMyers2003}. 
We further motivate hypernode automata by specifying an information-flow policy that requires dynamic changes in observational determinism between different modes of a concurrent program.
An example of such a dynamic change is the 
\emph{declassification} of information, which can be represented by a transition between hypernodes.
Section~\ref{sec:automata} defines 
hypernode logic 
and hypernode automata.
%
We demonstrate the useful expressive power of hypernode logic by 
providing formal specifications for several different versions of observational determinism that have been discussed in the literature~\cite{Huismanetal2006,Terauchi2008}.
In Section~\ref{sec:mc}, we first solve the model-checking problem for hypernode logic over Kripke structures using stutter-reduced automata, 
and then the more general model-checking problem for hypernode automata over action-labeled Kripke structures.  
In contrast to previously proposed specification formalisms for asynchronous hyperproperties \cite{GutsfeldMO21,lics2021,cav2021}, 
which are undecidable in general and decidable only for specific fragments, our model-checking algorithms are doubly exponential in the number of variables in the Kripke structure.
Finally, in Section~\ref{sec:related_work}, we argue that our formalism is expressively incomparable to these other formalisms.
For this reason, hypernode automaton is a promising specification formalism for asynchronous hyperproperties and thus significantly contributes to the automatic verification of security properties.

\subsubsection*{Summary of contributions}

We present a new specification formalism for hyperproperties,
with stateless (hypernode logic) and stateful (hypernode automata) specifications, which (1)~can be used to specify paradigmatic asynchronous information-flow properties such as 
observational determinism (stateless) and information declassification (stateful); (2)~is expressively orthogonal to temporal hyperlogics;
and (3)~has a decidable verification problem, solved by a new automata-theoretic approach using stutter-free automata.
While hypernode logic has existential trace quantifiers and thus can specify nonsafety hyperproperties such as the independence of variables \cite{BartocciFHNC22}, we focus in this paper on safe hypernode automata to specify hyperproperties of infinite executions.
While we expect the addition of accepting hypernodes, such as hypernode B\"uchi automata, to be straightforward, this is left to future work.

  \section{Motivating Example}
\label{sec:example}

\begin{figure}
\caption{Hypernode automaton $\fhnaut$ specifying the mutually exclusive declassification of secure information, where \(\varphi_{\text{od}}(L) \overset{\text{def}}{=} \forall \traceVar \forall\traceVar' \bigwedge_{l\in L} (l(\traceVar) \Rel l(\traceVar') \vee  l(\traceVar') \Rel l(\traceVar))\).}
\label{fig:spec:declass:vars}
 \centering
\vspace{-0.2cm}
\scalebox{0.9}{
	 \begin{tikzpicture}[thick]
	 \node[state, accepting] (s) {\(\varphi_{\text{od}}(\{y,z\})\)};
	 \node[state, right = 1.1cm of s, accepting] (sz) {\(\varphi_{\text{od}}(\{z\})\)};
	 \node[state, left = 1.1cm of s, accepting] (sy) {\(\varphi_{\text{od}}(\{y\})\)};
  \draw [->, thick] (0,1) to[] node{} (s);
	 \draw
	 	(s) edge[bend right=10, above] node{\(\pDebugW{z}\)}  (sy)
	 	(s) edge[bend left=10, above] node{\(\pDebugW{y}\)}  (sz)
	 	(sy) edge[bend right=10, below] node{\(\pClear{}\)}  (s)
	 	(sz) edge[bend left=10, below] node{\(\pClear{}\)}  (s)
	 	(s) edge[loop below, looseness=5] node{\(\pClear{}\)} (s)
	 	(sz) edge[loop right,align=center,looseness=5] node{\(\pDebugW{y}\)\\ \(\pDebugW{z}\)} (sz)
	 	(sy) edge[loop left,align=center,looseness=5] node{\(\pDebugW{y}\)\\ \(\pDebugW{z}\)} (sy)
	 	;
	 \end{tikzpicture}}
\end{figure}

The seminal work by Zdancewic and Myers~\cite{ZdancewicMyers2003} proposes the notion of observational determinism to specify information-flow policies of concurrent programs.
Observational determinism is a noninterference property which requires publicly visible values to not depend on secret information. 
Noninterference specification is particularly challenging for multi-threaded programs because (i)~the executions of a multi-threaded program depend on the scheduling policy, and (ii)~a change from one program state to another can happen at different times in each execution of the program.  
According to \cite{ZdancewicMyers2003}, a program is observationally deterministic if, when starting from any two low-equivalent states, then any two traces of each low variable are equivalent up to \emph{stuttering} and \emph{prefixing}. 
Huisman, Worah, and Sunesen~\cite{Huismanetal2006}  and Terauchi~\cite{Terauchi2008} suggested several variations of this definition of observational determinism, again based on stuttering and prefixing equivalences.
This definition takes an asynchronous view of execution traces, so HyperLTL is not adequate to specify it \cite{BartocciFHNC22}.

In this section, we use hypernode logic to specify 
\emph{Zdancewic-Myers observational determinism}, and use it to define a \emph{mutually exclusive declassification policy} with a hypernode automaton.
The declassification policy involves not only the asynchronous requirement of observational determinism, but also dynamic changes between different observational determinism requirements. 
In particular, the policy requires that during \emph{normal} operation,  
two publicly visible variables \(y\) and \(z\) must not leak secret information. 
The policy also admits two \emph{debugging} modes, in which  
either \(y\) or \(z\) can leak information, but never both. 
The ``mode operation'' is inspired by examples on declassification policies by the programming languages community (c.f. \cite{aslan2007gradualrelease,sabelfeld2009declassification}).

The hypernode automaton specification $\fhnaut$ of the declassification policy is shown in Figure~\ref{fig:spec:declass:vars}.
A hypernode automaton is interpreted over a set of action-labeled traces,
which are sequences of valuations for program variables and actions.
The transitions between automaton nodes are labeled with actions marking when the program changes its mode of operation, say, from normal mode to one of the two debugging modes.
Note that the specification automaton is complete and deterministic: 
any sequence of actions identifies a unique run of the automaton. 
In the example, transitions from the normal mode to the two debugging modes are labeled with $\pDebugW{y}$ and $\pDebugW{z}$ actions, respectively; transitions from either debugging mode to the normal mode are labeled with $\pClear{}$.

The automaton nodes are labelled with formulas of hypernode logic.
In our example, all three formulas specify Zdancewic-Myers observational determinism but for different program variables. 
Observational determinism requires that for any two program executions (specified by  \(\forall \traceVar \forall \traceVar'\)), their projections to each publicly visible program variable (\(l\) in a set \(L\) of public variables) are equivalent up to stuttering and prefixing (specified by  \(l(\traceVar) \Rel l(\traceVar') \vee l(\traceVar') \Rel l(\traceVar))\).
The visible program variables change from mode to mode, so the different specification nodes are labeled with different instances of the same formula.
For example, the hypernode with formula \(\varphi_{\text{od}}(\{y\})\) requires observational determinism only for \(y\).


Algorithm~\ref{alg:p_var} defines a reactive program \(\mathcal{P}_{\pVar}\) (where \(\pVar\) can be either \(y\) or \(z\)) which in every iteration reads the input variable $x$ and the action $\pStatus$. 
If, for example, the action is $\pDebugW{y}$, then variable \(y\) is used for debugging and the program copies the content of $x$ to $y$. 
%

\begin{figure}
\begin{minipage}[t]{0.4\textwidth}
\begin{algorithm}[H]
\caption{Program \(\mathcal{P}_{\pVar}\)}
\label{alg:p_var}
\small
\SetAlgoLined
\SetKwRepeat{Do}{do}{while}
 \Do{\(\mathrm{true}\)}{
  \(\pVar:=0\);\\
  \(\pInput(x)\);\\ 
  \(\pInput(\pStatus)\);\\
  \If{\em (\(\pStatus = \pDebugW{\pVar}\))}
  {\(\pVar := x\); 
  }
  \(\pOutput(\pVar)\)\;
 }
\end{algorithm}
\end{minipage}
\begin{minipage}[t]{0.51\textwidth}
\centering
\captionof{table}{Executions of $\mathcal{P}_{y}~||~\mathcal{P}_{z}$.}
\label{tab:traces_examples}
\scalebox{1}{
\begin{tabular}{c|r c c c c}
\hline 
\(\trace_1\)\ \  & $x$:\ \  &  0 & \cellcolor{gray!15}0 & \cellcolor{gray!50}0 &\\
& $y$:\ \ & 0 & \cellcolor{gray!15}0 & \cellcolor{gray!50}0 &\\
& $z$:\ \  & 0 & \cellcolor{gray!15}0 & \cellcolor{gray!50}0 &\\
& \(\pStatus\):\ \ &\(\varepsilon\) & \cellcolor{gray!15} \(\pDebugW{y}\) & \cellcolor{gray!50} \(\pDebugW{z}\) & \\ 
\hline 
\(\trace_2\) \ \ &  $x$:\ \ & 1 & 1 & \cellcolor{gray!15}1 & \cellcolor{gray!50} 1\\
&  $y$:\ \  &0 & 0 & \cellcolor{gray!15}1 & \cellcolor{gray!50} 1\\
&  $z$:\ \ & 0 & 0 & \cellcolor{gray!15}0 & \cellcolor{gray!50} 1\\
&\(\pStatus\):\ \  &\(\varepsilon\) &\(\varepsilon\) & \cellcolor{gray!15} \(\pDebugW{y}\) & \cellcolor{gray!50} \(\pDebugW{z}\) \\ 
\hline 
\end{tabular}}
\vspace{-0.5cm}
\end{minipage}
\end{figure}

The parallel composition $\mathcal{P}_{y}~||~\mathcal{P}_{z}$ does not satisfy the specification~$\fhnaut$.
Consider, for instance, the set  
$T = \{\tau_1, \tau_2\}$ of traces shown in Table~\ref{tab:traces_examples}.
We make two important observations on $\tau_1$ and $\tau_2$: 
(1) these traces have different lengths, and 
(2) they exhibit the \emph{same} sequence of actions ($\pDebugW{y}$ followed by $\pDebugW{z}$) 
happening at \emph{different} times (hence the traces are asynchronous). 
We note that the above 
sequence of actions partitions each 
trace into a sequence of three trace segments, called \emph{slices}, which 
we denote using the white, the light gray, and the dark gray color
in Table~\ref{tab:traces_examples}. 
We map each slice to a unique node in the hypernode automaton.
We first observe that the two traces 
(read from left to right) have different lengths are 
asynchronous in the sense that they have the same sequence 
of actions happening at different times.
We also observe that every sequence of actions partitions the trace set 
into a sequence of trace set segments, called \emph{slices}. 
Each slice is mapped to a unique node of the hypernode automaton. 
The white slice of $T$ is mapped to the initial node of $\fhnaut$. The 
light-gray slice of $T$ is mapped to the node accessible from the initial state 
with action $\pDebugW{y}$ (i.e., the debugging mode for \(y\)), 
which is labeled by the formula \(\varphi_{\text{od}}(\{z\})\). 
The dark-gray slice of $T$ is mapped to the same node, 
because the action $\pDebugW{z}$ triggers the self-loop transition.

A sequence of actions defines a path in the hypernode automaton.
Then, a set $T$ of traces with a sequence of 
actions \(p\) satisfies the hypernode automaton $\fhnaut$ iff the
slicing of $T$ induced by \(p\)
satisfies the hypernode formulas in the path defined by \(p\) in \(\fhnaut\).
This is not the case in our example because the dark-gray 
slice of $T$ violates its 
associated hypernode formula 
$\varphi_{\text{od}}(\{z\})$. 
More specifically, the program variable $z$ evaluates to $0$ in the dark-gray segment 
of the trace $\tau_1$, 
while it evaluates to $1$ in the dark-gray segment of~$\tau_2$.
The specification violation occurs because the critical section 
(lines $5-7$ in Algorithm~\ref{alg:p_var}) is unprotected. 
It is possible that the action $\pDebugW{z}$
happens (line $4$ in $\mathcal{P}_z$) after $\pDebugW{y}$.
Thereafter the input value $x$ is 
copied to $z$ (line $6$ in $\mathcal{P}_z$), and  both $y$ and $z$ are made observable (line $8$ in 
both $\mathcal{P}_y$ and $\mathcal{P}_z$).
Hence they both leak 
information about $x$, which violates the specification. 
 This problem can be remedied by introducing an appropriate 
 lock mechanism to protect the critical section, to ensure 
 that between two observations $x$ can never be copied to 
 both $y$ and~$z$. 
Our model-checking algorithm allows us to fully automate the
reasoning used to find the bug in this example.

  \section{Hypernode Automata}
  \label{sec:automata}

  In this section, we define \emph{hypernode logic}
  and \emph{hypernode automata.}
  We represent program executions as finite or infinite sequences of finite trace segments with synchronization actions.
Let \(\Prop\) be a finite set of \emph{program variables} over a finite domain \(\domain\),
$A$ be a finite set of \emph{actions} and $A_\varepsilon=A\cup\{\varepsilon\}$.

Hypernode logic is interpreted over finite trace segments.
A \emph{trace segment} \(\trace\) is a finite sequence of valuations in \(\domain^{\Prop}\), 
where each \emph{valuation} \(\val\As\Prop\rightarrow\domain\) maps program variables to domain values.
We denote the set of trace segments over \(\Prop\) and \(\domain\) by \((\domain^{\Prop})^*\). 
A \emph{segment property} \(\setTraces\) is a set of trace segments, 
that is, \(\setTraces \mathop{\subseteq} (\domain^{\Prop})^*\).
A formula of hypernode logic specifies a property of a set of trace segments, 
which is called a \emph{segment hyperproperty}.
Formally, a segment hyperproperty \(\setsetTraces\) is a set of segment properties, 
that is, \(\setsetTraces\mathop{\subseteq}2^{(\domain^{\Prop})^*}\).

Hypernode automata are interpreted over finite and infinite action-labeled traces.
An \emph{action-labeled trace} $\atrace$ is a finite or infinite sequence of pairs, 
each consisting of a valuation and either an action label from $A$, or the empty label $\varepsilon$;
that is,
\(\atrace\in (\domain^\Prop\! \times\! \Actions_\varepsilon)^*\) or
\(\atrace\in (\domain^\Prop\! \times\! \Actions_\varepsilon)^{\fininf}\).
We require, for technical simplicity, that for infinite action-labeled traces,
infinitely many labels are non-empty.
An \emph{action-labeled trace property} is a set of action-labeled traces.
A hypernode automaton accepts action-labeled trace properties,
and thus specifies an \emph{action-labeled trace hyperproperty},
namely, the set of all action-labeled trace properties it accepts.
  
  \subsection{Hypernode Logic}
  \label{sub:hypernode_logic}
  

Hypernode logic, \(\NodeFOL\), is a first-order formalism to specify relations between the changes of the values of program variables over a set of trace segments.
The formulas of hypernode logic are defined by the  grammar:
\(
\varphi\ ::=\ \exists \traceVar\, \varphi \, |\, \neg \varphi\, |\, \varphi \wedge \varphi \, |\, x(\traceVar) \Rel x(\traceVar),
\)
where the first-order variable \(\traceVar\) ranges over the set \(\VarTrace\) of trace variables and the unary function symbol \(x\) ranges over the set \(\Prop\) of program variables.
%
Hypernode logic refers to time only through the 
binary \emph{stutter-reduced prefixing} predicate $\Rel$.
The intended meaning of the atomic formula \({x(\traceVar) \Rel y(\traceVar')}\) is that $x$ undergoes the same ordered value changes in the trace segment assigned to $\traceVar$ as the variable $y$ does in~$\traceVar'$, followed by possibly additional value changes of $y$ in $\traceVar'$.
In other words, hypernode logic adopts a fully asynchronous comparison 
of different trace segments in which all variables are considered separately.

We therefore interpret hypernode formulas over \emph{unzipped} trace segments, which encode the evolution of each program variable independently. 
An \emph{unzipped trace segment} \(\trace\As \Prop \rightarrow \domain^{*}\) is a function from  the program variables to finite strings of values.
%
%
%
%
The formulas of hypernode logic are interpreted over assignments of trace variables to unzipped trace segments.
Given a set \({\setTraces \mathop{\subseteq} \AllAsyncTn}\) of unzipped trace segments, an assignment \(\traceAssign_{\setTraces}\As \VarTrace \rightarrow \setTraces\)  maps each trace variable to an unzipped trace segment in \(\setTraces\). 
We denote by \(\traceAssign_{\setTraces}[\traceVar \mathop{\mapsto} \trace]\)
the update of \(\traceAssign_{\setTraces}\), where \(\traceVar\) is 
assigned to \(\trace\).
%
%
The satisfaction relation for a formula \(\varphi\) of hypernode logic over an assignment 
\(\traceAssign_{\setTraces}\) is defined inductively as follows:
\vspace{-0.1cm}
\[
\begin{split}
&\traceAssign_{\setTraces} \models \exists \traceVar \varphi 
\tIff 
\text{ there exists }
\trace \in \setTraces:\ \traceAssign_{\setTraces}[\traceVar \mapsto \trace] \models \varphi;\\
&\traceAssign_{\setTraces} \models  \psi_1 \wedge \psi_2 
\tIff 
\traceAssign_{\setTraces} \models  \psi_1 \tAnd  \traceAssign_{\setTraces} \models   \psi_2; \ \ 
\traceAssign_{\setTraces} \models  \neg  \psi_1
\tIff 
\traceAssign_{\setTraces} \not\models  \psi_1 ;\\
&\traceAssign_{\setTraces} \models  x(\traceVar)\Rel  y(\traceVar') 
\tIff  
\traceAssign_{\setTraces}(\traceVar)(x)\mathop{\in}
\lt_0^+\ldots \lt_n^+ \tAnd
\traceAssign_{\setTraces}(\traceVar')(y)\mathop{\in}
\lt_0^+\ldots \lt_n^+ \alphabet^* \\
&\hspace{3.4cm} \text{with } \lt_i \neq \lt_{i+1}, \text{ for } 0\leq i < n.
\end{split}
\]
A set \(\setTraces\) of unzipped trace segments is a \emph{model} of the formula \(\varphi\), denoted by \(\setTraces \models \varphi\), iff there exists an assignment \(\traceAssign_{\setTraces}\) such that \(\traceAssign_{\setTraces}\models \varphi\). 
We adopt the usual abbreviations \(\forall \traceVar \varphi \Def \neg \exists \traceVar \neg \varphi\) and \(\varphi \vee \varphi' \Def {\neg (\neg \varphi \wedge \neg\varphi')}\). 
From now on, unless stated otherwise, program and trace variables are indexed by a natural number, i.e., \(\Prop = \{x_1, \ldots, x_m\}\) and \(\VarTrace=\{\traceVar_1, \ldots, \traceVar_n\}\).

\begin{example} We illustrate how to use hypernode logic by specifying four different variants of non-interference between program variables.
Zdancewic and Myers introduced in \cite{ZdancewicMyers2003} the first notion of observational determinism to capture non-interference for concurrent programs. 
They require that in every program execution, every publicly visible variable in a set \(L\) must be stutter-equivalent up to prefixing (i.e., one of the executions can have more value changes):
\[\forall \traceVar \forall \traceVar'\
 \bigwedge\limits_{l \in L} (l(\traceVar) \Rel l(\traceVar') \vee l(\traceVar') \Rel l(\traceVar)).\]

Later, Huisman, Worah, and Sunesen \cite{Huismanetal2006} strengthened the previous definition by requiring every publicly visible variable to be stutter-equivalent in all executions:
\[
\forall \traceVar \forall \traceVar' \ \bigwedge\limits_{l \in L} 
l(\traceVar) \Rel l(\traceVar').\]
The universal trace quantification in this formula forces \(\Rel\) to be symmetric, and thus both traces must have the same ordered changes for each visible public variable.

Our third variant of observational determinism is from Terauchi\cite{Terauchi2008}, requiring the set of all publicly visible variables to be stutter-equivalent up to prefixing:
\[\forall \traceVar \forall \traceVar' \
(L(\traceVar) \Rel L(\traceVar') \vee L(\traceVar') \Rel L(\traceVar)).\]
Note that, we can encode the values of a finite set of variables within a single variable called \(L\) because we interpreted hypernode formulas over arbitrary finite domains.

Finally, we specify independence (also known as generalized non-interference \cite{ClarksonS10}) as defined in \cite{BartocciFHNC22}.
Two program variables $x$ and $y$ are \emph{independent} iff whenever
a sequence of value changes for \(x\) is possible in some trace \(\traceVar\),  and a sequence of value changes for \(y\) is possible in some trace \(\traceVar'\), then also their combination $(x(\traceVar),y(\traceVar'))$ is possible in some trace. 
The formula for independence specifies that for every two traces (\(\traceVar\) and \(\traceVar'\)) there exists a third trace (\(\traceVar_{\exists}\)) that witnesses the combination $(x(\traceVar),y(\traceVar'))$ up to stuttering and prefixing:
\[
\forall \traceVar \forall \traceVar' \exists \traceVar_{\exists}\
 (x(\traceVar)\Rel x(\traceVar_{\exists}) \wedge  
 y(\traceVar') \Rel y(\traceVar_{\exists})). \null\hfill\vartriangleleft\] 
\end{example}

\subsubsection*{\bf Stutter-reduced trace segments}
\label{sub:pre}

We are interested in unzipped trace segments that are stutter-free, i.e., that do not repeat 
the same variable value in consecutive time points.
For a program variable $x$ and unzipped trace segment \(\trace\) with 
\(\trace(x)\mathop{\in}\lt_0^{+} \ldots \lt_n^{+}\) where \(\lt_i \neq \lt_{i+1}\) for \(i< n\),
the \emph{stutter-reduction} is \(\Redux{\trace(x)}=\lt_0\ldots \lt_n\). 
We extend this notion naturally to the stutter-reduction of \(\trace\)
by \(\Redux{\trace}(x)= \Redux{\trace(x)}\) for all program variables \(x\in\Prop\),
and to the
stutter reduction of a set \(\setTraces\) of unzipped trace segments by
\(\Redux{\setTraces} = \{ \Redux{\trace} \,|\, \trace \in \setTraces\}\).
We prove that formulas of hypernode logic cannot distinguish between a set of unzipped trace segments \(\setTraces\) and its stutter-reduction \(\Redux{\setTraces}\).

\begin{proposition}\label{thm:minimal_set_traces}
Let \(\setTraces \mathop{\subseteq} \AllAsyncTn\) be a set of unzipped trace segments and \(\varphi\) a 
formula of hypernode logic. Then,
\(\setTraces \models \varphi\) iff \(\Redux{\setTraces} \models \varphi\).
\end{proposition}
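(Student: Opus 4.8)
The plan is to prove the biconditional by establishing that the satisfaction of every hypernode formula depends only on the stutter-reductions of the unzipped trace segments involved, and that the map $\trace \mapsto \Redux{\trace}$ sets up a correspondence between assignments into $\setTraces$ and assignments into $\Redux{\setTraces}$ that preserves satisfaction. The key observation is that the only atomic predicate, $x(\traceVar)\Rel y(\traceVar')$, is already defined purely in terms of the stutter-reduced value sequences: unfolding the definition, $\traceAssign_{\setTraces} \models x(\traceVar)\Rel y(\traceVar')$ holds iff $\Redux{\traceAssign_{\setTraces}(\traceVar)(x)}$ is a prefix of $\Redux{\traceAssign_{\setTraces}(\traceVar')(y)}$ (reading off the $\lt_0\ldots\lt_n$ witnessed by the membership conditions). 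Hence for any two assignments $\traceAssign_{\setTraces}$ and $\traceAssign'_{\setTraces'}$ with $\Redux{\traceAssign_{\setTraces}(\traceVar)} = \Redux{\traceAssign'_{\setTraces'}(\traceVar)}$ for all trace variables $\traceVar$, the two assignments agree on every atomic formula.

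First I would make the above statement precise as a lemma: if $\traceAssign_{\setTraces}$ and $\traceAssign'_{\setTraces'}$ are two assignments such that $\Redux{\traceAssign_{\setTraces}(\traceVar)} = \Redux{\traceAssign'_{\setTraces'}(\traceVar)}$ for every $\traceVar \in \VarTrace$, then $\traceAssign_{\setTraces} \models \psi$ iff $\traceAssign'_{\setTraces'} \models \psi$, for every subformula $\psi$ of $\varphi$ --- provided that the quantifier case goes through, which is where the sets $\setTraces$ and $\Redux{\setTraces}$ enter. I would prove this by induction on the structure of $\psi$. The atomic case is the unfolding above; the Boolean cases ($\neg$, $\wedge$) are immediate from the induction hypothesis. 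For the existential case $\exists\traceVar\,\psi'$: if $\traceAssign_{\setTraces} \models \exists\traceVar\,\psi'$, there is $\trace \in \setTraces$ with $\traceAssign_{\setTraces}[\traceVar\mapsto\trace] \models \psi'$; then $\Redux{\trace} \in \Redux{\setTraces}$ by definition of $\Redux{\setTraces}$, the updated assignments $\traceAssign_{\setTraces}[\traceVar\mapsto\trace]$ and $\traceAssign'_{\setTraces'}[\traceVar\mapsto\Redux{\trace}]$ still agree after stutter-reduction on all trace variables (using $\Redux{\Redux{\trace}} = \Redux{\trace}$), so the induction hypothesis gives $\traceAssign'_{\setTraces'}[\traceVar\mapsto\Redux{\trace}] \models \psi'$, hence $\traceAssign'_{\setTraces'} \models \exists\traceVar\,\psi'$. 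The converse direction is symmetric, using that every $\trace' \in \Redux{\setTraces}$ is $\Redux{\trace}$ for some $\trace \in \setTraces$, and picking that $\trace$ as the witness on the $\setTraces$ side.

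The proposition then follows by taking $\setTraces' = \Redux{\setTraces}$: given $\setTraces \models \varphi$, there is an assignment $\traceAssign_{\setTraces}$ with $\traceAssign_{\setTraces} \models \varphi$; define $\traceAssign'_{\Redux{\setTraces}}$ by $\traceAssign'(\traceVar) = \Redux{\traceAssign_{\setTraces}(\traceVar)}$, which indeed maps into $\Redux{\setTraces}$; the lemma gives $\traceAssign' \models \varphi$, so $\Redux{\setTraces} \models \varphi$. Conversely, if $\Redux{\setTraces} \models \varphi$ via some $\traceAssign'_{\Redux{\setTraces}}$, pick for each $\traceVar$ a preimage $\traceAssign_{\setTraces}(\traceVar) \in \setTraces$ with $\Redux{\traceAssign_{\setTraces}(\traceVar)} = \traceAssign'(\traceVar)$, and apply the lemma again. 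I expect the only mildly delicate point to be bookkeeping in the quantifier case --- making sure the "agreement after stutter-reduction" invariant is maintained under assignment updates and that witnesses can be transported in both directions via the surjection $\setTraces \twoheadrightarrow \Redux{\setTraces}$; the atomic and Boolean cases are routine.
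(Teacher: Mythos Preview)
Your argument is correct: the atomic predicate $x(\traceVar)\Rel y(\traceVar')$ indeed depends only on $\Redux{\traceAssign(\traceVar)(x)}$ and $\Redux{\traceAssign(\traceVar')(y)}$, and the structural induction with the ``agreement after stutter-reduction'' invariant goes through exactly as you describe, including the transport of witnesses via the surjection $\setTraces \twoheadrightarrow \Redux{\setTraces}$ in the quantifier case. The paper, however, states this proposition without proof, so there is no argument to compare yours against; what you have written is the natural proof and would serve as the omitted justification.
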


  \subsection{Hypernode Automata}
  \label{sub:hypernode_aut}

Hypernode automata are finite automata with states (called \emph{hypernodes}) labeled with formulas of hypernode logic and transitions labeled with actions.
A hypernode automaton reads a set \({\setATraces\! \subseteq\! (\domain^\Prop\! \times\! \Actions_\varepsilon)^{\fininf}}\) of action-labeled traces, and accepts some of these sets.

\begin{definition}
\label{def:fhn}
A deterministic, finite \emph{hypernode automaton} (HNA) over a set of actions \(\Actions\) and a set of program variables \(\Prop\) is a tuple \(\fhnaut \mathop{=} (\States, \initstd, \SLabel, \Transition)\),
where \(\States\) is a finite set of states  with \(\initstd\in \States\) being the initial state,
the state labeling function $\SLabel$ assigns a closed formula of hypernode logic over the program variables $X$ to each state in $Q$,
and the transition function \({\Transition: \States \times  \Actions  \rightarrow \States}\) is a total function assigning to each state and action a unique successor state. 
\end{definition}

We assume the totality and determinism of the transition function only for the simplicity of the technical presentation. 
A \emph{run} of the HNA \(\fhnaut\) is a finite or infinite sequence 
\(\runA \mathop{=} \std_0 a_0\, \std_1 a_{1}\, \std_2 a_2 \ldots\)
of alternating hypernodes and actions which starts in the initial hypernode \(\std_0 \mathop{=} \initstd\) and follows the transition function, i.e., 
\(\Transition(\std_i, a_i) \mathop{=} \std_{i+1}\) for all 
\(i\ge 0\). 
We refer to the corresponding sequence \(\actpat \mathop{=} a_0 a_1 a_2 \ldots\) of actions as the \emph{action sequence} of \(\runA\). 
Note that each action sequence defines a unique run of \(\fhnaut\).
%
We denote by \(\fhnaut[\actpat]\) the run that is defined by the action sequence \(\actpat\); an example is shown in Fig.~\ref{fig:ex:hypernode}.

The \emph{action sequence} of an action-labeled trace 
\(\atrace\mathop{=} (\val_0,a_0) (\val_1,a_1)(\val_2,a_2)\ldots\), 
where \(\val_i\mathop{\in} \domain^{\Prop}\) and \(a_i\mathop{\in}\Actions_\varepsilon\) for all $i\ge 0$, is the projection of the trace to its actions, with all empty labels \(\varepsilon\) removed;
that is, \(\atrace[\Actions]\mathop{=} a'_0 a'_1 \ldots\) with 
\(a_0 a_1\ldots \mathop{\in} a'_0 \varepsilon^* a'_1 \varepsilon^*\ldots\)
and $a'_i\in \Actions$ for all $i\ge 0$. 
Given a set \(\setATraces\) of action-labeled traces, 
the \emph{projection of $\setATraces$ with respect to a finite action sequence \(\actpat \mathop{\in} \Actions^*\)} is
\(
\setATraces[\actpat]\! =\! \{ \atrace \mathop{\in} \setATraces \,|\, \atrace[\Actions] \mathop{=} \actpat\, \actpat'  
\text{\ for some suffix\ } \actpat' \mathop{\in} \Actions^*\cup \Actions^\omega \}.
\)
%

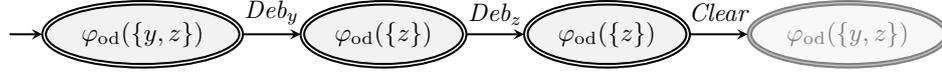
\begin{figure*}
\caption{Run defined by the action sequence \(\pDebugW{y}\pDebugW{z}\pClear{}\) in the hypernode automaton from Figure \ref{fig:spec:declass:vars}.}
\label{fig:ex:hypernode}
\centering
\scalebox{0.95}{
	 \begin{tikzpicture}[thick]
	 \node[state, initial, accepting] (s) {\(\varphi_{\text{od}}(\{y,z\})\)};
	 \node[state, right = 0.8cm of s, accepting] (sz1) {\(\varphi_{\text{od}}(\{z\})\)};
	 \node[state, right = 0.8cm of sz1, accepting] (sz2) {\(\varphi_{\text{od}}(\{z\})\)};
	 \node[state, right = 0.8cm of sz2, accepting,fill opacity=0.5, draw opacity=0.5] (sclear) {\(\varphi_{\text{od}}(\{y,z\})\)};
	 \draw
	 	(s) edge[align=center, above] node{\(\pDebugW{y}\)}  (sz1)
	 	(sz1) edge[align=center, above] node{\(\pDebugW{z}\)} (sz2)
	 	(sz2) edge[align=center, above] node{\(\pClear{}\)} (sclear)
	 	;
	 \end{tikzpicture}}
\end{figure*}

Each step in the run of a hypernode automaton defines a new \emph{slice} on a set of action-labeled traces.
Let $\actpat=a_0a_1\ldots a_n$ be a finite action sequence,
and $\atrace=(v_0,a'_0)(v_1,a'_1)\ldots$ be an action-labeled trace that has prefix $\actpat$, and let $\setATraces$ be a set of such traces.
We write $\atrace(\varnothing,a_0)$ for the initial trace segment of $\atrace$ which ends with the action label $a_0$.
Formally, $\atrace(\varnothing,a_0)=v_0\ldots v_k$ 
such that $a'_k=a_0$, and $a'_i=\varepsilon$ for all $0\le i<k$.
Furthermore, we write $\atrace(a_0a_1\ldots a_{i},a_{i+1})$ for the subsequent trace segments of $\atrace$ which end with the action label $a_{i+1}$ after having seen the action sequence $a_0a_1\ldots a_{i}$.
Inductively, if $\atrace(a_0a_1\ldots a_{i-1},a_i)=v_k\ldots v_l$,  $a'_m=a_{i+1}$ for \(m > l\), and $a'_j=\varepsilon$ for all $l\!<\!j\!<\!m$,
then $\atrace(a_0a_1\ldots a_i,a_{i+1})=v_{l+1}\ldots v_m$.
The slicing is extended to sets of action-labeled traces accordingly;
for example, 
\(\setATraces( \varnothing,a) \mathop{=} \{ \atrace( \varnothing, a)\,|\, \atrace \in \setATraces  \}\).

Since the formulas of hypernode logic are interpreted over unzipped trace segments, in a final step, we need to unzip each slice.
The \emph{{unzipping} of a trace segment \(\trace \mathop{=} \val_0 \ldots v_n\)} over the set of variables \(\Prop =\{x_1,\ldots, x_m\}\) is
\(\unzip(\trace)\mathop{=}\{\as{x_0}{\val_0(x_0)..\val_n(x_0)}, \ldots, \as{x_m}{\val_0(x_m)..\val_n(x_m)}\}.\)
We define the unzipping of trace segments sets naturally as \(\Async{\setTraces} \mathop{=} \{\async(\trace) \, |\, \trace\mathop{\in} \setTraces \}\).

\begin{definition}
\label{def:acceptance_hna}
Let \(\fhnaut \mathop{=} (\States, \initstd, \SLabel, \Transition)\) be an HNA, and \(\setATraces\) a set of action-labeled traces.
Let \(\actpat\) be a finite action sequence in \(\Actions^*\).
The set \(\setATraces\) is \emph{accepted} by \(\fhnaut\) with respect to the pattern \(p\), denoted \(\setATraces \models_{\actpat} \fhnaut\),
iff 
for the run \(\fhnaut[\actpat] \mathop{=} \std_0 a_0\, \std_1 a_1 \,\ldots\, \std_n a_n \), 
all slices of \(\setATraces\) induced by \(\actpat\) are models of the formulas that label the respective hypernodes;  
that is, 
\(\Async{\setATraces[\actpat](\varnothing,a_0)} \models \SLabel(\std_0)\), and 
\(\Async{\setATraces[\actpat](a_0 \ldots a_{i-1}, a_{i})} \models \SLabel(\std_{i})\) 
for all \(0< i \le n\). 
\end{definition}

A set \(\setATraces\) of action-labeled traces is \emph{accepted} by 
the HNA $\fhnaut$ iff for all finite action sequences \(\actpat \mathop{\in} \Actions^*\), 
if \(\setATraces[\actpat] \neq \emptyset\), then \(\setATraces \models_{\actpat} \fhnaut\). 
The \emph{language} accepted  by \(\fhnaut\) is the set of all sets of action-labeled traces 
that are accepted by \(\fhnaut\), denoted \(\Lang(\fhnaut)\).
Note that this definition assumes that all finite and infinite runs of HNA are feasible;
such automata are often called \emph{safety automata}.
Refinements are possible where finite runs must end in accepting states or, for example, 
infinite runs must visit accepting states infinitely often.

  \section{Model Checking}
  \label{sec:mc}
We solve the model-checking problem for hypernode automata over Kripke structures whose transitions are labeled with actions.
In other words, we give an algorithm that checks if an HNA accepts the set of all action-labeled traces generated by an action-labeled Kripke structure.
This result is surprising, as the verification problem is undecidable for many formalisms that allow the specification of asynchronous hyperproperties.
Also, our algorithm is substantially different from previously published model-checking algorithms for synchronous hyperproperties.

\subsection{Action-labeled Kripke Structures}
\label{sub:action_kripke}

A \emph{Kripke structure} is a tuple \(\KSFull\) consisting of a finite set \(\KStates\) of worlds, a set \(\Prop\) of variables over a finite domain \(\domain\),  a transition relation \({\KTransition\! \subseteq\! \KStates\! \times\! \KStates}\), and a value assignment
\({\KLabel\!:\!\KStates\times \Prop\! \rightarrow\! \domain}\) that assigns a value from the finite domain \(\domain\) to each variable in each world~\footnote{This definition of Kripke structures is equivalent to the usual definition over propositional variables because we can easily encode a set of variables over finite domains into a set of propositional variables.}.
Given a Kripke structure with a transition relation \(\KTransition\), and given a set $A$ of actions, an \emph{action labeling} for \(\Kpke\) over \(\Actions\) is a function \(\ActLab\As \KTransition \rightarrow 2^{\Actions_\varepsilon}\) that assigns a set of action labels (including possibly the empty label $\varepsilon$) to each transition.
A \emph{pointed Kripke structure} is a Kripke structure with one of its worlds being an initial world, denoted \((\Kpke, \KState_0)\) with \(\KState_0 \mathop{\in} \KStates\). 

A \emph{path} in the Kripke structure \(\Kpke\) with action labeling \(\ActLab\) is a finite or infinite sequence \(\KState_0 a_0\, \KState_1 a_1\, \KState_2 a_2 \ldots\) of alternating worlds and actions 
which respects both the transition relation, \((\KState_i, \KState_{i+1}) \mathop{\in} \KTransition\), and the action labeling, \(a_i \mathop{\in} \ActLab(\KState_i, \KState_{i+1})\), for all \(i \ge 0\). 
We write \(\KPaths(\Kpke,\ActLab)\) for the set of all such paths. 
The path \(\Kpath \mathop{=} \KState_0 a_0\, \KState_1 a_1 \ldots\) defines the
action-labeled trace \(\sync(\Kpath) \mathop{=} \KLabel(\KState_0) a_0 \,\KLabel(\KState_1) a_1\ldots\). 
We write \(\Sync{\Kpke,\ActLab}\) for the set of action-labeled traces defined by 
paths in \(\KPaths(\Kpke,\ActLab)\).
By \(\KPaths(\Kpke,\ActLab, \KState_0)\) we denote the set of all paths in \(\KPaths(\Kpke,\ActLab)\) 
that start at the world \(\KState_0\).
As before, \(\Sync{\Kpke,\ActLab,\KState_0}\) refers to the set of all action-labeled traces 
that are defined by paths in \(\KPaths(\Kpke,\ActLab,\KState_0)\).

We are now ready to formally define the central verification question solved in this paper, namely, the model-checking problem for specifications given as hypernode automata over models given as pointed Kripke structures with action labelling. 
The conversion of concurrent programs, such as those from Section~II, 
into a pointed Kripke structure with action labeling is straightforward; its formalization is omitted here for space reasons.

\begin{tcolorbox}[colback = gray!3, colbacktitle=gray!90, title={\textbf{Model-checking problem for hypernode automata}}]
Let \((\Kpke, \KState_0)\) be a pointed Kripke structure for set of variables \(\Prop\) over a finite domain \(\domain\), and 
let \(\ActLab\) be an action labeling for \(\Kpke\) over a set~\(\Actions\) of actions. 
Let \(\fhnaut\) be a hypernode automaton over the same set \(\Prop\) of variables, domain \(\domain\) and 
set \(\Actions\) of actions.
Is the set of action-labeled traces generated by $(\Kpke,\ActLab,w_0)$
accepted by $\fhnaut$;
that is, \(\Sync{\Kpke,\ActLab, \KState_0}\mathop{\in} \Lang(\fhnaut)\)?
\end{tcolorbox}

  \subsection{Model Checking Hypernodes}
  \label{sub:mc_hypernodes}

We begin by formulating and solving the model-checking problem for hypernode logic (rather than automata) over Kripke structures.
This algorithm constitutes the key subroutine for model-checking hypernode automata.
To interpret formulas of hypernode logic over a Kripke structure, 
we equip the Kripke structure with two set of worlds: the \emph{entry worlds}, where trace segments begin, and the \emph{exit worlds}, where trace segments end.
Formally, an \emph{open Kripke structure}
consists of a Kripke structure $\KSFull$,
and a pair $\KpkeOp=(\KSIn,\KSOut)$ consisting of 
a set $\KSIn\subseteq W$ of entry worlds, 
and a set $\KSOut\subseteq W$ of exit worlds. 

A \emph{path} of the open Kripke structure $(\Kpke,\KpkeOp)$ is path \(\KState_0 \ldots \KState_n\) in \(\Kpke\) that starts in a entry world, \(\KState_0 \mathop{\in} \KSIn\) and ends in an exit world, \(\KState_n \mathop{\in} \KSOut\).
As for Kripke structures,
we write \(\KPaths(K,\KpkeOp)\) for the set of paths of the open Kripke structure \((K,\KpkeOp)\).
The set of unzipped trace segments generated by the open Kripke structure \((\Kpke,\KpkeOp)\) 
is
\(\Async{\Kpke}(x)\mathop{=}\{ \KLabel(\KState_0,x) \ldots \KLabel(\KState_n,x) \,|\,\) \(\KState_0\!\ldots\!\KState_n \mathop{\in} \KPaths(\Kpke,\KpkeOp)\}\) for all variables \(x \in \Prop\).

\begin{tcolorbox}[colback = gray!3, colbacktitle=gray!90, title={\bf Model-checking problem for hypernode logic}]
Let \((\Kpke,\KpkeOp)\) be an open Kripke structure, and \(\varphi\) a formula of hypernode logic 
over the same set of variables \(\Prop\) and finite domain \(\domain\). 
Is the set of unzipped trace segments generated by $(\Kpke,\KpkeOp)$
a model for $\varphi$; 
that is, \(\Async{\Kpke,\KpkeOp}  \models \varphi\)?
\end{tcolorbox}

  \subsubsection*{\bf Stutter-free automata}

From Proposition~\ref{thm:minimal_set_traces}, it follows that it suffices to consider the stutter reduction of \(\Async{\Kpke,\KpkeOp}\) to solve the model-checking problem for hypernode logic.
We introduce \emph{stutter-free automata} as a formalism for specifying sets of stutter-free unzipped trace segments.
We use stutter-free automata boolean operators to define a filtration that, when applied to a hypernode formula $\varphi$ and a stutter-free automaton over the variables in \(\varphi\), returns an automaton with non-empty language iff the language of the input automaton is a model of $\varphi$.
Finally, we construct, from a given open Kripke structure $(\Kpke,\KpkeOp)$, a stutter-free automaton that accepts an unzipped trace segment if the segment is the stutter reduction of a trace segment generated by $(\Kpke, \KpkeOp)$.
All three parts put together define the model-checking algorithm for hypernode logic. 

Stutter-free automata are a restricted form of nondeterministic finite automata (NFA) 
that read unzipped trace segments and guarantees that, for each state, 
there are no repeated variable assignments on their incoming and outgoing transitions. 
%
We denote by \(\alphNSFA\) all assignments of variables in \(\Prop\) to values in \(\domain\) or the termination 
symbol \(\pad\). 
Formally, for \(\Prop \mathop{=} \{x_0, \ldots, x_m\}\), let 
\(\alphNSFA \mathop{=} \{\as{x_0}{\lt_0}, \ldots, \as{x_m}{\lt_m}\,|\, 
\forall 0 \leq i\leq m\ \lt_i \mathop{\in} \domain\cup\{\pad\} \} \setminus \{\as{x_0}{\pad}, \ldots, \as{x_m}{\pad}\}\). 

\begin{definition}
\label{def:stut_free_aut}
Let \(\Prop\) be a finite set of variables over \(\domain\). 
A nondeterministic \emph{stutter-free automaton} (NSFA) over the alphabet \(\alphNSFA\) is a tuple 
\(\stutFreeA \mathop{=} (\States, \SInitial, \SFinal,\Transition)\) with a finite set \(\States\) of states,
a set \(\SInitial\subseteq\States\) of initial states,  
a set \(\SFinal\subseteq \States\) of final states, 
and a transition relation \({\Transition: \States \times \alphNSFA \rightarrow 2^{\States}}\) 
that satisfies the following for all states \(\std \in \States\) and 
variables \(x \in \Prop\):
\begin{itemize}
\item 
\emph{stutter-freedom} requiring
\(\Pre{\Transition}{\std,x} \cap \Pos{\Transition}{\std,x} \subseteq \{\pad\}\), and
\item 
\emph{termination} requiring that if \(\pad\in \Pre{\Transition}{\std,x}\), then \(\Pos{\Transition}{\std,x} = \{\pad\}\),
\end{itemize}
where \(\Pre{\Transition}{\std,x}\) is the set of all \(x\)-valuations incoming to state \(\std\) and 
\(\Pos{\Transition}{\std,x}\) is the set of all \(x\)-valuations outgoing from state \(\std\); formally,
\(
\Pre{\Transition}{\std,x} \mathop{=} \{ v(x) \,|\, \std \in \Transition(\std',v ) \textit{\ for some\ } \std'\in\States\}
\)
and \(
\Pos{\Transition}{\std,x}\mathop{=} \{v(x) \, |\, \Transition(\std,v)\! \neq\! \emptyset\}
\). 
\end{definition}

A \emph{run} of the stutter-free automaton \(\stutFreeA\) 
is a finite sequence \(\std_0 v_0 \std_1 v_1 \ldots v_{n-1} \std_n\) of alternating states and 
variable assignments which starts with an initial state, \(q_0 \in \SInitial\), and satisfies the 
transition function, 
\(\std_{i+1} \in \Transition(\std_i, v_{i})\) for all \( i < n\). 
The run is \emph{accepting} if it ends in a final state, \(q_n\in F\). 
An unzipped trace segment \(\trace\) over a set of variables \(\Prop\)  with domain \(\domain\) is \emph{accepted} by the 
stutter-free automaton \(\stutFreeA\) iff there exists an accepting run \(\std_0 v_0 \ldots v_{n-1} \std_n\) 
such that \(\trace(x) \mathop{=} v_0(x) \ldots v_{n-1}(x)\), for all \(x\in \Prop\).
The \emph{language} of \(\stutFreeA\), denoted \(\Lang(\stutFreeA)\), is the set of all accepted unzipped trace segments accepted by \(\stutFreeA\).
We sometimes refer to the language of a stutter-free automaton without the termination symbol: 
\(\Lang(\stutFreeA)|_{\pad} \mathop{=} \{\trace|_{\pad} : X\rightarrow \domain^* \ |\ \trace \in \Lang(\stutFreeA) \}\), 
where \(\trace|_{\pad}\) removes all occurrences of \(\pad\) in a trace segment \(\trace\). 
Note that since $\stutFreeA$ is stutter-free,
$\Lang(\stutFreeA)|_{\pad}=\Redux{\Lang(\stutFreeA)|_{\pad}}$,
where $\Redux{\cdot}$ is the stutter reduction of unzipped trace segments.

In Figure \ref{fig:ex:loop2}, we show an example of a stutter-free automaton that accepts 
all unzipped trace segments for the boolean variables \(\{x,y\}\), 
where all \(x\)-trace segments are of odd size, 
while all \(y\)-trace segments are of even size, 
and the first value for both \(x\) and \(y\) is \(0\). 
Then, the automaton \(\stutFreeA\) from Figure \ref{fig:ex:loop2} defines the language
\(
\Lang(\stutFreeA)\! =\! \{ \as{x}{\trace_x}, \as{y}{\trace_y} \, |\,  t_x \in (01)^*0 \tAnd t_y \in (01)^*01 \}.
\)
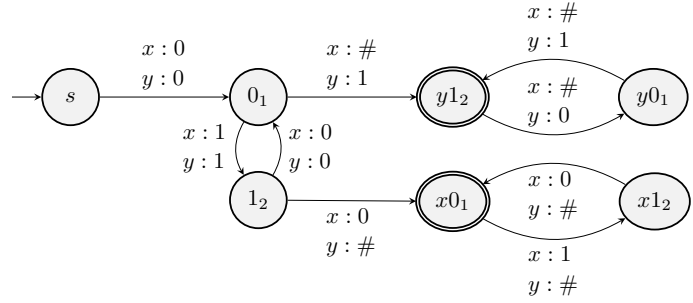
\begin{figure}[t]
 \caption{Stutter-free automaton \(\stutFreeA\), where \(x\)-trace segments have odd size, 
 and \(y\)-trace segments have even size, and the first values of \(x\) and \(y\) are \(0\).}
 \label{fig:ex:loop2}
  \centering
  \scalebox{0.85}{
	 \begin{tikzpicture}[]
	\node[state, initial] (s) {\(s\)};
	\node[state, right = 2cm of s] (s0) {\(0_1\)};
	\node[state, below = 0.7 cm of s0] (s12) {\(1_2\)};
 
	\node[state, accepting, right = 2cm of s0] (sy12) {\(y1_2\)};
	\node[state, right = 2cm of sy12] (sy01) {\(y0_1\)};
	
	\node[state, accepting, below = 0.7cm of sy12] (sx01) {\(x0_1\)};
	\node[state, right = 2cm of sx01] (sx12) {\(x1_2\)};

	 \draw
	 	(s) edge[above,text width=0.7cm] node{\(x:0\)\\\(y:0\)}  (s0)
	 	
	 	(s0) edge[bend right,text width=0.7cm, left] node{\(x:1\)\\\(y:1\)}  (s12)
	 	(s12) edge[bend right,text width=0.7cm, right] node{\(x:0\)\\\(y:0\)}  (s0)
	 	
	 	(s0) edge[text width=0.8cm, above] node{\(x:\pad\)\\\(y:1\)}  (sy12)
	 	
	 	(sy01) edge[bend right,text width=0.8cm, above] node{\(x:\pad\)\\\(y:1\)}  (sy12)
	 	(sy12) edge[bend right,text width=0.8cm, above] node{\(x:\pad\)\\\(y:0\)}  (sy01)

	 	(s12) edge[text width=0.8cm, below] node{\(x:0\)\\\(y:\pad\)}  (sx01)
	 	
	 	(sx01) edge[bend right,text width=0.8cm, below] node{\(x:1\)\\\(y:\pad\)}  (sx12)
	 	(sx12) edge[bend right,text width=0.8cm, below] node{\(x:0\)\\\(y:\pad\)}  (sx01)
	 	;
	 \end{tikzpicture}}
\end{figure}

The union, intersection, and determinization for NSFA are as usual for NFA.
Formally, for stutter-free automata over the alphabet \(\alphNSFA\), \(\stutFreeA_1 = (\States_1, \SInitial_1, \SFinal_1, \Transition_1)\) and
\(\stutFreeA_2 = (\States_2, \SInitial_2,\) \(\SFinal_2,\Transition_2)\), their \emph{union} is \(\stutFreeA_1 \cup \stutFreeA_2 = (\States_1 \dot{\cup} \States_2, \SInitial_1  \dot{\cup} \SInitial_2, \SFinal_1 \dot{\cup} \SFinal_2, \Transition_{\cup})\)
over the same alphabet \(\alphNSFA\)
where \(\Transition_{\cup}(\std) = \Transition_i(\std)\) when \(\std \mathop{\in} \States_i\) with \(i \mathop{\in} \{1,2\}\);
and their \emph{intersection} is 
\(\stutFreeA_1 \cap \stutFreeA_2 = (\States_1 \times \States_2, \SInitial_{\cap}, \SFinal_{\cap}, \Transition_{\cap})\) over the same alphabet \(\alphNSFA\)  where \(\SInitial_{\cap} = \{(\std_1, \std_2)\ |\ \std_1 \in \SInitial_1 \tAnd \std_2 \in \SInitial_2 \}\),
\(\SFinal_{\cap} = \{(\std_1, \std_2)\ |\ \std_1 \in \SFinal_1 \tAnd \std_2 \in \SFinal_2 \}\) and 
\(\Transition_{\cap}((\std_1, \std_2), l) = (\std_1', \std_2')\) iff
\(\Transition_1(\std_1, l) = \std_1'\) and \(\Transition_2(\std_2, l) = \std_2'\).
While the \emph{determinization} of \(\stutFreeA_1\) is  defined as \(\deter{\stutFreeA _1}= (2^{\States_1}, \SInitial_1, \SFinal_d, \Transition_d)\) where 
\(F_d\!=\!\{S\! \in\! 2^{\States_1} \, |\, S \cap F_1\! \neq\! \emptyset \}\) and  
\({\Transition_d(S, v)\!=\!\bigcup\limits_{\std \in S} \Transition_1(\std, v)}\) with \({v\in \alphNSFA}\).
We prove below that stuter-free automata is closed under the operators defined above.

\begin{proposition}
\label{thm:suttfree:bi_boolean_closure}
Let \(\stutFreeA_1\) and \(\stutFreeA_2\) 
be two stutter-free automata over the propositional variables \(X\).
Then, both \(\stutFreeA_1 \cup \stutFreeA_2\) and 
\(\stutFreeA_1 \cap \stutFreeA_2\) are stutter-free automata over $X$ with 
\(\Lang(\stutFreeA_1 \cup \stutFreeA_2) \mathop{=} \Lang(\stutFreeA_1) \cup \Lang(\stutFreeA_2)\) and \(\Lang(\stutFreeA_1 \cap \stutFreeA_2) \mathop{=} \Lang(\stutFreeA_1) \cap \Lang(\stutFreeA_2)\).
Moreover, 
for a nondeterministic stutter-free automaton \(\stutFreeA\),
the determinization \(\deter{\stutFreeA}\) is a deterministic 
stutter-free automaton over \(X\) with the same language,
that is, 
\(\Lang(\deter{\stutFreeA})\mathop{=} \Lang(\stutFreeA)\).
\end{proposition}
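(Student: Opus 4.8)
The plan is to establish the three claims (union, intersection, determinization) separately, and for each one to verify two things: that the classical NFA construction yields the expected language, and — the new part — that the resulting automaton still satisfies the stutter-freedom and termination conditions of Definition~\ref{def:stut_free_aut}. The language equalities are exactly the textbook arguments for NFA union, product intersection, and subset-construction determinization (a run in the combined automaton projects to runs in the components and conversely), so I would state these briefly and spend the real effort on showing the structural constraints are preserved.

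\textbf{Union.} Since \(\stutFreeA_1 \cup \stutFreeA_2\) is formed by a disjoint union of state sets, no state of the combined automaton gains new incoming or outgoing transitions: for \(\std \in \States_i\) we have \(\Pre{\Transition_{\cup}}{\std,x} = \Pre{\Transition_i}{\std,x}\) and \(\Pos{\Transition_{\cup}}{\std,x} = \Pos{\Transition_i}{\std,x}\). Hence stutter-freedom and termination for \(\std\) are inherited verbatim from \(\stutFreeA_i\). The language equality \(\Lang(\stutFreeA_1 \cup \stutFreeA_2) = \Lang(\stutFreeA_1) \cup \Lang(\stutFreeA_2)\) is immediate since every run lives entirely inside one component.

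\textbf{Intersection.} For a product state \((\std_1,\std_2)\), a transition into it on letter \(v\) requires a transition \(\std_1' \xrightarrow{v} \std_1\) in \(\stutFreeA_1\); therefore \(\Pre{\Transition_{\cap}}{(\std_1,\std_2),x} \subseteq \Pre{\Transition_1}{\std_1,x}\), and similarly \(\Pos{\Transition_{\cap}}{(\std_1,\std_2),x} \subseteq \Pos{\Transition_1}{\std_1,x}\). Thus \(\Pre{\Transition_{\cap}}{(\std_1,\std_2),x} \cap \Pos{\Transition_{\cap}}{(\std_1,\std_2),x} \subseteq \Pre{\Transition_1}{\std_1,x} \cap \Pos{\Transition_1}{\std_1,x} \subseteq \{\pad\}\), giving stutter-freedom; and if \(\pad \in \Pre{\Transition_{\cap}}{(\std_1,\std_2),x}\) then \(\pad \in \Pre{\Transition_1}{\std_1,x}\), so \(\Pos{\Transition_1}{\std_1,x} = \{\pad\}\), forcing \(\Pos{\Transition_{\cap}}{(\std_1,\std_2),x} \subseteq \{\pad\}\) (it is nonempty only if some outgoing transition exists, and then its \(x\)-value must be \(\pad\)), giving termination. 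The language equality is the standard synchronous-product argument: a run of \(\stutFreeA_1 \cap \stutFreeA_2\) on \(\trace\) projects to runs of \(\stutFreeA_1\) and \(\stutFreeA_2\) on \(\trace\), and conversely two accepting runs on the same \(\trace\) (which, crucially, must have the same length since every letter of \(\alphNSFA\) assigns a value to \emph{every} variable, so the two component runs read letter-for-letter in lockstep) combine into an accepting product run.

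\textbf{Determinization.} Here the subtle point is that a macro-state \(S \in 2^{\States_1}\) may collect several original states, and an incoming/outgoing transition of \(S\) on letter \(v\) corresponds to such a transition of \emph{some} \(\std \in S\) (for outgoing) or of \emph{some} \(\std\) with \(\std \in S\) reachable via \(v\) (for incoming). So I would argue: \(\Pos{\Transition_d}{S,x} = \bigcup_{\std \in S} \Pos{\Transition_1}{\std,x}\), and \(\Pre{\Transition_d}{S,x} \subseteq \bigcup_{\std \in S}\Pre{\Transition_1}{\std,x}\) — but this alone is \emph{not enough}, because the union of per-state constraints need not satisfy stutter-freedom. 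The resolution is the observation already recorded in the text that \(\Lang(\stutFreeA)\) is stutter-free as a set of trace segments: if in some accepting run a letter \(v\) leads into \(S\) and \(v'\) leads out of \(S\) with \(v(x) = v'(x) \neq \pad\), then concatenating these gives an accepted trace segment with a repeated consecutive \(x\)-value, contradicting stutter-freedom of \(\Lang(\stutFreeA_1)\) — hence such a macro-state, if it arises, is not on any accepting path and can be pruned. I expect \textbf{this determinization case to be the main obstacle}: unlike union and intersection, the structural invariant is not preserved state-by-state by the naive construction, so I must either (i) restrict \(\deter{\stutFreeA_1}\) to the sub-automaton of states lying on some accepting run and show the invariant holds there, or (ii) prove directly that any reachable-and-co-reachable macro-state \(S\) has \(\Pre{\Transition_d}{S,x} \cap \Pos{\Transition_d}{S,x} \subseteq \{\pad\}\) by a pumping/concatenation argument against the stutter-freedom of the original language. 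Option (i) is cleanest: define \(\deter{\stutFreeA_1}\) to retain only trim states, note trimming changes neither language nor determinism, and then the stutter-freedom and termination conditions follow because any violating pair of adjacent transitions would witness a non-stutter-free segment in \(\Lang(\stutFreeA_1)\), which Definition~\ref{def:stut_free_aut} forbids. Finally, \(\Lang(\deter{\stutFreeA_1}) = \Lang(\stutFreeA_1)\) is the usual subset-construction correctness proof.
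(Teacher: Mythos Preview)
Your treatment of union and intersection is correct and actually more explicit than the paper's (which simply asserts that these constructions ``do not affect the stutter-free related restrictions'').

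For determinization, however, you have missed the key observation that makes the argument work \emph{directly}, state by state, and renders your trimming workaround unnecessary. You bound the incoming labels by
\(\Pre{\Transition_d}{S,x} \subseteq \bigcup_{\std \in S}\Pre{\Transition_1}{\std,x}\),
which is too weak. The correct and much stronger bound is
\[
\Pre{\Transition_d}{S,x} \;\subseteq\; \bigcap_{\std \in S}\Pre{\Transition_1}{\std,x}.
\]
The reason is the defining equation of the subset construction: \(S \in \Transition_d(S',v)\) means \(S = \bigcup_{\std' \in S'} \Transition_1(\std',v)\), so \emph{every} \(\std \in S\) is the target of some \(v\)-transition in \(\stutFreeA_1\), and hence \(v(x) \in \Pre{\Transition_1}{\std,x}\) for \emph{all} \(\std \in S\). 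With this bound, stutter-freedom is immediate: if \(l \in \Pre{\Transition_d}{S,x} \cap \Pos{\Transition_d}{S,x}\), pick any \(\std \in S\) witnessing \(l \in \Pos{\Transition_1}{\std,x}\); by the intersection bound also \(l \in \Pre{\Transition_1}{\std,x}\), so \(l \in \{\pad\}\) by stutter-freedom of \(\stutFreeA_1\). Termination follows the same way: if \(\pad \in \Pre{\Transition_d}{S,x}\), then \(\pad \in \Pre{\Transition_1}{\std,x}\) for every \(\std \in S\), so \(\Pos{\Transition_1}{\std,x} = \{\pad\}\) for every \(\std \in S\), whence \(\Pos{\Transition_d}{S,x} \subseteq \{\pad\}\).

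This is exactly the paper's argument, and it shows that your diagnosis --- ``the structural invariant is not preserved state-by-state by the naive construction'' --- is mistaken. Your trimming route could be made to work, but it proves a statement about a modified automaton rather than about \(\deter{\stutFreeA}\) as defined (over the full power set \(2^{\States_1}\)), and it is simply not needed once you use the intersection bound on incoming labels.
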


\begin{proof}
The first part follows from a direct translation from stutter-free automaton to NFA.
The union and disjunction does not affect the stutter-free related restrictions.

Consider an arbitrary NSFA, \(\stutFreeA_1\).
We start by proving that \(\deter{\stutFreeA_1}\) satisfies the stutter-free condition, i.e.\ 
\(\Pre{\Transition_d}{S,x} \cap \Pos{\Transition_d}{S,x} = \emptyset\), for all its states S and variables \(x \in \Prop\).
By definition:
\[
\begin{split}
\Pre{\Transition_d}{S,x} &= 
\{ v(x) \ |\ S \mathop{\in} \Transition_d(S',v ) \textit{\ for some\ } S'\in\States_d\}
\Leftrightarrow\\
\Pre{\Transition_d}{S,x} &= 
\{ v(x)  \ | \ S = \bigcup\limits_{\std_1 \mathop{\in} S_1} \Transition(\std_1, v ) \textit{\ for some\ } S'\in\States_d\}  \Leftrightarrow\\
\Pre{\Transition_d}{S,x} &= 
\{ v(x)\ |\ \forall \std \mathop{\in} S\ \exists \std_1 \mathop{\in} S' \tSt \std \in \Transition(\std_1, v) \textit{\ for some\ } S'\mathop{\in}\States_d\}.
\end{split}
\]
Thus, \((\star)\) for all \(\std \in S\), 
\(\Pre{\Transition}{\std,x} = \Pre{\Transition}{S,x}\). 
From \(\stutFreeA_1\) being a NSFA we know that, for all \(\std \in S\), 
\(\Pre{\Transition}{\std,x} \cap \Pos{\Transition}{\std,x} = \emptyset\).
Assume towards a contradiction that there exists a value that is in both the incoming and outgoing transitions of \(S\) for a variable \(x\), i.e.\ 
\(l \in \Pre{\Transition_d}{S,x} \cap \Pos{\Transition_d}{S,x}\).
Then, by definition of \(\Pos{\Transition_d}{S,x}\),
there exists a state in \(S\), \(\std \in S\), s.t.\
\(\Transition(\std, x:l) \neq \emptyset\). This contradicts our conclusion \((\star)\).
We prove now that \(\deter{\stutFreeA_1}\) satisfies the \(\pad\)-ending requirement. By \((\star)\), it follows that if
\(\pad \in \Pre{S}{x}\) then \(\pad \in \Pre{\std}{x}\) for all \(\std \in S\). So, by \(\stutFreeA\) being a NSFA it follows that for all \(\std \in S\) we have \(\Pos{\std}{x} = \{\pad\}\). Therefore \(\Pos{S}{x} = \{\pad\}\).

Finally, \(\Lang(\deter{\stutFreeA_1})= \Lang(\stutFreeA)\).
follows directly from the same result for NFA.
\end{proof}

The complementation of a stutter-free automaton follows the same approach as for NFA: we first determinize the automaton, then complete it, and lastly swap the final and nonfinal states. 
The only operation that requires special attention for NSFA is \emph{completion}, as we need to be careful to statisfy the condition of stutter-freedom.

A stutter-free automaton \({\stutFreeA \mathop{=} (\States, \SInitial, \SFinal, \Transition)}\) over \(\alphNSFA\) is \emph{complete} iff
\(\Pre{\Transition}{\std} \cup \Pos{\Transition}{\std}\) is a maximal subset of \(\alphNSFA\) according to the conditions in Definition \ref{def:stut_free_aut},
where \(\Pre{\Transition}{\std} = \{\val(x)\,|\ {x \mathop{\in} \Prop} \tAnd v(x) \mathop{\in} \Pre{\Transition}{\std, x}\}\) and \(\Pos{\Transition}{\std} = \{\val(x)\,|\, x \mathop{\in} \Prop \tAnd  v(x) \mathop{\in} \Pos{\Transition}{\std, x}\}\).
The \emph{universal} stutter-free automaton $\uniNSFA_{\alphNSFA}$ over $\alphNSFA$, defined next, is a deterministic and complete automaton with language $\Lang(\uniNSFA_{\alphNSFA})|_{\pad}=\Redux{(\domain^*)^{\Prop}}$, i.e., it
contains all stutter-free unzipped traces over $\alphNSFA$.
We use the universal stutter-free automaton as a ``sink'' area when completing other automata.

\begin{definition}
\label{def:stutter_free_universal}
Let $\Prop=\{x_0,\ldots,x_m\}$ be a set of variables over the finite domain \(\domain\).
The \emph{universal stutter-free automaton} over \(\alphNSFA\) is 
\(\uniNSFA_{\alphNSFA} \mathop{=} (\States_\uniNSFA, \States_\uniNSFA, \States_\uniNSFA, \Transition_{\uniNSFA})\), where
\(\States_\uniNSFA \mathop{=} \alphNSFA\) and
\begin{equation*}
\begin{split}
\Transition_{\uniNSFA}(\{&\as{x_i}{\lt_i}\}_{i\in [0,m]},\\  
\{&\as{x_i}{\lt'_i}\}_{i \in [0,m]})\mathop{=}
\end{split}
\begin{cases}
\{\as{x_i}{\lt'_i}\}_{i\in[0,m]} & \text{if } \forall 0 \leq i \leq m,  \tIf \lt_i = \pad \tthen \lt'_i \mathop{=} \pad \text{ else } \lt_i \neq \lt'_i;\\
\emptyset & \text{otherwise.}
\end{cases}
\end{equation*}
\end{definition}

Figure \ref{fig:ex:complete_final} shows the universal stutter-free automaton $\uniNSFA_X$ for the set of boolean variables \(X=\{x,y\}\).
 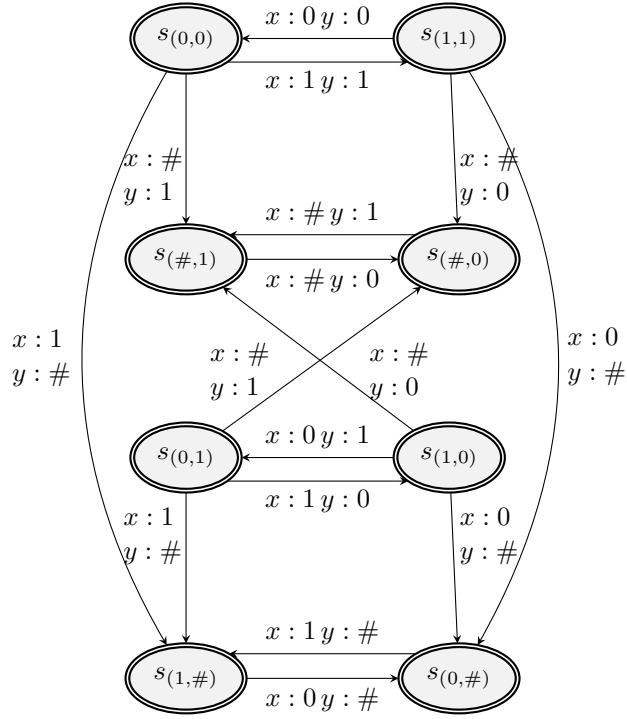
\begin{figure}[t]
   \centering
   \scalebox{1}{
 	 \begin{tikzpicture}[]
 	 \node[state, accepting] (s00) {\(s_{(0,0)}\)};
 	 \node[state, right = 2cm of s00,, accepting] (s11) {\(s_{(1,1)}\)};
 	 \node[state, below = 2cm of s00, accepting] (se1) {\(s_{(\pad,1)}\)};
 	 \node[state, right = 2cm of se1, accepting] (se0) {\(s_{(\pad,0)}\)};
 	 
 	 \node[state, below = 1.7cm of se1, accepting] (s01) {\(s_{(0,1)}\)};
 	 \node[state, right = 2cm of s01, accepting] (s10) {\(s_{(1,0)}\)};
 	 \node[state, below = 2cm of s01, accepting] (s1e) {\(s_{(1,\pad)}\)};
 	 \node[state, right = 2cm of s1e, accepting] (s0e) {\(s_{(0,\pad)}\)};

 	 \draw
 	 	(s00) edge [below,bend right, looseness = 0] node{\(x:1\,y:1\)} (s11)
 	 	(s11) edge [above] node{\(x:0\,y:0\)} (s00)
 	 	(se0) edge [above,bend right, looseness = 0] node{\(x:\pad\,y:1\)} (se1)
 	 	(se1) edge [below] node{\(x:\pad\,y:0\)} (se0)
 	 	
 	 	(s00) edge [text width =  0.8cm, left] node[label={[xshift=0.1cm, yshift=-1cm]{\(x:\pad\)\\\(y:1\)}}] {} (se1)
 	 	(s11) edge [text width =  0.8cm, right] node[label={[xshift=-0.05cm,yshift=-1cm]{\(x:\pad\)\\\(y:0\)}}] {} (se0)

 	 	(s01) edge [below,bend right, looseness = 0] node{\(x:1\,y:0\)} (s10)
 	 	(s10) edge [above] node{\(x:0\,y:1\)} (s01)
 	 	(s0e) edge [above,bend right, looseness = 0] node{\(x:1\,y:\pad\)} (s1e)
 	 	(s1e) edge [below] node{\(x:0\,y:\pad\)} (s0e)
 	 	
 	 	(s01) edge [text width =  0.8cm, left] node[label={[xshift=0.1cm,yshift=-0.2cm]{\(x:1\)\\\(y:\pad\)}}] {} (s1e)
 	 	(s10) edge [text width =  0.8cm, right] node[label={[xshift=-0.05cm,yshift=-0.2cm]{\(x:0\)\\\(y:\pad\)}}] {} (s0e)

 	 	(s00) edge [bend right,text width = 0.8cm, looseness = 1, left] node{\(x:1\)\\\(y:\pad\)} (s1e)
 	
 	 	(s11) edge [bend left,text width =  0.8cm, looseness = 1,right] node{\(x:0\)\\\(y:\pad\)} (s0e)	
 	 	
 	 	(s01) edge [right,text width = 0.8cm] node[label={[xshift=0.5cm,yshift=-0.8cm]{\(x:\pad\)\\\(y:0\)}}]{} (se0)
 	 	(s10) edge [left,text width = 0.8cm] node[label={[xshift=-0.5cm,yshift=-0.8cm]{\(x:\pad\)\\ \(y:1\)}}]{} (se1)
 	 	;
 	 \end{tikzpicture}}
  \caption{The universal stutter-free automaton $\uniNSFA_{\{x,y\}}$ over the boolean variables $\{x,y\}$.
  It accepts all stutter-free unzipped trace segments over $\{x,y\}$.
  All states are both initial and final.}
  \label{fig:ex:complete_final}
 \end{figure}
We use the states and transitions of the universal automaton to complete other stutter-free automata.
Formally, given a stutter-free automaton \(\stutFreeA \mathop{=} (\States , \SInitial, \SFinal, \Transition)\) over the variables $\Prop$ with domain \(\domain\), the \emph{completion} of \(\stutFreeA\) is the stutter-free automaton \(\compl(\stutFreeA) \mathop{=} (\States \mathop{\cup} \States_\uniNSFA,\SInitial, \SFinal, \Transition')\)
over the same variables, where $\States_\uniNSFA$ are the states of the universal stutter-free automaton $\uniNSFA_{\alphNSFA}$, and for all states \(q\mathop{\in} \States \cup  \States_\uniNSFA\) and all valuations \(v \in \alphNSFA\):
\begin{equation*}
\Transition'(\std, v) \mathop{=} 
\begin{cases}
\Transition_{\uniNSFA}(q,v) & \tIf q \in \States_\uniNSFA,\\
\Transition(q,v)\cup \{v'\ |\  \text{for all } x \mathop{\in} \Prop \tIf v(x)=\pad \tthen v'(x) = \pad &  \tIf q \in \States.\\
 \qquad\qquad\qquad\qquad   \text{otherwise } v'(x)\in (\domain \setminus \Pre{\Transition}{q,x} \cup \Pos{\Transition}{q,x})\}
&
\end{cases}
\end{equation*}
where \emph{In} and \emph{Out} are defined with respect to the transition relation $\Transition$ of $\stutFreeA$, 
and $\Transition_\uniNSFA$ is the transition relation of the universal stutter-free automaton $\uniNSFA_{\alphNSFA}$.

\begin{proposition}
\label{thm:complete_stutt}
Let \(\stutFreeA\) be a stutter-free automaton.
Then, \(\compl(\stutFreeA)\) is a complete stutter-free automaton with the same language as 
\(\stutFreeA\), i.e., \(\Lang(\compl(\stutFreeA))\mathop{=} \Lang(\stutFreeA)\).
\end{proposition}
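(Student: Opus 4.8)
The plan is to prove two things about $\compl(\stutFreeA)$: (1) it is a well-formed stutter-free automaton, i.e., it satisfies stutter-freedom and termination from Definition~\ref{def:stut_free_aut}; (2) it is complete; and (3) its language coincides with that of $\stutFreeA$. I would organize the argument around the two disjoint ``zones'' of states, $\States$ (the original states) and $\States_\uniNSFA$ (the universal sink states), observing that transitions added by the completion never lead from $\States_\uniNSFA$ back into $\States$ --- once a run enters the universal zone, it stays there --- while from a state in $\States$ the completion only adds transitions that were ``missing'' at $q$, either into another state of $\States$ via $\Transition$ (untouched) or into the universal zone via the fresh valuations $v'$ that disagree with all incoming and outgoing $x$-values at $q$ (respecting $\pad$-positions). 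This zone structure is what keeps the construction honest.

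First I would check the language equality $\Lang(\compl(\stutFreeA)) = \Lang(\stutFreeA)$. The inclusion $\supseteq$ is immediate since every transition and state of $\stutFreeA$ is retained and initial/final sets are unchanged, so every accepting run of $\stutFreeA$ is an accepting run of $\compl(\stutFreeA)$. For $\subseteq$, I would argue that any accepting run of $\compl(\stutFreeA)$ that uses a state of $\States_\uniNSFA$ cannot be accepting-relevant: since no transition re-enters $\States$ from $\States_\uniNSFA$, and the only transitions out of $\States$ into $\States_\uniNSFA$ are the newly added ones, once the run leaves $\States$ it never comes back; but then... actually one must be slightly careful, because states in $\States_\uniNSFA$ are final. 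The key point is that an accepting run either stays entirely within $\States$ --- in which case it only ever uses transitions of $\Transition$ (the added $v'$-transitions from a state of $\States$ always land in $\States_\uniNSFA$), so it is a run of $\stutFreeA$ --- or it leaves $\States$ at some point, but then... hmm. The cleanest fix: I would note that $\Transition'$ restricted to $\States$-targets equals $\Transition$, so any run of $\compl(\stutFreeA)$ projects, by truncating at the first visit to $\States_\uniNSFA$, onto a run of $\stutFreeA$; and conversely the reachable-from-$\SInitial$ part shows $\Lang$ cannot grow. I'd make this precise by a short induction on run length showing that a run's $\States$-prefix is a $\stutFreeA$-run and that any extension into $\States_\uniNSFA$ adds no new trace segments beyond reshuffling, hence $\Lang$ is unchanged.

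Next I would verify the two NSFA conditions for $\compl(\stutFreeA)$. For states $q \in \States_\uniNSFA$ this is exactly Proposition-style bookkeeping: $\uniNSFA_{\alphNSFA}$ is itself stutter-free and terminating by the definition of $\Transition_\uniNSFA$ (the $\lt_i \neq \lt'_i$ clause gives stutter-freedom, the $\lt_i = \pad \Rightarrow \lt'_i = \pad$ clause gives termination), and the completion adds no incoming edges to $\States_\uniNSFA$ from outside that would violate this --- incoming edges from $\States$ carry exactly the valuations $v'$, whose $x$-component lies in $\domain \setminus (\Pre{\Transition}{q,x} \cup \Pos{\Transition}{q,x})$ or equals $\pad$, which is consistent with the universal automaton's incoming-value conventions. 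For states $q \in \States$, note the completion adds only \emph{outgoing} edges at $q$ (nothing new comes \emph{into} a state of $\States$), so $\Pre{\Transition'}{q,x} = \Pre{\Transition}{q,x}$, while $\Pos{\Transition'}{q,x} = \Pos{\Transition}{q,x} \cup \{\lt : \lt = \pad \text{ if } \pad \in \Pos{\Transition}{q,x}\text{-forced}, \text{ else } \lt \in \domain \setminus (\Pre{\Transition}{q,x}\cup\Pos{\Transition}{q,x})\}$; by construction the new values avoid $\Pre{\Transition}{q,x}$, so stutter-freedom $\Pre{\Transition'}{q,x} \cap \Pos{\Transition'}{q,x} \subseteq \{\pad\}$ is preserved, and termination is preserved because when $\pad \in \Pre{\Transition}{q,x}$ we already had $\Pos{\Transition}{q,x} = \{\pad\}$ in $\stutFreeA$ and the $v'$-clause only produces $v'(x) = \pad$ in that case.

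Finally, completeness: I would unpack the definition, showing that for every $q$, $\Pre{\Transition'}{q} \cup \Pos{\Transition'}{q}$ is maximal subject to the Definition~\ref{def:stut_free_aut} constraints. For $q \in \States_\uniNSFA$ this holds because $\uniNSFA_{\alphNSFA}$ is complete by design (every value-tuple consistent with the $\pad$-conventions appears on some outgoing edge). For $q \in \States$, the added $v'$-edges precisely fill in, for each variable $x$, every domain value not already present in $\Pre{}{q,x} \cup \Pos{}{q,x}$ (and $\pad$ exactly when forced), so after completion no further value can legally be added. I expect the main obstacle to be the $\subseteq$ direction of the language equality --- specifically, arguing rigorously that runs straying into $\States_\uniNSFA$ do not enlarge the accepted language despite those states being final; the zone-separation observation (no edges from $\States_\uniNSFA$ back to $\States$, and $v'$-edges from $\States$ always target $\States_\uniNSFA$) is the crux that makes the truncation-to-$\States$-prefix argument go through.
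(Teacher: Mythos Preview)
Your overall decomposition matches the paper's, but there is one genuine misreading that derails the language-equality argument. You write that ``one must be slightly careful, because states in $\States_\uniNSFA$ are final.'' They are not. The completion is defined as $\compl(\stutFreeA) = (\States \cup \States_\uniNSFA, \SInitial, \SFinal, \Transition')$, so the final set is still $\SFinal \subseteq \States$; the universal-automaton states are added only as non-accepting sink states. (In the standalone automaton $\uniNSFA_{\alphNSFA}$ every state is final, but the completion borrows only its states and transitions, not its accepting set.)

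This matters because your proposed fix for the $\subseteq$ direction---truncating at the first visit to $\States_\uniNSFA$ and arguing that the tail ``adds no new trace segments beyond reshuffling''---does not work and is not needed. If the universal states \emph{were} final, the language would genuinely grow: once in $\States_\uniNSFA$ the run can read any stutter-free continuation and accept, so truncation would change the accepted segment and ``reshuffling'' has no formal content. The paper's argument is instead the one-liner you set up but then abandoned: since no edge goes from $\States_\uniNSFA$ back to $\States$, and accepting runs must end in $\SFinal \subseteq \States$, every accepting run stays entirely inside $\States$; there it uses only $\Transition$-edges (the added edges out of $\States$ all target $\States_\uniNSFA$), so it is already an accepting run of $\stutFreeA$.

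The remaining parts of your proposal---checking stutter-freedom and termination separately on the two zones, and arguing completeness from the universal automaton being complete plus the added edges filling in the missing values at each $q \in \States$---are essentially the paper's proof.
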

\begin{proof}
Consider arbitrary stutter-free automaton \(\stutFreeA\).
Both \(\stutFreeA\) and the universal automaton are stutter-free automata, satisfying the stutter-free and termination requirements.
To prove that \(\compl(\stutFreeA)\) is a stutter-free automaton is only missing to prove that extension of the \(\stutFreeA\) transition relation
(pointing to states in the universal automaton) preserves both requirements.
By definition, if a variable trace is terminated, it will remain terminated (\(\tIf v(x)=\pad \tthen v'(x) = \pad\)), so termination is satisfied; and 
if it is not terminated, we only add the transitions labeled with values not seen in the incoming and outgoing edges of a state
(for all \(x\in \Prop\), \(v'(x) \notin \Pre{\Transition}{q} \cup \Pos{\Transition}{q}\)), hence it satisfies the stutter-free requirement.
We now prove that \(\compl(\stutFreeA)\) is complete. 
By definition of universal automaton, it follows that universal automata are complete. 
For all non-complete transitions in \(\stutFreeA\) (i.e., variables that are not terminated yet), we add the missing transition pointing to the matching universal automaton state; hence \(\compl(\stutFreeA)\) is complete.

Finally, we prove that both automata define the same language.
As we kept all \(\stutFreeA\) transitions, it follows that 
\({\Lang(\compl(\stutFreeA)) \supseteq \Lang(\stutFreeA)}\).
We now prove that \(\Lang(\compl(\stutFreeA)) \subseteq \Lang(\stutFreeA)\).
First, we note that no transition connects states from the universal automaton to states from \(\stutFreeA\). 
Then, when a run reaches a state from the universal automaton, all the following steps are within the universal automaton. 
Additionally, the final states do not include states from the universal automaton. 
Thus, accepting runs include only transitions in  \(\stutFreeA\).
\end{proof}

The complement of a deterministic and complete stutter-free automaton \(\stutFreeA \mathop{=} (\States, \SInitial, \SFinal, \Transition)\) over \(\alphNSFA\) is \(\overline{\stutFreeA} \mathop{=} (\States , \SInitial, \States\setminus\SFinal, \Transition)\), 
with the final and nonfinal states interchanged.

\begin{proposition}
\label{prop:stutter_free_compl}
Let  \(\stutFreeA\) be a deterministic and complete stutter-free automaton over \(\alphNSFA\).
Then, \(\overline{\stutFreeA}\) is a stutter-free automaton and 
\(\Lang(\overline{\stutFreeA}) \mathop{=} \Redux{\AllAsyncTn}\setminus\Lang(\stutFreeA)\).
\end{proposition}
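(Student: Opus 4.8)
The plan is to run the standard complementation argument for deterministic complete automata, adapted to the stutter-free setting, and to split the statement into two independent parts. The first part, that \(\overline{\stutFreeA}\) is a stutter-free automaton, needs neither determinism nor completeness: by construction \(\overline{\stutFreeA}\) keeps the states \(\States\), the alphabet \(\alphNSFA\), the initial states \(\SInitial\), and the transition relation \(\Transition\) of \(\stutFreeA\), replacing only \(\SFinal\) by \(\States\setminus\SFinal\). Both side conditions of Definition~\ref{def:stut_free_aut} — stutter-freedom \(\Pre{\Transition}{\std,x}\cap\Pos{\Transition}{\std,x}\subseteq\{\pad\}\) and termination ``if \(\pad\in\Pre{\Transition}{\std,x}\) then \(\Pos{\Transition}{\std,x}=\{\pad\}\)'' — constrain \(\Transition\) alone, hence they transfer verbatim from \(\stutFreeA\) to \(\overline{\stutFreeA}\); changing the final states is irrelevant. (The same observation incidentally shows \(\overline{\stutFreeA}\) stays deterministic and complete, though only the NSFA property is claimed.)

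For the language identity, let \(U\) be the set of stutter-free unzipped trace segments over \(\alphNSFA\) that respect the \(\pad\)-termination discipline; via the \(\pad\)-stripping bijection \(|_{\pad}\) this is how I read \(\Redux{\AllAsyncTn}\) in the statement, using that \(\Lang(\uniNSFA_{\alphNSFA})|_{\pad}=\Redux{\AllAsyncTn}\). I would prove two facts. First, \(\Lang(\stutFreeA)\cup\Lang(\overline{\stutFreeA})\subseteq U\): along any run \(\runA=\std_0 v_0 \std_1\ldots v_{n-1}\std_n\) and any variable \(x\), for each \(i\) we have \(v_i(x)\in\Pre{\Transition}{\std_{i+1},x}\) and \(v_{i+1}(x)\in\Pos{\Transition}{\std_{i+1},x}\), so stutter-freedom forces consecutive \(x\)-values to agree only when both equal \(\pad\), and termination makes \(\pad\) absorbing; hence the read word projects on each \(x\) to a string of the form \(\lt_0\cdots\lt_k\pad^{*}\) with the \(\lt_j\in\domain\) non-stuttering, i.e.\ lies in \(U\), and since \(\stutFreeA\) and \(\overline{\stutFreeA}\) share all transitions this applies to both. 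Second, every \(\trace\in U\) has exactly one run from the single initial state in \(\stutFreeA\), and the identical run is the unique run in \(\overline{\stutFreeA}\): uniqueness is determinism; existence follows by induction on the length of \(\trace\) from completeness, since whenever the partial run has reached a state \(\std\), the next letter \(v\) of \(\trace\) is componentwise either \(\pad\) with that variable already terminated or a value not in \(\Pre{\Transition}{\std,x}\), so \(v\) is compatible with \(\std\) in the sense of Definition~\ref{def:stut_free_aut} and maximality gives \(\Transition(\std,v)\neq\emptyset\). If \(\std_{\trace}\) denotes the state where this common run ends, then \(\trace\in\Lang(\stutFreeA)\) iff \(\std_{\trace}\in\SFinal\) and \(\trace\in\Lang(\overline{\stutFreeA})\) iff \(\std_{\trace}\in\States\setminus\SFinal\), which are complementary conditions; combined with the first fact this yields \(\Lang(\overline{\stutFreeA})=U\setminus\Lang(\stutFreeA)=\Redux{\AllAsyncTn}\setminus\Lang(\stutFreeA)\).

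The main obstacle is the existence half of the second fact: turning the per-state ``\(\Pre{\Transition}{\std}\cup\Pos{\Transition}{\std}\) is a maximal subset of \(\alphNSFA\)'' formulation of completeness into the global statement that every \(\trace\in U\) can be traced step by step. Discharging it cleanly means maintaining, along the partial run, the invariant describing which \(x\)-values are currently forbidden at the reached state \(\std\) (those in \(\Pre{\Transition}{\std,x}\) after a non-\(\pad\) step for \(x\), and all of \(\domain\) once \(x\) has seen \(\pad\)), and checking that the defining membership conditions of \(U\) guarantee the next letter of \(\trace\) avoids exactly these — equivalently, that \(U\) is precisely the set of words that stay compatible with the completeness constraints at every reachable state. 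A convenient way to phrase this is to route the bookkeeping through \(\uniNSFA_{\alphNSFA}\), which is itself deterministic, complete, and stutter-free with \(\Lang(\uniNSFA_{\alphNSFA})|_{\pad}=\Redux{\AllAsyncTn}\): a deterministic complete stutter-free automaton then induces a total map from \(U\) to ``final / non-final'' verdicts, and passing to \(\overline{\stutFreeA}\) simply flips that map.
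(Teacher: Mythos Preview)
Your proposal is correct and follows essentially the same approach as the paper: both argue that the side conditions of Definition~\ref{def:stut_free_aut} depend only on \(\Transition\), so they transfer to \(\overline{\stutFreeA}\), and both use determinism for uniqueness of runs and completeness for existence, concluding from the flipped final-state sets. Your write-up is considerably more careful than the paper's---in particular you explicitly prove the containment \(\Lang(\stutFreeA)\cup\Lang(\overline{\stutFreeA})\subseteq U\) and spell out how the maximality clause in the paper's completeness definition yields the inductive existence step, points the paper's proof simply asserts (``from completeness'', ``from determinism'') without unpacking.
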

\begin{proof}
Let \(\stutFreeA\) be an arbitrary stutter-free automaton.
It follows directly from  \(\stutFreeA\) being a stutter-free automaton that
\(\overline{\stutFreeA}\) is also stutter-free automaton (complementing an automaton does not change its transition function).
For the same reason, \(\overline{\stutFreeA}\) is a deterministic and complete stutter-free automaton, as well.
Then, for all unzipped traces \(\trace \mathop{\in} \Redux{\AllAsyncTn}\) there exists a run \(r\) in \(\overline{\stutFreeA}\) for \(\trace\) (from completeness); and it this the only run for \(\trace\) (from determinism).
As \(\overline{\stutFreeA}\) and \(\stutFreeA\) share the same transition function, then \(r\) is also only run in \(\stutFreeA\) for \(\trace\).
Finally, as the final states are flipped, it follows that
\(r\) is an accepting run for \(\trace\) in \(\overline{\stutFreeA}\) iff 
\(r\) is not an accepting run for \(\trace\) in \(\overline{\stutFreeA}\).
Hence
\(\Lang(\overline{\stutFreeA}) = \Redux{\AllAsyncTn}\setminus\Lang(\stutFreeA)\).
\end{proof}

\subsubsection*{\bf From formulas of hypernode logic to stutter-free automata}
Having prepared the ground by defining stutter-free automata, which are closed under union, intersection, and complement, we now turn to the model-checking problem for hypernode logic. 
Given a stutter-free automaton and a hypernode formula,  we define an inductive filtration (i.e., in each step we get produce a sub-automaton) over the hypernode formula structure to apply to the automaton we want to model-check.
The input automaton is a model of the input formula if the language of the automaton returned by the filtration is non-empty.
The inductive filtration for boolean operators translates naturally to automata operators.
For atomic hypernode formulas  (i.e., the predicate $\Rel$), we define a stutter-free automaton that captures the meaning of $\Rel$.
We depict an overview of the
model-checking algorithm in Fig. \ref{fig:mc_hypernode_logic}.

\begin{figure}
	\centering
	\scalebox{0.70}{ \input 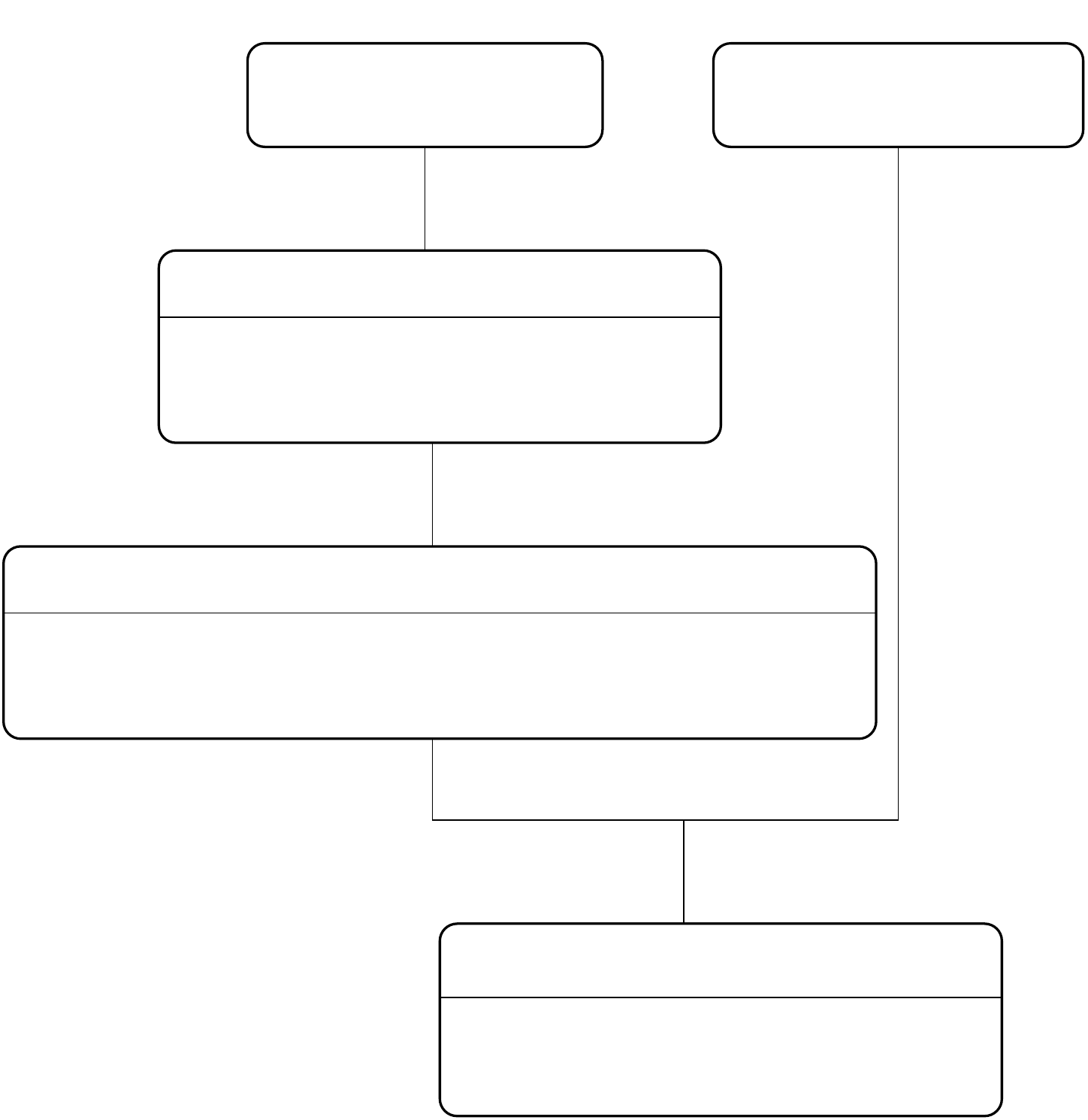_t }
	\caption{Model-checking algorithm for hypernode formulas with relevant results.}
	\label{fig:mc_hypernode_logic}
\end{figure}

\begin{definition}
Let \(\uniNSFA_X \mathop{=} (\States_\uniNSFA, \States_\uniNSFA, \States_\uniNSFA, \Transition_{\uniNSFA})\)  
be the universal stutter-free automaton over \(\alphNSFA\),
and let $x,y\in \Prop$.
The \emph{stutter-free automaton for the atomic formula \(x \Rel y\) of hypernode logic} 
is the stutter-free automaton \(\stutFreeA_{x \Rel y} \mathop{=} (\States, \States, \States , \Transition)\) over the same variables and domain,
where \(\States\mathop{=} \{v\mathop{\in} \States_\uniNSFA \,|\, v(x) \mathop{=}  v(y) \tOr 
v(x)=\pad \}\), and 
\(\Transition(q,v) \mathop{=} \Transition_{\uniNSFA}(q,v)\)
for all $q\in Q$ and $v\in \alphNSFA$.
\end{definition}

To cope with trace variables in hypernode formulas,  we extend the set of variables $\Prop$ with a reference to trace variables in \(\VarTrace\), by \(\ProTraceVar \mathop{=} \{ x_{\traceVar} \,|\, x \in \Prop \tAnd \traceVar \in \VarTrace\}\). 
From an unzipped trace segment \(\trace\) over the set of variables \(\ProTraceVar\) we derive the trace assignment \(\traceAssign_{\trace}(\traceVar, x) \mathop{=} \trace(x_{\traceVar})\) for all \(x\in\Prop\) and \(\traceVar \in \VarTrace\).
We prove now that all words accepted by the stutter-free automaton for \(x(\traceVar) \Rel y(\traceVar')\) define assignments that satisfy that hypernode atomic formula.

\begin{lemma}
\label{lemma:pred_formula}
An unzipped trace segment \(\trace\) over \((\domain^{\ProTraceVar})^*\) is accepted by \(\stutFreeA_{x_\traceVar \Rel y_{\traceVar'}}\) over the same variables and domain,  \(\trace \in \Lang(\stutFreeA_{x_\traceVar \Rel y_{\traceVar'}})|_{\pad}\),
iff \(\traceAssign_{\trace} \models x(\traceVar) \Rel y(\traceVar')\).
\end{lemma}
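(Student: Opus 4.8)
The plan is to unfold both sides of the claimed equivalence and match them directly, using the definition of $\stutFreeA_{x_\traceVar \Rel y_{\traceVar'}}$ (which restricts the states/alphabet of the universal automaton to valuations $v$ with $v(x_\traceVar) = v(y_{\traceVar'})$ or $v(x_\traceVar) = \pad$) against the semantic clause for $\traceAssign_{\setTraces} \models x(\traceVar) \Rel y(\traceVar')$, namely that $\traceAssign_{\trace}(\traceVar)(x) = \trace(x_\traceVar)$ lies in $\lt_0^+ \ldots \lt_n^+$ and $\traceAssign_{\trace}(\traceVar')(y) = \trace(y_{\traceVar'})$ lies in $\lt_0^+ \ldots \lt_n^+ \alphabet^*$ with $\lt_i \neq \lt_{i+1}$. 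Since we work with $\Lang(\cdot)|_\pad$, every trace in question is automatically stutter-free on each coordinate, so the ``$+$'' multiplicities collapse: $\Redux{\trace(x_\traceVar)} = \trace(x_\traceVar)$. Thus the atomic formula, restricted to stutter-free segments, says precisely that $\trace(x_\traceVar)$ is a (stutter-reduced) prefix of $\trace(y_{\traceVar'})$.

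First I would establish the forward direction. Suppose $\trace \in \Lang(\stutFreeA_{x_\traceVar \Rel y_{\traceVar'}})|_\pad$, witnessed by an accepting run $\std_0 v_0 \ldots v_{k-1} \std_k$ in $\stutFreeA_{x_\traceVar \Rel y_{\traceVar'}}$ (before removing $\pad$). Because every state/letter in this automaton is a valuation $v$ with $v(x_\traceVar) = v(y_{\traceVar'})$ or $v(x_\traceVar) = \pad$, and because of the termination condition (once a coordinate sees $\pad$ it stays $\pad$, by the ``else $\lt_i \neq \lt_i'$'' clause of $\Transition_\uniNSFA$ forcing a genuine change until a $\pad$ is committed), the word $v_0 v_1 \ldots v_{k-1}$ splits into an initial block where $x_\traceVar \neq \pad$ (hence $v_j(x_\traceVar) = v_j(y_{\traceVar'})$) followed by a tail where $x_\traceVar = \pad$. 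Reading off the coordinates and deleting $\pad$, this gives exactly $\trace(x_\traceVar) = \lt_0 \ldots \lt_n$ and $\trace(y_{\traceVar'}) = \lt_0 \ldots \lt_n w$ for the appropriate tail $w$, with consecutive letters distinct (this is the stutter-freedom of $\uniNSFA_X$, via $\Pre \cap \Pos \subseteq \{\pad\}$ propagated through the run). Setting $\lt_i$ as these letters, the semantic clause is satisfied, so $\traceAssign_{\trace} \models x(\traceVar) \Rel y(\traceVar')$.

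For the converse I would run the same correspondence backwards: given $\traceAssign_{\trace} \models x(\traceVar) \Rel y(\traceVar')$, I have $\trace(x_\traceVar) = \lt_0 \ldots \lt_n$ and $\trace(y_{\traceVar'}) = \lt_0 \ldots \lt_n \lt'_0 \ldots \lt'_{r}$ with all the required inequalities; I then build the run in $\stutFreeA_{x_\traceVar \Rel y_{\traceVar'}}$ whose letters are the valuations agreeing on $x_\traceVar$ and $y_{\traceVar'}$ for the first $n{+}1$ positions and having $x_\traceVar = \pad$ afterward, checking along the way that each such valuation is a legal state of the automaton and that consecutive valuations are related by $\Transition_\uniNSFA$ (which holds precisely because consecutive letters differ coordinatewise or are $\pad$). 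Since in $\stutFreeA_{x_\traceVar \Rel y_{\traceVar'}}$ all states are initial and final, the run is accepting. The one subtlety I anticipate is bookkeeping the other variables in $\ProTraceVar \setminus \{x_\traceVar, y_{\traceVar'}\}$: the automaton constrains only the $x_\traceVar$ and $y_{\traceVar'}$ coordinates, so on the remaining coordinates $\trace$ can be an arbitrary stutter-free segment, and I must confirm that the universal automaton's transition relation indeed permits every such choice (it does, by Definition~\ref{def:stutter_free_universal} and Proposition~\ref{thm:complete_stutt}, since $\uniNSFA_X$ accepts all of $\Redux{\AllAsyncTn}$). The main obstacle, then, is not conceptual but careful alignment of the automaton's run structure with the block decomposition $\lt_0^+ \ldots \lt_n^+ (\alphabet^*)$ in the semantics of $\Rel$, keeping the $\pad$-termination discipline consistent on both coordinates.
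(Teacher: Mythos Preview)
Your proposal is correct and matches the paper's argument closely: both directions hinge on the state constraint $v(x_\traceVar)=v(y_{\traceVar'})$ or $v(x_\traceVar)=\pad$ together with the termination discipline, aligned against the block decomposition in the semantics of $\Rel$. The one minor difference is that for the converse the paper argues by contrapositive---if $\trace$ is rejected then, since every state is final, some step of the (unique) candidate run must violate the $x_\traceVar/y_{\traceVar'}$ constraint---whereas you build an accepting run directly; your route has the virtue of making the bookkeeping for the remaining variables in $\ProTraceVar\setminus\{x_\traceVar,y_{\traceVar'}\}$ explicit, which the paper leaves tacit.
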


\begin{proof}
Consider arbitrary unzipped trace segment \(\trace\) over \((\domain^{\ProTraceVar})^*\).
We start with the \(\Rightarrow\)-direction and assume that \(\trace \in \Lang(\stutFreeA_{x_\traceVar \Rel y_{\traceVar'}})|_{\pad}\).
Then, there exists an accepting run
\(\std_0 \lt_0 \std_1 \lt_1 \ldots \lt_{n} \std_{n+1}\) where, for all \(x_{\traceVar}\in \ProTraceVar\), if \(0\le i< |\trace(x_{\traceVar})|\), then
\(\lt_i(x_{\traceVar}) = \trace(x_{\traceVar},i)\), otherwise \(\lt_i(x_{\traceVar}) = \pad\) .
By definition of universal stutter-free automaton and \(\stutFreeA_{x_\traceVar \Rel y_{\traceVar'}}\), it follows that for all \(0\leq j \leq n\), \(\lt_j(x_{\traceVar})\mathop{=}\lt_j(y_{\traceVar'})\) or \(\lt_j(x_{\traceVar})\mathop{=}\pad\).
Moreover, by the termination requirement of stutter-free automaton,
 if there exists \(j\) s.t.\ \(\lt_j(x)\mathop{=}\pad\), then the value of \(x\) will not change anymore (i.e, for all \(k \ge j\),  \(\lt_k(x)\mathop{=}\pad\)).
We can then prove by induction of the trace \(\trace\) size that  \(\traceAssign_{\trace} \models x(\traceVar) \Rel y(\traceVar')\).
For the \(\Leftarrow\)-direction, we assume that \(\trace\) is not accepted by \(\stutFreeA_{x_\traceVar \Rel y_{\traceVar'}}\). 
As all states in \(\stutFreeA_{x_\traceVar \Rel y_{\traceVar'}}\) are final, then there exists a step \(0 \leq i \leq n\) where \(\lt_i(x_{\traceVar})\neq \lt_i(y_{\traceVar'})\) and \(\lt_i(x_{\traceVar}) \neq \pad\).
Given that \(\trace\) is stutter-free, then \(\traceAssign_{\trace} \not\models x(\traceVar) \Rel y(\traceVar')\).
\end{proof}

The inductive filtration defined next is the main element of the model-checking algorithm for formulas of hypernode logic.

\begin{definition}
\label{def:filter}
Let  \(\stutFreeA\) be a stutter-free automaton, and \(\varphi\) a formula of hypernode logic.
We define the positive and negative filtration of \(\stutFreeA\) by \(\varphi\), denoted \(\filter{\varphi}{+}{\stutFreeA}\) and  \(\filter{\varphi}{-}{\stutFreeA}\),respectively, inductively over the structure of $\varphi$ as follows:
\begin{align*}
\filter{(x(\traceVar) \Rel y(\traceVar'))}{+}{\stutFreeA} &\mathop{=} \stutFreeA \cap \stutFreeA_{x_\traceVar \Rel y_{\traceVar'}}
& \filter{(x(\traceVar) \Rel y(\traceVar'))}{-}{\stutFreeA} &\mathop{=} \stutFreeA \cap \overline{\stutFreeA_{x_\traceVar \Rel y_{\traceVar'}}}\\
\filter{(\varphi_1 \wedge \varphi_2)}{+}{\stutFreeA} &\mathop{=} \filter{\varphi_1}{+}{\stutFreeA} \cap  \filter{\varphi_2}{+}{\stutFreeA}
& (\varphi_1 \wedge \varphi_2)^-[\stutFreeA] &\mathop{=} \varphi_1^-[\stutFreeA] \cup  \varphi_2^-[\stutFreeA]\\
\filter{(\neg \varphi)}{+}{\stutFreeA} &\mathop{=} \filter{\varphi}{-}{\stutFreeA}
& \filter{(\neg \varphi)}{-}{\stutFreeA} &\mathop{=} \filter{\varphi}{+}{\stutFreeA}\\
\filter{(\exists \traceVar \varphi)}{+}{\stutFreeA} &\mathop{=} \filter{\varphi}{+}{\stutFreeA}
& \filter{(\exists \traceVar \varphi)}{-}{\stutFreeA} &\mathop{=} \stutFreeA \setminus \filter{\varphi}{+}{\stutFreeA}.
\end{align*}
\end{definition}

We reduce the problem of model checking a stutter-free automaton \(\stutFreeA\) over a formula \(\varphi\) with \(n\) trace variables to filtering the \(n\)-self-composition of \(\stutFreeA\) by~\(\varphi\).
The stutter-free automaton \(\stutFreeA^n\) is the result of composing \(n\) copies of \(\stutFreeA\) under a standard synchronous product construction, where for each copy \(\stutFreeA_i\), with \(i\le n\), all program variables \(x\in \Prop\) are renamed to \(x_{\traceVar_i}\).
Note that, the assignment derived by an unzipped trace segment \(\trace\) accepted by \(\stutFreeA^n\) defines a trace assignment from \(\{\traceVar_1, \ldots, \traceVar_n\}\) to traces accepted by \(\stutFreeA\).

\begin{theorem}
\label{thm:mc_hypernode_logic}
Let  \(\stutFreeA\) be a stutter-free automaton, and \(\varphi\) 
a formula of hypernode logic with \(n\) trace variables. Then,
\(\Lang(\filter{\varphi}{+}{\stutFreeA^n}) \neq \emptyset\) iff \(\Lang(\stutFreeA) \models \varphi\), and 
\(\Lang(\filter{\varphi}{-}{\stutFreeA^n}) \neq \emptyset\) iff \(\Lang(\stutFreeA) \not\models \varphi\).
\end{theorem}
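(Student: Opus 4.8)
The plan is to prove both biconditionals simultaneously by induction on the structure of $\varphi$, establishing the stronger statement that relates the \emph{languages} of the filtered automata to satisfaction, not merely their nonemptiness. Specifically, I would prove by structural induction on $\varphi$ the following invariant: for every stutter-free automaton $\stutFreeB$ over the variable set $\ProTraceVar$ (containing the trace variables free in $\varphi$), and every unzipped trace segment $\trace$ over $\ProTraceVar$ with $\trace \in \Lang(\stutFreeB)|_{\pad}$, we have $\trace \in \Lang(\filter{\varphi}{+}{\stutFreeB})|_{\pad}$ iff $\traceAssign_{\trace} \models \varphi$ (interpreting $\varphi$ over $\Lang(\stutFreeB)|_{\pad}$), and symmetrically $\trace \in \Lang(\filter{\varphi}{-}{\stutFreeB})|_{\pad}$ iff $\traceAssign_{\trace} \not\models \varphi$. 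Given this invariant applied to $\stutFreeB = \stutFreeA^n$, the theorem follows: $\Lang(\filter{\varphi}{+}{\stutFreeA^n}) \neq \emptyset$ iff some $\trace \in \Lang(\stutFreeA^n)|_{\pad}$ has $\traceAssign_{\trace} \models \varphi$, and since $\varphi$ is closed with $n$ trace variables and the assignments $\traceAssign_{\trace}$ for $\trace \in \Lang(\stutFreeA^n)|_{\pad}$ range exactly over all assignments $\VarTrace \to \Lang(\stutFreeA)|_{\pad}$, this is precisely $\Lang(\stutFreeA) \models \varphi$ (using that by Proposition~\ref{thm:minimal_set_traces} satisfaction is insensitive to stutter-reduction, and that $\Lang(\stutFreeA)|_{\pad}$ is already stutter-reduced).

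For the base case $\varphi = x(\traceVar) \Rel y(\traceVar')$, the positive case is exactly Lemma~\ref{lemma:pred_formula} combined with Proposition~\ref{thm:suttfree:bi_boolean_closure}: $\Lang(\stutFreeB \cap \stutFreeA_{x_\traceVar \Rel y_{\traceVar'}}) = \Lang(\stutFreeB) \cap \Lang(\stutFreeA_{x_\traceVar \Rel y_{\traceVar'}})$, and membership in the latter characterizes $\traceAssign_{\trace} \models x(\traceVar)\Rel y(\traceVar')$. The negative case uses Proposition~\ref{prop:stutter_free_compl} (after completing and determinizing $\stutFreeA_{x_\traceVar \Rel y_{\traceVar'}}$ via Propositions~\ref{thm:complete_stutt} and~\ref{thm:suttfree:bi_boolean_closure}): $\Lang(\overline{\stutFreeA_{x_\traceVar \Rel y_{\traceVar'}}}) = \Redux{\AllAsyncTn} \setminus \Lang(\stutFreeA_{x_\traceVar \Rel y_{\traceVar'}})$, so intersecting with $\stutFreeB$ captures exactly the stutter-free segments in $\Lang(\stutFreeB)$ violating the atom. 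The boolean cases $\varphi_1 \wedge \varphi_2$ and $\neg\varphi$ are routine: intersection of languages matches conjunction, union matches the De Morgan dual for the negative filtration, and the swap for $\neg$ is immediate; each appeals to Proposition~\ref{thm:suttfree:bi_boolean_closure} for closure and the language identities.

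The interesting case is $\varphi = \exists\traceVar\,\psi$. For the positive filtration, $\filter{(\exists\traceVar\,\psi)}{+}{\stutFreeB} = \filter{\psi}{+}{\stutFreeB}$, and the key observation is that projecting away (or rather, not constraining) the $\traceVar$-indexed variables is what realizes the existential: $\traceAssign_{\trace} \models \exists\traceVar\,\psi$ iff there is some segment $\trace'$ in the model with $\traceAssign_{\trace}[\traceVar \mapsto \trace'] \models \psi$; here $\trace'$ ranges over $\Lang(\stutFreeA)|_{\pad}$, which is exactly the set of possible $\traceVar$-components of words in $\Lang(\stutFreeA^n)|_{\pad}$, and because $\stutFreeA^n$ is a product of independent copies, any such $\trace'$ can be freely combined with $\trace$'s other components. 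So $\trace \in \Lang(\filter{\psi}{+}{\stutFreeA^n})|_{\pad}$ restricted to the non-$\traceVar$ variables agrees with $\traceAssign_{\trace}\models\exists\traceVar\,\psi$ — one must be a little careful that $\filter{\psi}{+}{\stutFreeA^n}$ may still mention $\traceVar$-variables, so the ``projection'' is really that two words agreeing off $\traceVar$ are equisatisfiable, which follows since $\traceVar$ is bound in $\psi$ and hence $\psi$'s truth under $\traceAssign_{\trace}[\traceVar\mapsto\trace']$ does not depend on $\trace(x_\traceVar)$. For the negative filtration, $\filter{(\exists\traceVar\,\psi)}{-}{\stutFreeB} = \stutFreeB \setminus \filter{\psi}{+}{\stutFreeB}$ (set difference, realized via complementation and intersection), capturing those $\trace \in \Lang(\stutFreeB)$ for which \emph{no} choice of $\traceVar$-component satisfies $\psi$, i.e. $\traceAssign_\trace \not\models \exists\traceVar\,\psi$.

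The main obstacle I anticipate is making the bookkeeping around bound versus free trace variables and the self-composition precise in the $\exists$ case. One has to argue carefully that (i) the $n$-fold product $\stutFreeA^n$ has the property that its language, as a set of assignments, is the full function space $\VarTrace \to \Lang(\stutFreeA)|_{\pad}$, so that quantifier witnesses are always available among the components; (ii) filtration never ``needs'' to grow the variable set, because all trace variables appearing in any subformula already index a copy in $\stutFreeA^n$; and (iii) the stutter-reduction normalization (Proposition~\ref{thm:minimal_set_traces} and the remark that $\Lang(\cdot)|_{\pad}$ is stutter-reduced) is invoked exactly where needed so that the automata, which manipulate stutter-free words with padding symbols, correctly mirror the logic, which quantifies over arbitrary (possibly stuttering) segments. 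Once the invariant is stated with the right generality, each inductive step is short; the risk is stating it too weakly (e.g. only about nonemptiness) and getting stuck at $\exists$, which is why I would carry the per-word version through the induction.
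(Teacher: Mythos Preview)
Your proposal is correct and takes essentially the same approach as the paper: both establish the per-word invariant (\(\trace \in \Lang(\filter{\varphi}{+}{\stutFreeA^n})|_{\pad}\) iff \(\traceAssign_\trace \models \varphi\), and dually for the negative filtration) by structural induction on \(\varphi\), handling the atomic case via Lemma~\ref{lemma:pred_formula}, the boolean cases via Propositions~\ref{thm:suttfree:bi_boolean_closure} and~\ref{prop:stutter_free_compl}, and the existential case by exploiting the product structure of \(\stutFreeA^n\). Your generalization of the invariant to an arbitrary automaton \(\mathcal{B}\) is unnecessary---the paper works directly with \(\stutFreeA^n\) throughout, and indeed your own existential-case argument already relies on properties specific to the self-composition---but otherwise the two arguments coincide.
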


\begin{proof}
Follows from the lemma we prove below for open formulas:\\

\emph{Let \(\trace\) be an unzipped trace segment over \((\domain^{ \ProTraceVar})^*\).
Then,
\(\trace \in \Lang(\filter{\varphi}{+}{\stutFreeA^n})|_{\pad}\) iff 
\(\traceAssign_{\trace} \models \varphi\); and 
\(\trace \in \Lang(\filter{\varphi}{-}{\stutFreeA^n})|_{\pad}\) iff
\(\traceAssign_{\trace} \not\models \varphi\).}\\

We prove the lemma by induction on the structure of the hypernode formula.
To prove the base case, \(x_\traceVar \Rel y_{\traceVar'}\) we use Lemma \ref{lemma:pred_formula}.
In particular, \(\trace \in \Lang(\filter{x_\traceVar \Rel y_{\traceVar'}}{+}{\stutFreeA^n})|_{\pad}\) iff \(\trace \in \Lang(\stutFreeA^n)|_{\pad}\) and \(\trace\in \Lang(\stutFreeA_{x_\traceVar \Rel y_{\traceVar'}})|_{\pad} \Leftrightarrow\).
As \(\trace \in \Lang(\stutFreeA^n)|_{\pad}\), then \(\traceAssign_{\trace}\) defines a trace assignment from \(\{\traceVar_1, \ldots \traceVar_n\}\) to traces accepted by \(\stutFreeA\) 
and, by Lemma \ref{lemma:pred_formula}, \(\traceAssign_{\trace} \models x_\traceVar \Rel y_{\traceVar'}\).
We prove analogously for \(\trace \in \Lang(\filter{x_\traceVar \Rel y_{\traceVar'}}{-}{\stutFreeA^n})|_{\pad}\).\\

We proceed to the inductive steps and assume as IH that the property holds for arbitrary hypernode formulas \(\varphi\) and \(\varphi'\).
The inductive case \(\varphi\wedge \varphi'\) follows from IH and Proposition \ref{thm:suttfree:bi_boolean_closure}.
While the inductive case \(\neg \varphi\) follows from IH and Proposition \ref{prop:stutter_free_compl}.\\

The case for the existential quantifier -- \(\exists \traceVar \varphi\) -- is more challenging.
We first prove the positive filtration and start with the \(\Rightarrow\)-direction.
Consider arbitrary  \(\trace \in \Lang(\filter{(\exists \traceVar \varphi)}{+}{\stutFreeA^n})|_{\pad}\). 
By definition of filtration, this is equivalent to \(\trace \in \Lang(\filter{\varphi}{+}{\stutFreeA^n})|_{\pad}\).
By IH, \(\traceAssign_{\trace} \models \varphi\), and so 
\(\traceAssign_{\trace} \models \exists \traceVar \varphi\).
We now prove the \(\Leftarrow\)-direction.
Assume a trace assignment exists s.t.\ \(\traceAssign \models \exists \traceVar \varphi\). 
Then, by definition of the satisfaction relation for hypernode formulas, there exists an unzipped trace \(\trace'\) over \(\domain^{\Prop}\) accepted by \(\stutFreeA\) (i.e., \(\trace' \in \Lang(\stutFreeA)|_{\pad}\)) s.t.\
\(\traceAssign[\traceVar \mapsto \trace'] \models \varphi\).
Note that the function to derive a trace assignment (mapping trace variables \(\VarTrace\) to unzipped traces over \(\domain^{\Prop}\)) 
from an unzipped trace segment \(\trace\) over \((\domain^{\ProTraceVar})^*\) is invertible.
Then, by IH, the trace \(\trace_{\traceVar}\) derived from the assignment extension \(\traceAssign_{\traceVar} = \traceAssign[\traceVar \mapsto \trace']\) is in \(\Lang(\filter{\varphi}{+}{\stutFreeA^n})|_{\pad}\).
And, by definition of filtration, \(\trace_{\traceVar} \in \filter{\exists \traceVar \varphi}{+}{\stutFreeA^n}\).

We now prove the negative filtering.
We start with the \(\Rightarrow\)-direction, which we prove by contra-position.
Assume that for an arbitrary trace assignment \(\traceAssign\) we have \(\traceAssign \models \exists \traceVar \varphi\), then there exists an extension \(\traceAssign_{\traceVar} = \traceAssign[\traceVar \mapsto \trace']\) with \(\trace'\in \Lang(\stutFreeA)|_{\pad}\) s.t.\ \(\traceAssign_{\traceVar} \models  \varphi\).
By IH, the trace \(\trace_{\traceVar}\) derived by \(\traceAssign_{\traceVar}\) is accepted by \(\filter{\varphi}{+}{\stutFreeA^n}\). 
Thus, \(\trace_{\traceVar}\) is not accepted by \(\stutFreeA^n\setminus \filter{\varphi}{+}{\stutFreeA^n}\), and, by definition, 
\(\trace_{\traceVar} \notin \Lang(\filter{(\exists \traceVar \varphi)}{-}{\stutFreeA^n})\).
For the \(\Leftarrow\)-direction,
consider arbitrary trace assignment s.t.\ \(\traceAssign \models \neg \exists \traceVar \varphi\). 
Then,  \(\traceAssign \models  \forall \traceVar \neg \varphi\). 
Equivalently, for all extensions of \(\traceAssign\), i.e.\ \(\traceAssign_{\traceVar} = \traceAssign[\traceVar \mapsto \trace']\) for all \(\trace' \in \Lang(\stutFreeA)|_{\pad}\), we have \(\traceAssign_{\traceVar} \not\models \varphi\). 
Consider an arbitrary of such extensions \(\traceAssign_{\traceVar}\), then, by IH, for the trace \(\trace_{\traceVar}\)  derived by it 
we have  \(\trace_{\traceVar} \in \Lang(\stutFreeA^n)|_{\pad}\) and \(\trace_{\traceVar} \notin \Lang(\filter{\varphi}{+}{\stutFreeA^n})\). 
Equivalently, \(\trace_{\traceVar} \in \Lang(\stutFreeA^n \setminus \filter{\varphi}{+}{\stutFreeA^n})\) and, so
\(\trace_{\traceVar} \in \Lang(\filter{(\exists \traceVar \varphi)}{-}{\stutFreeA^n})\).
\end{proof}

\subsubsection*{\bf From open Kripke structures to stutter-free automata}

We show now how to translate an open Kripke structure to a stutter-free automaton accepting the same unzipped trace segments as the set generated by the Kripke structure.
As unzipped traces represent the progression of each variable independently then, while building the stutter-free automaton from an open Kripke structure, we need to keep an independent progress pointer for each variable (to skip stuttering states).

We define first a function that returns the next stutter-free transition for each variable \(x\) from state \(\KState\) of a given Kripke structure.
Formally, for a Kripke structure \(\Kpke\mathop{=} (\KStates, \Prop, \KTransition,  \KLabel)\) over the set of variables \(\Prop\) with domain \(\domain\), the set with all worlds reachable from \(\KState\mathop{\in} \KStates\) with a transition to a world where \(x\mathop{\in}\Prop\) is not labeled with \(\lt\mathop{\in} \domain\cup\{\pad\}\) is:
\[N(\KState, \lt, x)\mathop{=}\{\KState'\ |\ \Async{\Kpke, \KState, \KState'}[x] \mathop{\in} \lt^+, (\KState', \KState'') \mathop{\in} \KTransition \tAnd \KLabel(\KState'',x) \neq \lt\},\]
where \(\Async{\Kpke, \KState, \KState'}\) is the set of unzipped traces  defined by paths in \(\Kpke\) from \(\KState\) to \(\KState'\).
To account for variables not having more reachable worlds with a different value, in the derived stutter-free automaton, we add a set of terminated states: \(\KStates^{\pad} \mathop{=} \{\KState^{\pad}\,|\, \KState \mathop{\in} \KStates\}\).
Finally, the translated stutter-free automaton transition relation uses the function \(\nextA/3\) defined below, adding transitions to terminated states as needed.
Formally,  we define \(\nextA(\KState\mathop{\in} \KStates\cup \KStates^{\pad}, \lt\mathop{\in} \domain\cup\{\pad\}, x\mathop{\in}\Prop)\) as follows:
\begin{equation*}
   	\nextA(\KState, \lt, x) = 
   	\begin{cases}
   		\{\KState^{\pad}\} 	& \tIf \KLabel(\KState,x)\mathop{=}\lt \tAnd N(\KState, \lt, x) \mathop{=} \emptyset,\\
   		\{\KState\} 	& \tIf \lt\mathop{=}\pad \tAnd \KState \mathop{\in} \KStates^{\pad} ,\\
   		N(\KState, \lt, x) 	& \text{ otherwise.}
   	\end{cases}
\end{equation*}

\begin{definition}
Let \((\Kpke,\KpkeOp) \mathop{=} ((\KStates, \domain^\Prop, \KTransition,  \KLabel),(\KSIn, \KSOut))\) be an open Kripke structure with \(\Prop\mathop{=}\{x_1, \ldots, x_m\}\).
The \emph{stutter-free automaton induced by \((K,\KpkeOp)\)} is
\(\stutFreeA_{K,\KpkeOp} \mathop{=} (\States, \SInitial,\) \(\SFinal, \Transition)\), 
where:
\begin{itemize}
\item 
\(\States \mathop{=} (\KStates \cup \KStates^{\pad})^{m}\);
\item 
\(\SInitial \mathop{=}\{(\KState_1, \ldots, \KState_m)\,|\, \KState_i \mathop{\in} \KSIn \text{ for all } 1 \leq i \leq m \}\);
\item 
\(\SFinal \mathop{=} \{ (\KState_1, \ldots, \KState_{m}) \ |\ \KState_i \in \KSOut \cup \KSOut^{\pad} \text{ for all } 1 \leq i \leq m \}\); 
\item 
\(
\Transition((\KState_1, \ldots, \KState_{m}), v)
= \{(\KState_1', \ldots, \KState_{m}')\,|\,\KState_i'\in \nextA(\KState_i, v[i], x_i)\)  for all \(i\ \}\).
\end{itemize}
\end{definition}

We prove that the language of the stutter-free automaton $\stutFreeA_{\Kpke,\KpkeOp}$
\((\Kpke,\KpkeOp)\) is the same as the stutter reduction of the set of unzipped trace segments defined by \((\Kpke,\KpkeOp)\).
\begin{lemma}
\label{thm:Kripke_lang_Redux_Stutter_free_lang}
For all open Kripke structures \((\Kpke,\KpkeOp)\), we have 
\(\Lang(\stutFreeA_{\Kpke,\KpkeOp})|_{\pad} =  \Redux{\Async{\Kpke,\KpkeOp}}\).
\end{lemma}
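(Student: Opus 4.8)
\subsubsection*{Proof plan}

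The plan is to split \(\stutFreeA_{\Kpke,\KpkeOp}\) into its ``per-variable'' components and argue one variable at a time. By construction the state set \((\KStates\cup\KStates^{\pad})^{m}\), the initial and final sets, and the transition relation of \(\stutFreeA_{\Kpke,\KpkeOp}\) are the componentwise products of the corresponding data of the automata \(\stutFreeA^{x_i}\) with states \(\KStates\cup\KStates^{\pad}\), initial states \(\KSIn\), final states \(\KSOut\cup\KSOut^{\pad}\), and transition relation \(v\mapsto\nextA(\,\cdot\,,v,x_i)\); this is the same flavour of product used for self-composition. Dually, \(\Async{\Kpke,\KpkeOp}\) is itself a ``product over variables'': it consists of the unzipped trace segments \(\trace\) with \(\trace(x)\in\{\KLabel(\KState_0,x)\cdots\KLabel(\KState_n,x)\mid \KState_0\cdots\KState_n\in\KPaths(\Kpke,\KpkeOp)\}\) for every \(x\in\Prop\), and \(\Redux{\Async{\Kpke,\KpkeOp}}\) is the product, over \(x\), of the stutter reductions of those string sets. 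So the statement follows from a single-variable equality together with a reconciliation of the (generally different) lengths of the components using the termination symbol \(\pad\).

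The single-variable claim I would prove, by induction on the length of a run, is: there is a run \(\KState^{(0)}\lt_0\,\KState^{(1)}\lt_1\cdots\lt_{k-1}\KState^{(k)}\) of \(\stutFreeA^{x}\) from \(\KState^{(0)}\in\KSIn\) to \(\KState^{(k)}\in\KSOut\cup\KSOut^{\pad}\) if and only if there is a path \(p\in\KPaths(\Kpke,\KpkeOp)\) whose \(x\)-projection \(\KLabel(p_0,x)\cdots\KLabel(p_\ell,x)\) stutter-reduces to \(\lt_0\cdots\lt_{k-1}\) with trailing \(\pad\)'s removed. The forward direction reads each step \(\KState^{(j)}\xrightarrow{\lt_j}\KState^{(j+1)}\) through \(N(\KState^{(j)},\lt_j,x)\) as a maximal block of \(\Kpke\) on which \(x\) is constantly \(\lt_j\), and glues these blocks (plus, in the \(\KStates^{\pad}\) case, a stabilised tail, detected by \(N(\KState^{(j)},\lt_j,x)=\emptyset\), i.e.\ the first clause of \(\nextA\)) into a single path from an entry world to an exit world; the backward direction runs the pointer along \(p\), jumping at each value change of \(x\). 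One uses throughout that, by the clauses of \(\nextA\), a \(\pad\) can be read only once the run has entered \(\KStates^{\pad}\), and then it is read forever, so the \(\pad\)'s always form a suffix of the component word; this is exactly what makes \(\Lang(\stutFreeA^{x})|_{\pad}\) the set of stutter reductions of \(x\)-projections of \(\KPaths(\Kpke,\KpkeOp)\).

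To recombine the components: for \(\supseteq\), take \(\trace\in\Redux{\Async{\Kpke,\KpkeOp}}\), pick for each \(x_i\) a path \(P_i\in\KPaths(\Kpke,\KpkeOp)\) whose \(x_i\)-projection stutter-reduces to \(\trace(x_i)\) (chosen, where needed, so that the induced run of \(\stutFreeA^{x_i}\) ends in \(\KSOut^{\pad}\)), and extend each such run by reading \(\pad\)'s, using the \(\pad\)-self-loop on \(\KStates^{\pad}\) (the second clause of \(\nextA\)), up to the common length \(n=\max_i|\trace(x_i)|\); tupling these length-\(n\) runs gives an accepting run of \(\stutFreeA_{\Kpke,\KpkeOp}\) on an unzipped segment \(\trace'\) with \(\trace'|_{\pad}=\trace\). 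For \(\subseteq\), an accepting run of \(\stutFreeA_{\Kpke,\KpkeOp}\) projects to \(m\) accepting runs of the \(\stutFreeA^{x_i}\); turn each into a path \(P_i\in\KPaths(\Kpke,\KpkeOp)\) by the single-variable claim, assemble the unzipped segment \(\trace'\in\Async{\Kpke,\KpkeOp}\) with \(\trace'(x_i)\) the \(x_i\)-projection of \(P_i\), and note that \(\Redux{\trace'}\) equals the \(\pad\)-stripped accepted word, since stripping a \(\pad\)-suffix commutes with stutter reduction.

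The step I expect to be the real obstacle is the single-variable characterization, and inside it the bookkeeping at the two endpoints: that the first letter read from an entry world must equal that world's \(x\)-label, and --- the genuinely delicate point --- that a component settles into a state of \(\KStates^{\pad}\) lying in \(\KSOut^{\pad}\) exactly when the current value of \(x\) persists along some path of the open Kripke structure to an exit world, which is what the emptiness test on \(N\) in the first clause of \(\nextA\) is meant to capture. Once this is in place, the product decomposition, the \(\pad\)-padding to a common length, and the commutation of \(|_{\pad}\) with \(\Redux{\cdot}\) (which rests on the termination invariant of Definition~\ref{def:stut_free_aut}) are all routine.
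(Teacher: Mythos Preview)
Your proposal is correct and follows essentially the same route as the paper's own proof: both argue per variable (the paper says ``independently for each variable \(x_i\)''), both do an induction along the run that unfolds the three clauses of \(\nextA\) to glue together maximal constant-\(x\) blocks of Kripke paths, and both reconstruct a Kripke path from a stutter-free trace by removing stutter steps. Your version is somewhat more explicit in making the product decomposition \(\stutFreeA_{\Kpke,\KpkeOp}\cong\prod_i\stutFreeA^{x_i}\) and the \(\pad\)-padding to a common length first-class, and you are right to flag the endpoint bookkeeping (landing in \(\KSOut^{\pad}\) via the \(N=\emptyset\) clause) as the step that carries the actual content; the paper's proof handles the same point more tersely under its cases (ii)/(iii).
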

\begin{proof}
\(\Lang(\stutFreeA_{\Kpke,\KpkeOp})|_{\pad} \subseteq  \Redux{\Async{\Kpke,\KpkeOp}}:\)
Let \(\trace\) be an unzipped trace segment over \((\domain^{ \Prop})^*\) accepted by \(\stutFreeA_{\Kpke,\KpkeOp}\).
Then, there exists an accepting \(\stutFreeA_{\Kpke,\KpkeOp}\) run \(\std_0 \lt_0 \std_1 \lt_1 \ldots \lt_{n} \std_{n+1}\) where \(\lt_i = \trace[i]\), for \(i\ge n\).
We prove that the property holds by induction on the size of run independently for each variable  \(x_i\) with \(i\leq m\).
By definition of \(\nextA\), we prove that for the current step \(j\leq n\) -- \(\std_{j}[i]\ \lt_{j}[i]\ \std_{j+1}[i]\) -- either (i) \(N(\std_{j}[i], \lt_{j}[i], x_i) \mathop{\neq} \emptyset\) and then there exists a path \(\Kpath\) from \(\std_{j}[i]\) to \(\std_{j+1}[i]\) where the value of \(x_i\) remains \(\lt_{j}[i]\); (ii) \(N(\std_{j}[i], \lt_{j}[i], x_i) \mathop{=} \emptyset\) and the next state is terminated (\(\std_{j+1}[i] \mathop{=} \std_{j}^{\pad}[i] \)); or 
(iii) \(\std_{j}[i]\) is a terminated stated and so \(\lt_{j}[i] \mathop{=} \pad\) and \(\std_{j+1}[i] \mathop{=} \std_{j+1}^{\pad}[i]\).
Using this we build a path in the Kripke structure by extending the partial path given by the stutter-free automata with \(\Kpath\) when (i)  holds and not doing anything when (ii) and (iii) holds.
Note that we made sure to include the terminated exited paths in the final states of the stutter-free automaton, guaranteeing the the accepting condition matches the definition of a complete path over an open Kripke structure.
When we stutter reduce the new path we get \(\trace[x_i]\).

\(\Lang(\stutFreeA_{\Kpke,\KpkeOp}) \supseteq  \Redux{\Async{\Kpke,\KpkeOp}}:\) Let \(\trace\) be an unzipped stutter-free trace segment in \linebreak
\(\Redux{\Async{\Kpke,\KpkeOp}}\).
Then, there exists a path in \((\Kpke,\KpkeOp)\) that generates that trace segment.
From that path, for each variable, we build a sequence of wolds that removes all stutter transitions.
We prove then, by induction, that that sequence defines an accepting run of \(\stutFreeA_{\Kpke,\KpkeOp}\).
\end{proof}

\subsubsection*{\bf Model checking hypernode logic over Kripke structures}
With the reduction of open Kripke structures to stutter-free automata we can now use the filtration from Definition \ref{def:filter} to solve the model-checking problem for hypernode logic.

\begin{theorem}
\label{thm:mc_hypernode_logic:openKS}
Let \((K,\KpkeOp)\) be an open Kripke structure, and \(\varphi\) a formula of hypernode logic over the same set of variables.  Let $n$ be the number of trace variables in $\varphi$.
Then, \(\Async{K,\KpkeOp}  \models \varphi\) iff \(\Lang(\filter{\varphi}{+}{\stutFreeA_{K,\KpkeOp}^n})\neq \emptyset\).
\end{theorem}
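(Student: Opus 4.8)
The plan is to chain together the two main results already established: Lemma~\ref{thm:Kripke_lang_Redux_Stutter_free_lang}, which relates the stutter-free automaton $\stutFreeA_{K,\KpkeOp}$ to the open Kripke structure, and Theorem~\ref{thm:mc_hypernode_logic}, which relates positive filtration of a stutter-free automaton to the models of a hypernode formula. The intermediate glue is Proposition~\ref{thm:minimal_set_traces}, which says that hypernode logic cannot distinguish a set of unzipped trace segments from its stutter reduction.

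First I would unwind the definitions. By definition, $\Async{K,\KpkeOp}$ is the set of unzipped trace segments generated by $(K,\KpkeOp)$; we must decide whether it is a model of $\varphi$. By Proposition~\ref{thm:minimal_set_traces}, $\Async{K,\KpkeOp}\models\varphi$ iff $\Redux{\Async{K,\KpkeOp}}\models\varphi$. Next, by Lemma~\ref{thm:Kripke_lang_Redux_Stutter_free_lang} we have $\Lang(\stutFreeA_{K,\KpkeOp})|_{\pad}=\Redux{\Async{K,\KpkeOp}}$, so the condition becomes $\Lang(\stutFreeA_{K,\KpkeOp})|_{\pad}\models\varphi$. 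Since $\stutFreeA_{K,\KpkeOp}$ is a stutter-free automaton, $\Lang(\stutFreeA_{K,\KpkeOp})|_{\pad}=\Redux{\Lang(\stutFreeA_{K,\KpkeOp})|_{\pad}}$, and the semantics of hypernode logic over $\Lang(\stutFreeA_{K,\KpkeOp})$ (a set of stutter-free unzipped trace segments) is exactly captured. Finally, applying Theorem~\ref{thm:mc_hypernode_logic} with $\stutFreeA\mathrel{:=}\stutFreeA_{K,\KpkeOp}$ and using that $\varphi$ has $n$ trace variables, $\Lang(\stutFreeA_{K,\KpkeOp})\models\varphi$ iff $\Lang(\filter{\varphi}{+}{\stutFreeA_{K,\KpkeOp}^n})\neq\emptyset$. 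Composing these equivalences yields the claim.

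The one point requiring a little care — and the place I expect the main (minor) obstacle — is checking that ``being a model of $\varphi$'' is applied to the same object in Lemma~\ref{thm:Kripke_lang_Redux_Stutter_free_lang}, Proposition~\ref{thm:minimal_set_traces}, and Theorem~\ref{thm:mc_hypernode_logic}: the satisfaction relation $\setTraces\models\varphi$ for hypernode logic is defined over sets of unzipped trace segments, Theorem~\ref{thm:mc_hypernode_logic} speaks of $\Lang(\stutFreeA)\models\varphi$ where $\Lang(\stutFreeA)$ is (implicitly via $|_{\pad}$) such a set, and Lemma~\ref{thm:Kripke_lang_Redux_Stutter_free_lang} produces exactly $\Lang(\stutFreeA_{K,\KpkeOp})|_{\pad}$. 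I would make explicit the reading that $\Lang(\stutFreeA)\models\varphi$ in Theorem~\ref{thm:mc_hypernode_logic} means $\Lang(\stutFreeA)|_{\pad}\models\varphi$, so that the substitution $\stutFreeA\mathrel{:=}\stutFreeA_{K,\KpkeOp}$ is literally legitimate; everything else is a bookkeeping chain of ``iff''s with no further combinatorial content.
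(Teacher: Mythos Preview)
Your proposal is correct and matches the paper's own proof, which is the one-line remark that the result follows from Proposition~\ref{thm:minimal_set_traces}, Theorem~\ref{thm:mc_hypernode_logic}, and Lemma~\ref{thm:Kripke_lang_Redux_Stutter_free_lang}. Your chain of equivalences and the explicit remark about reading $\Lang(\stutFreeA)\models\varphi$ as $\Lang(\stutFreeA)|_{\pad}\models\varphi$ are exactly the bookkeeping the paper leaves implicit.
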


The proof follows from Theorem 1 and 2, and Lemma 3.
This gives us our main result.

\begin{theorem}
\label{thm:hypernode_logic:decidable}
Model checking of hypernode logic over open Kripke structures is decidable.
\end{theorem}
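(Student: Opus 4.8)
The plan is to observe that all the substantive work has already been done and that this theorem is the ``effectiveness wrap-up''. By Theorem~\ref{thm:mc_hypernode_logic:openKS}, the question $\Async{K,\KpkeOp} \models \varphi$ is equivalent to the emptiness question $\Lang(\filter{\varphi}{+}{\stutFreeA_{K,\KpkeOp}^n}) \neq \emptyset$, where $n$ is the (computable) number of trace variables of $\varphi$. Hence it suffices to argue that (i) the automaton $\filter{\varphi}{+}{\stutFreeA_{K,\KpkeOp}^n}$ can be \emph{effectively constructed} as a finite stutter-free automaton from the inputs $(\Kpke,\KpkeOp)$ and $\varphi$, and (ii) emptiness of a finite stutter-free automaton is decidable.

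For (i) I would proceed in three stages. First, $\stutFreeA_{K,\KpkeOp}$ is effectively constructible: its state set $(\KStates \cup \KStates^{\pad})^m$ is finite, and its transition relation is given through the auxiliary maps $N(\cdot,\cdot,\cdot)$ and $\nextA(\cdot,\cdot,\cdot)$, which are computable from the finite presentation of $(\Kpke,\KpkeOp)$ (note that $N(\KState,\lt,x)$ only asks a reachability question in a finite graph along $x$-stuttering worlds). Second, the $n$-fold self-composition $\stutFreeA_{K,\KpkeOp}^n$ is the standard synchronous product with variable renaming, again a finite, effectively constructible stutter-free automaton. Third, I would show by structural induction on $\varphi$ that $\filter{\varphi}{+}{\cdot}$ and $\filter{\varphi}{-}{\cdot}$ are computable operations that always return a finite stutter-free automaton: the base case $x(\traceVar)\Rel y(\traceVar')$ intersects with the fixed finite automaton $\stutFreeA_{x_\traceVar \Rel y_{\traceVar'}}$ (or its complement), and the inductive cases for $\wedge$, $\neg$, and $\exists$ use only union, intersection, complement, determinization, and completion, all of which are effective and preserve the class of finite stutter-free automata by Propositions~\ref{thm:suttfree:bi_boolean_closure}, \ref{thm:complete_stutt}, and~\ref{prop:stutter_free_compl}. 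Since the recursion strictly decreases the syntactic structure of $\varphi$, the filtration terminates.

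For (ii), a stutter-free automaton is in particular a finite NFA, and emptiness for NFAs is decidable (search for a reachable final state). Assembling: compute $n$, build $\stutFreeA_{K,\KpkeOp}^n$, apply $\filter{\varphi}{+}{\cdot}$, and test the result for emptiness; by Theorem~\ref{thm:mc_hypernode_logic:openKS} this decides $\Async{K,\KpkeOp} \models \varphi$. I do not expect a genuine obstacle: the only points that deserve an explicit sentence are that complementation stays within finite automata (the determinization step being the sole source of an exponential blow-up) and that the filtration recursion is well-founded on the structure of $\varphi$. One may additionally remark that propagating the state-count bounds through these constructions yields the doubly-exponential-in-$|\Prop|$ bound announced in the introduction, but this is not needed for decidability.
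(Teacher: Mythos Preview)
Your proposal is correct and follows exactly the paper's approach: the paper derives Theorem~\ref{thm:hypernode_logic:decidable} directly from Theorem~\ref{thm:mc_hypernode_logic:openKS} (essentially with the single sentence ``This gives us our main result''), leaving the effectiveness of the constructions implicit. Your write-up simply makes explicit the effectiveness of each step (building $\stutFreeA_{K,\KpkeOp}$, self-composition, filtration, and NFA emptiness), which is a welcome elaboration but not a different argument.
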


Using our algorithm, 
the running time of model checking a formula of hypernode logic over an open Kripke structure
depends doubly exponentially on the number of variables,  
singly exponentially on the number of worlds of the Kripke structure, 
and singly exponentially on the length of the formula.

\begin{corollary}
\label{thm:upper_bound_mc_hypernonde}
The time complexity of model checking a formula $\varphi$ of hypernode logic with 
$n$ trace variables and $m$ variables, 
over an open Kripke structure with $k$ worlds,
is \(\mathcal{O}(2^{n\cdot k^m}).\)
\end{corollary}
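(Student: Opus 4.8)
The plan is to read the running time off the pipeline that proves Theorem~\ref{thm:mc_hypernode_logic:openKS}: from $(\Kpke,\KpkeOp)$ and $\varphi$ we (i)~build the induced stutter-free automaton $\stutFreeA_{\Kpke,\KpkeOp}$ (Lemma~\ref{thm:Kripke_lang_Redux_Stutter_free_lang}), (ii)~form its $n$-fold self-composition $\stutFreeA_{\Kpke,\KpkeOp}^n$, (iii)~compute the positive filtration $\filter{\varphi}{+}{\stutFreeA_{\Kpke,\KpkeOp}^n}$ by the inductive rules of Definition~\ref{def:filter}, and (iv)~test the resulting automaton for emptiness. Since every automaton operation involved --- synchronous product, union, determinization, completion, complementation, and emptiness --- runs in time polynomial in the size of its inputs (Propositions~\ref{thm:suttfree:bi_boolean_closure}, \ref{thm:complete_stutt}, \ref{prop:stutter_free_compl}), it suffices to bound the number of states produced at each stage and argue that the total is dominated by a single exponential of the shape claimed.

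First I would bound $\stutFreeA_{\Kpke,\KpkeOp}$. By construction its state set is $(\KStates\cup\KStates^{\pad})^m$, so it has at most $(2k)^m$ states: one ``progress pointer'' per program variable, ranging over the $k$ worlds of $\Kpke$ and their $k$ terminated copies. Computing a transition reduces to evaluating $\nextA$, which needs the reachability sets $N(w,\ell,x)$; each is a reachability query in $\Kpke$ and is computable in time polynomial in $k$. The crucial move is to determinize this automaton once, up front, via $\deter{\stutFreeA_{\Kpke,\KpkeOp}}$: this costs a single exponential, $2^{(2k)^m}$ states, and is the only determinization performed in the whole procedure. Determinizing here rather than inside the filtration is what keeps the bound singly exponential over $k^m$: it is the source of the ``$2^{(\cdot)}$'' and of the double exponential dependence on the number of variables, while nothing downstream adds a second exponential.

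With a deterministic, complete automaton in hand, the remaining stages stay in regime. The self-composition multiplies state counts across the $n$ renamed copies, giving $\bigl(2^{(2k)^m}\bigr)^n = 2^{\,n\cdot(2k)^m}$ states, i.e.\ $\mathcal{O}(2^{\,n\cdot k^m})$ after absorbing the $2^{nm}$ base factor. For the filtration I would then show by induction on $\varphi$ that $\filter{\varphi}{\pm}{\stutFreeA_{\Kpke,\KpkeOp}^n}$ has size at most $2^{\,n\cdot(2k)^m}$ times a factor depending only on $|\varphi|$ and $|\domain|$: the automata $\stutFreeA_{x_\traceVar\Rel y_{\traceVar'}}$ and their complements have size bounded solely in $m$ and $|\domain|$ (independent of $k$ and $n$); the boolean cases and the ``pass-through'' quantifier cases ($\exists$-positive, $\forall$-negative) merely recombine automata already built; and because every polarity-flipping negation is free in Definition~\ref{def:filter}, the set-differences arising from $\exists$-negative / $\forall$-positive steps are complementations of deterministic automata, which do not require a further determinization. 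Emptiness is then linear in the final automaton, so the three bounds combine to the stated running time $\mathcal{O}(2^{\,n\cdot k^m})$. The main obstacle is exactly this last inductive bookkeeping: one must be careful that all complementations inside the filtration act on automata that have remained deterministic (so quantifier alternation, which would otherwise stack exponentials, costs nothing extra) and that the repeated products with the atomic-predicate automata and the $n$-fold self-composition are accounted for only once in the exponent.
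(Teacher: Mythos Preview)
Your proposal takes essentially the same route as the paper: bound the state count along the pipeline (open Kripke structure $\to$ induced stutter-free automaton with $\mathcal{O}(k^m)$ states $\to$ determinize to $\mathcal{O}(2^{k^m})$ $\to$ complete $\to$ $n$-self-compose to $\mathcal{O}(2^{n\cdot k^m})$), and read the running time off that. The paper's own proof is in fact terser than yours---a four-line state count that stops at the self-composition and does not discuss the cost of the filtration at all---so your inductive bookkeeping on $\filter{\varphi}{\pm}{\cdot}$ and your insistence on determinizing once up front already go beyond what the paper argues.
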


\begin{proof}
The encoding of the open Kripke Structure by a stutter-free automaton has 
\(\mathcal{O}(k^m)\) states.
The determinized stutter-free automaton has 
\(\mathcal{O}(2^{k^m})\) states.
Completing the deterministic stutter-free automaton $2^m$ states.
The $n$-self-composition of the resulting automaton has 
\(\mathcal{O}(2^{n\cdot k^m})\) states.
\end{proof}

  \subsection{Model Checking Hypernode Automata}
  \label{sub:mc_hypernode_automata}

We defined the run of a hypernode automaton for a given action sequence $\actpat$, with each run inducing a slicing of a set of action-labeled traces consistent with $p$. 
To model-check a hypernode automaton \(\fhnaut\) against a pointed Kripke structure \((\Kpke, \ActLab, \KState_0)\) with an action labeling,  we build a finite automaton, called \(\Slice{\Kpke,\ActLab,\KState_0}\), which encodes all slicings of action-labeled traces generated by \((\Kpke, \ActLab, \KState_0)\).
We then reduce the model-checking problem to checking whether the language defined by the composition of \(\Slice{\Kpke,\ActLab,\KState_0}\) with the specification automaton \(\fhnaut\), called \(\Intersect(\fhnaut, K,\ActLab, \KState_0)\), is non-empty.
We depict an overview of the
model-checking algorithm in Fig. \ref{fig:mc_hypernode_aut}.

\begin{figure}
	\centering
	\scalebox{0.70}{ \input 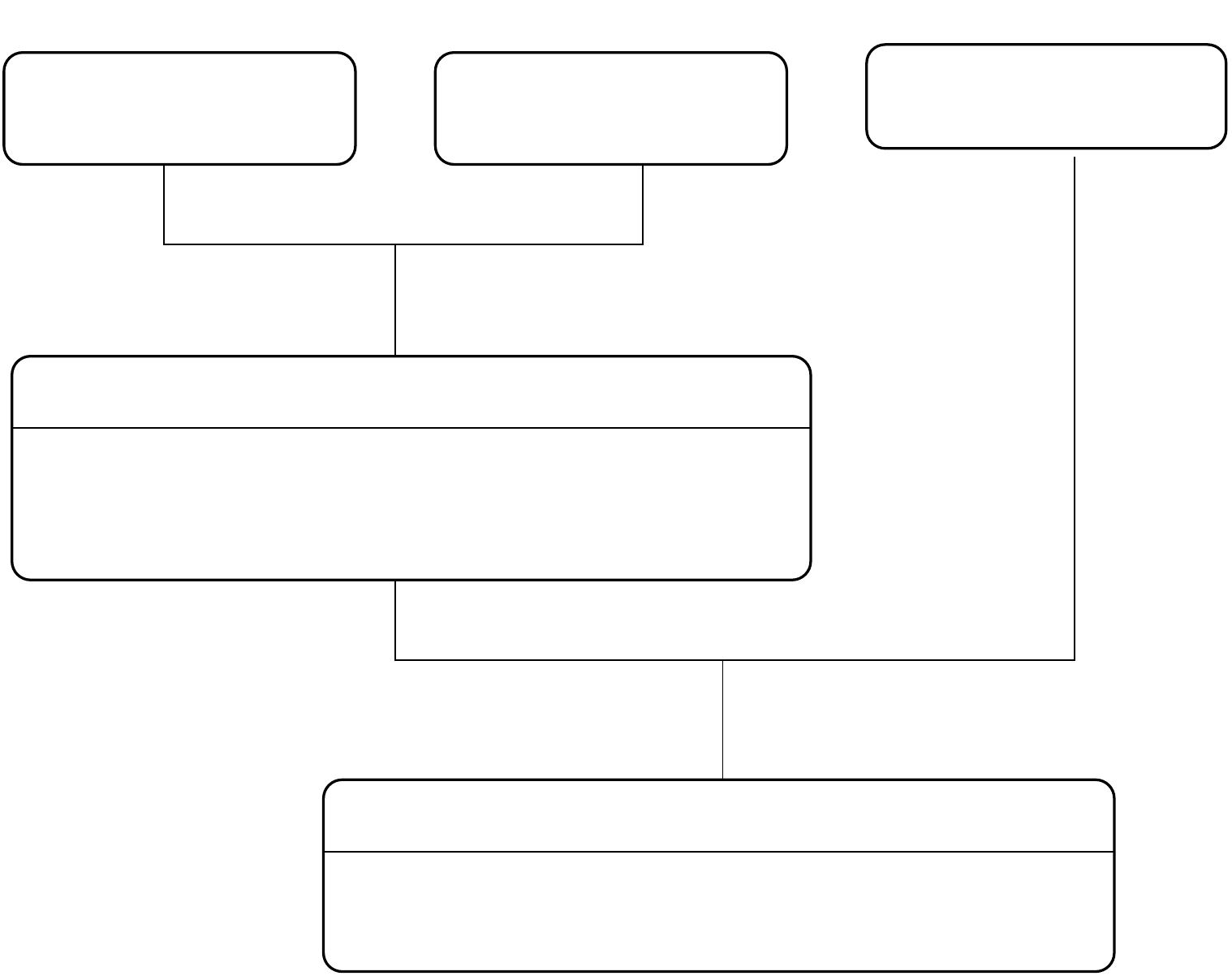_t }
	\caption{Model-checking algorithm for hypernode automata with relevant results.}
	\label{fig:mc_hypernode_aut}
\end{figure}

%
%

We start by defining the \emph{slicing} of a given Kripke structure 
\(\KSFull\)
for a given action labeling~\(\ActLab\).
The building blocks of the slicing are Kripke substructures.
A Kripke structure \(\Kpke'\mathop{=} (\KStates', \domain^{\Prop}, \KTransition',  \KLabel')\) is a 
\emph{substructure} of $K$, denoted \(\Kpke' \mathop{\leq} \Kpke\), iff 
\(\KStates' \mathop{\subseteq} \KStates\), and for all worlds \(\KState \mathop{\in} \KStates'\) we have \(\KTransition'(\KState) \mathop{\subseteq} \KTransition(\KState)\) and \(\KLabel'(\KState) \mathop{=} \KLabel(\KState)\).
The
\emph{substructure induced by a transition relation} \(\KTransition' \subseteq \KTransition\)
is  \(\Kpke[\KTransition'] \mathop{=} (W',\Prop, \KTransition', \KLabel(\KStates'))\),
where
$W'=\{\KState, \KState'\,|\, (\KState, \KState')\mathop{\in} \KTransition' \}$.
%
%
%
%
The transition relation defined by \emph{all transitions in a path} of the action-labeled 
Kripke structure \((K,\ActLab)\) from an entry world in \(\KSIn\mathop{\subseteq}\KStates\) to the first step labeled with  \(a \mathop{\in} \Actions\) is:
\vspace{-0.17cm}
\[(\Kpke, \ActLab, \KSIn) \mathop{\downarrow} a \mathop{=} \{(\KState_j, \KState_{j+1}) \,|\,  \KState_0 \varepsilon\, \ldots\,  \KState_{n-1} \varepsilon\,\KState_n a \mathop{\in} \KPaths(K,\ActLab), \KState_0 \mathop{\in} \KSIn \text{ for all } j < n\}.
\]
\vspace{-0.02cm}
\noindent The  
\emph{open substructure induced by \((\Kpke, \ActLab, \KSIn) \mathop{\downarrow} a\)}, 
written \(\mathbb{K}[(\Kpke, \ActLab, \KSIn) \mathop{\downarrow} a]\),
is the open Kripke structure where the Kripke structure is
 \(\Kpke[(\Kpke, \ActLab, \KSIn) \downarrow a]\), 
the set \(\KSIn\) are the entry worlds, and 
the set  
\(\{\KState\,|\, \KState\mathop{\in}\KStates \tAnd \ActLab(\KState, a)\neq \emptyset\}\)
are the exit worlds, 
containing all possible exit points for action \(a\). 

We define the finite automaton \(\Slice{\Kpke,\ActLab,\KState_0}\) and prove, in Lemma \ref{lemma:slicing} below,
that every finite action sequence $\actpat$ defines a unique path in this automaton, 
and the slices of this path contain the same trace segments that are obtained when 
the action sequence $\actpat$ is applied directly to the original pointed, action-labeled Kripke structure.
The states of the automaton \(\Slice{\Kpke,\ActLab,\KState_0}\) are all open substructures 
induced by paths from any choice of entry worlds to an action \(a\in \ActLab\).
Note that there are only finitely many such states.
The transition relation of \(\Slice{\Kpke,\ActLab,\KState_0}\) 
connects, for all actions \(a\), open substructures with exit $a$ and 
open substructures with matching entry worlds.

\begin{definition}
	\label{def:slicing_aut}
	Let \((\Kpke,\KState_0)\) be a pointed Kripke structure with worlds $W$, and let 
	$\ActLab$ be an action labeling for $K$ with actions $A$.
	The slicing \(\Slice{K,\ActLab,\KState_0} \mathop{=} (\States , \SInitial, \Transition)\)
	is a finite automaton where:
	\begin{itemize}
		\item \(\States = \{\mathbb{K}[(\Kpke, \ActLab, \KSIn)  \downarrow a] \ |\ a\in \Actions \tAnd \KSIn \subseteq \KStates\}\) is a set of states with initial states 
		\(\SInitial = \{(\Kpke,(\{\KState_0\},\KSOut)) \in \States\}\);
		\item  \(\Transition\As \States \times \Actions \rightarrow \States\) is a transition function
		s.t.\ \(\Transition((\Kpke, (\KSIn,\KSOut)), a) \mathop{=} (\Kpke', (\KSIn',\KSOut'))\) iff:
		\begin{itemize}
			\item \((\Kpke, (\KSIn,\KSOut))\) exits with action \(a\), that is, for all \(\KState \mathop{\in} \KSOut\) there exists \(\KState'\mathop{\in}\KStates\) such that \(a \mathop{\in} \ActLab(\KState,\KState')\); and 
			\item the set of
			entry worlds \(\KSIn'\) define a maximal subset of the worlds accessible with action \(a\) from the exit worlds in \(\KpkeOp\), that is, for all \((\Kpke'', (\KSIn'',\KSOut'')) \in \States\) that are different from \((\Kpke', (\KSIn',\KSOut'))\):
			if \(\KSIn''\mathop{\subseteq} \{\KState\,|\, a \mathop{\in} \ActLab(\KState',\KState)
			\text{ for some } w'\mathop{\in} \KSOut\}\), then \(\KSIn' \mathop{\not\subseteq} \KSIn''.\)
		\end{itemize} 
	\end{itemize}
\end{definition}

We remark that the transition function \(\Transition\As \States \times \Actions \rightarrow \States\) 
is well-defined, because there is a unique maximal subset for the next entry worlds, given an action \(a\). 
For every two open Kripke substructures, \((\Kpke,(\KSIn, \KSOut)))\) and  \((\Kpke',(\KSIn', \KSOut'))\), their union defines 
\(\mathbb{K}[(\Kpke, \ActLab, \KSIn \cup \KSIn')) \downarrow a]\), 
which is again a state of the slicing.

\begin{lemma}
	\label{lemma:slicing}
	Let \((\Kpke, \KState_0)\) be a pointed Kripke structure,
	and \(\ActLab\) an action labeling for $K$ with actions \(\Actions\).
	For every finite action sequence \(p = a_0 \ldots a_n\) in $\Actions^*$,
	if  \(\Sync{\Kpke, \ActLab, \KState_0}[p] \neq \emptyset\),  
	then $p$ defines a unique run \(\mathbb{K}_0 a_0 \cdots \mathbb{K}_n a_{n} \) of 
	\(\Slice{\Kpke,\ActLab, \KState_0}\) such that for all \(0 \leq i \leq n\),
	\(\KPaths(\mathbb{K}_i) \mathop{=} \KPaths(\Kpke,\ActLab,w_0)(a_0\ldots a_{i-1}, a_i)\).
\end{lemma}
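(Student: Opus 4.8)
The plan is to prove the statement by induction on $n$, the length of $p$, establishing simultaneously that $p$ induces a well-defined run, that this run is the only one, and that each of its states carries the asserted set of paths. Throughout I use that a state of $\Slice{\Kpke,\ActLab,\KState_0}$ is an open substructure $\mathbb{K}[(\Kpke, \ActLab, I) \downarrow a]$ whose paths are exactly the $\varepsilon$-labeled path segments of $\Kpke$ that begin in an entry world from $I$ and end in an ($\varepsilon$-reachable) world carrying an outgoing $a$-transition.

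\emph{Base case $n = 0$.} Then $p = a_0$ and, in a run $\mathbb{K}_0 a_0$, the state $\mathbb{K}_0$ preceding $a_0$ must be the initial state built with exit action $a_0$, i.e.\ $\mathbb{K}_0 = \mathbb{K}[(\Kpke, \ActLab, \{\KState_0\}) \downarrow a_0]$; this is forced, so the run is unique. Since $\Sync{\Kpke,\ActLab,\KState_0}[p] \neq \emptyset$, some path from $\KState_0$ has an action sequence beginning with $a_0$, so $\mathbb{K}_0$ is non-degenerate and does exit with $a_0$. Unfolding the definitions of $(\Kpke, \ActLab, \{\KState_0\}) \downarrow a_0$ and of the slicing operator $\atrace(\varnothing, a_0)$, a segment is a path of $\mathbb{K}_0$ iff it is the $0$-th slice of a path in $\KPaths(\Kpke,\ActLab,\KState_0)$ whose action sequence starts with $a_0$; that is, $\KPaths(\mathbb{K}_0) = \KPaths(\Kpke,\ActLab,\KState_0)(\varnothing, a_0)$.

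\emph{Inductive step.} Let $p = a_0 \ldots a_{n+1}$ with $\Sync{\Kpke,\ActLab,\KState_0}[p] \neq \emptyset$. Its prefix $p' = a_0 \ldots a_n$ also satisfies $\Sync{\Kpke,\ActLab,\KState_0}[p'] \neq \emptyset$, so by the induction hypothesis it defines the unique run $\mathbb{K}_0 a_0 \cdots \mathbb{K}_n a_n$ with $\KPaths(\mathbb{K}_i) = \KPaths(\Kpke,\ActLab,\KState_0)(a_0 \ldots a_{i-1}, a_i)$ for all $i \le n$. Since some path from $\KState_0$ realises $a_0 \ldots a_{n+1}$, its $n$-th slice is, by the induction hypothesis, a path of $\mathbb{K}_n$, so the exit worlds of $\mathbb{K}_n$ are nonempty; as each of them carries, by construction, an $a_n$-transition, $\mathbb{K}_n$ exits with $a_n$ and $\Transition(\mathbb{K}_n, a_n)$ is defined. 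Set $\mathbb{K}_{n+1} := \Transition(\mathbb{K}_n, a_n)$. By Definition~\ref{def:slicing_aut}, the entry worlds of $\mathbb{K}_{n+1}$ form the $\subseteq$-maximal set of worlds $a_n$-accessible from the exit worlds of $\mathbb{K}_n$; since $\Slice{\Kpke,\ActLab,\KState_0}$ contains a state for every set of entry worlds, this is exactly the set of worlds reached immediately after an $a_n$-transition out of an exit world of $\mathbb{K}_n$. Determinism of $\Transition$ makes $\mathbb{K}_{n+1}$ the unique extension of the run. It remains to check $\KPaths(\mathbb{K}_{n+1}) = \KPaths(\Kpke,\ActLab,\KState_0)(a_0 \ldots a_n, a_{n+1})$. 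For the inclusion $\subseteq$, a path of $\mathbb{K}_{n+1}$ is an $\varepsilon$-segment from an entry world $\KState$ of $\mathbb{K}_{n+1}$ to an $a_{n+1}$-source; by the characterisation of those entry worlds and the induction hypothesis on $\mathbb{K}_n$, the world $\KState$ is reached from $\KState_0$ by a path whose action sequence is $a_0 \ldots a_n$ followed by the $a_n$-transition into $\KState$, so prepending this prefix (and appending any continuation) exhibits the segment as the $(n{+}1)$-th slice of a path from $\KState_0$ with action sequence starting $a_0 \ldots a_{n+1}$. For the inclusion $\supseteq$, the $(n{+}1)$-th slice of such a path starts at the world reached immediately after its $a_n$-step; by the induction hypothesis the preceding $n$-th slice is a path of $\mathbb{K}_n$, hence ends at an exit world of $\mathbb{K}_n$, so this start world is an entry world of $\mathbb{K}_{n+1}$, and the slice then proceeds only along $\varepsilon$-transitions to an $a_{n+1}$-source, hence is a path of $\mathbb{K}_{n+1}$.

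\emph{Main obstacle.} The boolean and definitional unfoldings are routine; the crux is the inductive step's equality of path sets, which hinges on aligning the slicing operator on paths with the ``one step ahead'' entry-world computation performed by $\Transition$ — in particular the identity between the entry worlds of $\mathbb{K}_{n+1}$ and the $a_n$-successors of the exit worlds of $\mathbb{K}_n$. The hypothesis $\Sync{\Kpke,\ActLab,\KState_0}[p] \neq \emptyset$ is used at every index to keep the construction non-degenerate, so that each $\mathbb{K}_i$ has a path, genuinely exits with $a_i$, and every transition of $\Slice{\Kpke,\ActLab,\KState_0}$ along the run is defined — hence the run exists. A little care with the conventions on open substructures is also needed for length-one slices, where the action $a_i$ already fires at an entry world.
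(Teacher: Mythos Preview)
Your proposal is correct and follows essentially the same approach as the paper: induction on the length of the action sequence, with uniqueness coming from determinism of the slicing automaton's transition function and the path-set equality obtained by unfolding the definitions of $(\Kpke,\ActLab,\KSIn)\downarrow a$ and of the slicing operator. Your write-up is more explicit than the paper's---you argue both inclusions of the path-set equality separately and track the non-degeneracy hypothesis along the run---but the structure of the argument is the same.
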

\begin{proof}
	Consider an arbitrary Kripke structure \(\Kpke \mathop{=} (\KStates, \domain^\Prop, \KTransition,  \KLabel)\), world  \(\KState_0 \in \Kpke\) and action labeling \({\ActLab:(\KStates \times \Actions) \rightarrow \KStates}\).
	From the transition function of \(\Slice{\ActLab(\Kpke, \KState_0)}\) being deterministic, it follows that all action sequences \(\actpat \in \Actions^*\) in \((\Kpke, \KState_0)\) with labeling \(\ActLab\), i.e.\ \({\Sync{\ActLab(\Kpke, \KState_0)}[p] \neq \emptyset}\),  define an unique path in \(\Slice{\ActLab(\Kpke, \KState_0)}\).
	
	We still need to prove that paths defined by a slice are the same as slicing the paths generated by \(\ActLab(\Kpke, \KState_0)\), which we prove by induction on the size of the sequence.
	For the base case, for all sequence actions of size~1, \(a_0\), the induced path in \(\Slice{\ActLab(\Kpke, \KState_0)}\) is \(\mathbb{K}_0\), then we need to prove that
	\(\KPaths(\mathbb{K}_0) = \KPaths(\ActLab(\Kpke))(\varnothing, a_0)\).
	By Definition \ref{def:slicing_aut},
	\(\KPaths(\mathbb{K}_0) = \KPaths((\mathbb{K}[(\Kpke, \{\KState_0\}) \toAct a_0]))\).
	And, by definition open substructure induced by \(a\),
	\(\KPaths(\mathbb{K}[(\Kpke, \{\KState_0\}) \toAct a_0]) = \{\KState_0\!\ldots\! \KState_n \, |\,
	(\KState_0, \varepsilon)\! \ldots\!  (\KState_{n-1}, \varepsilon)(\KState_n, a_0)\in\! \KPaths(\Kpke) \}.\)
	Thus, \(\KPaths(\mathbb{K}[(\Kpke, \{\KState_0\}) \toAct a_0]) =\KPaths(\ActLab(\Kpke))(\varnothing, a_0)\).
	
	Now for the induction step, we assume as induction hypothesis (IH) that the statement holds for sequences of size \(n\). 
	Consider now a sequence of size \(n+1\), \(a_0 \ldots a_{n}\). By IH, we know that \(\KPaths(\mathbb{K}_i) = \KPaths(\ActLab(\Kpke))(a_0\ldots a_{i-1}, a_i)\) for all \(0 \leq i < n\).
	We are only missing to prove that \(\KPaths(\mathbb{K}_n) = \KPaths(\ActLab(\Kpke))(a_0\ldots a_{n-1}, a_n)\).
	By IH, we know that \(\KPaths(\mathbb{K}_{n-1})\) and 
	\(\KPaths(\ActLab(\mathbb{K}))(a_0\ldots a_{n-2}, a_{n-1})\) have the same terminal states. Then, it follows that \(\KPaths(\mathbb{K}_{n})\) and 
	\(\KPaths(\ActLab(\mathbb{K}))(a_0\ldots a_{n-1}, a_{n})\) have the same initial states \(\KSIn\). And, from an analogous reasoning from the base case, 
	\(\KPaths(\mathbb{K}_{n}) = \KPaths(\mathbb{K}[(\Kpke, \KSIn)\toAct a_n])= \KPaths(\ActLab(\Kpke))(a_0\ldots a_{n-1}, a_{n})\)
\end{proof}

In a final step, we define a synchronous composition of the slicing automaton defined above 
and the given hypernode automaton $\fhnaut$.
The states of this composition are pairs consisting of open Kripke substructures 
(stemming from the given pointed, action-labeled Kripke structure) 
and formulas of hypernode logic 
(stemming from the hypernode labels of $\fhnaut$).
We mark as final states all pairs where the open Kripke substructure is not a model of the 
hypernode formula.

\begin{definition}
	Let \(\fhnaut\!=\!(\States_h, \initstd, \SLabel, \Transition_h)\) be an hypernode automaton. 
	The \emph{intersection of \(\fhnaut\) with the slicing of a pointed, action-labeled Kripke structure} 
	\((\Kpke, \ActLab, \KState_0)\),
	\(\Slice{K,\ActLab, \KState_0}\! =\! (\States_s , \SInitial_s, \SFinal_s,\Transition_s)\), is the 
	finite automaton
	\(\Intersect(\fhnaut, K, \ActLab, \KState_0)\!\!=\!\!(\States, \SInitial,\SFinal, \Actions, \Transition)\) 
	with
	\begin{itemize}
		\item states \(\States = \{ (\mathbb{K}, \std) \, |\, \mathbb{K} \in \States_s \tAnd {\std \in \States_h} \tAnd \KpkeOp \models \SLabel(\std) \} \  \cup\  \{(\mathbb{K}, \overline{\std}) \, |\, \mathbb{K} \in \States_s \tAnd {\std \in \States_h} \tAnd\linebreak \mathbb{K} \not\models \SLabel(\std)\}\);
		\item initial state \(\SInitial = \{(\mathbb{K}, \initstd) \in \States\ |\ \mathbb{K} \in \SInitial_s \} \cup \{(\mathbb{K}, \overline{\initstd}) \in \States\ |\ \mathbb{K} \in \SInitial_s \}\);
		\item final state \(\SFinal = \{(\mathbb{K}, \overline{\std}) \,|\, (\mathbb{K}, \overline{\std})\in \States\}\);
		\item transition function \(\Transition: \States \times \Actions \rightarrow \States\), where  for all \((\mathbb{K}, \std) \in \States\), we have \(\Transition((\mathbb{K}, \std), a) = \{(\mathbb{K}', \std') \in \States\, |\,  \Transition(\std)= (\std',a)   \tAnd \mathbb{K}' \in \Transition_s(\KpkeOp,a)\} \).
	\end{itemize}
\end{definition}

The finite automaton \(\Intersect(\fhnaut, K, \ActLab, \KState_0)\) reads sequences of actions. 
The notion of run is defined as usual, 
and a run is accepting if it ends in a final state.
The language of the automaton is empty iff it has no accepting run.

\begin{theorem}
\label{thm:mc_hypernode_aut}
Let \((\Kpke,w_0)\) be a pointed Kripke structure with action labeling $\ActLab$.
Let \(\fhnaut\) be a hypernode automaton over the same set of propositions and actions
as $(\Kpke,\ActLab)$.
Then,
\({\Sync{K,\ActLab, \KState_0} \mathop{\in} \Lang(\fhnaut)}\) iff  the language of 
the finite automaton \(\Intersect(\fhnaut, K, \ActLab,  \KState_0)\) is empty.
\end{theorem}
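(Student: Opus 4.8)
The plan is to turn the statement ``$\Sync{K,\ActLab,\KState_0}\in\Lang(\fhnaut)$'' directly into a statement about runs of the composition automaton $\Intersect(\fhnaut, K, \ActLab, \KState_0)$. Concretely, I would show that $\Intersect(\fhnaut, K, \ActLab, \KState_0)$ has an accepting run if and only if there is a finite action sequence $p$ with $\Sync{K,\ActLab,\KState_0}[p]\neq\emptyset$ and a slice of $\Sync{K,\ActLab,\KState_0}$ induced by $p$ that is \emph{not} a model of the hypernode formula attached to the corresponding node of $\fhnaut[p]$; by Definition~\ref{def:acceptance_hna} this is exactly the negation of ``$\Sync{K,\ActLab,\KState_0}\in\Lang(\fhnaut)$'', so the language is empty iff $\Sync{K,\ActLab,\KState_0}\in\Lang(\fhnaut)$.

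First I would describe the runs of $\Intersect(\fhnaut, K, \ActLab, \KState_0)$. By Lemma~\ref{lemma:slicing}, every finite action sequence $p$ with $\Sync{K,\ActLab,\KState_0}[p]\neq\emptyset$ labels a unique run of the slicing automaton $\Slice{K,\ActLab,\KState_0}$, whose successive states are open substructures $\mathbb{K}_0,\mathbb{K}_1,\ldots$ that generate, through $\Async{\cdot}$, precisely the successive slices of $\Sync{K,\ActLab,\KState_0}$ induced by $p$; conversely, the sequences that label a run of $\Slice{K,\ActLab,\KState_0}$ are exactly those with $\Sync{K,\ActLab,\KState_0}[p]\neq\emptyset$. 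Since $\fhnaut$ is deterministic and total, $p$ also drives the unique run $\std_0 a_0\std_1 a_1\cdots$ of $\fhnaut$. Now, by the definition of $\Intersect(\fhnaut, K, \ActLab, \KState_0)$ and the fact that for each pair $(\mathbb{K},\std)$ exactly one of $(\mathbb{K},\std)$ and $(\mathbb{K},\overline{\std})$ is a state (because $\Async{\mathbb{K}}\models\SLabel(\std)$ either holds or fails), a straightforward induction on $|p|$ shows that $p$ drives a run of $\Intersect(\fhnaut, K, \ActLab, \KState_0)$ whose state at position $i$ is $(\mathbb{K}_i,\std_i)$ when $\Async{\mathbb{K}_i}\models\SLabel(\std_i)$ and the barred state $(\mathbb{K}_i,\overline{\std_i})$ otherwise. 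Consequently this run ends in a final state exactly when the slice generated by $\mathbb{K}_i$ fails the formula $\SLabel(\std_i)$ of its associated hypernode.

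Then I would close the loop with Definition~\ref{def:acceptance_hna}. Unfolding it, $\Sync{K,\ActLab,\KState_0}\notin\Lang(\fhnaut)$ holds iff there is some finite $p=a_0\ldots a_n$ with $\Sync{K,\ActLab,\KState_0}[p]\neq\emptyset$ and some $i\le n$ with $\Async{\Sync{K,\ActLab,\KState_0}[p](a_0\ldots a_{i-1},a_i)}\not\models\SLabel(\std_i)$. Identifying these slices with the substructures $\mathbb{K}_i$ via Lemma~\ref{lemma:slicing}, and using that the slice of index $i$ witnessed by $p$ is already determined once $a_0\ldots a_i$ is fixed, this is equivalent to the existence of a finite action sequence whose run in $\Intersect(\fhnaut, K, \ActLab, \KState_0)$ reaches a barred state, i.e.\ to the language of $\Intersect(\fhnaut, K, \ActLab, \KState_0)$ being non-empty. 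Taking the contrapositive gives the theorem. Finally, I would remark that the state set of $\Intersect(\fhnaut, K, \ActLab, \KState_0)$ is effectively computable: the only nontrivial point is to decide, for each open substructure $\mathbb{K}$ and each node $\std$, whether $\Async{\mathbb{K}}\models\SLabel(\std)$, which is exactly the problem solved by Theorem~\ref{thm:mc_hypernode_logic:openKS}; hence the procedure is effective.

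I expect the main obstacle to be making the correspondence in the second paragraph fully precise, i.e.\ the bookkeeping linking the run of $\Slice{K,\ActLab,\KState_0}$ guaranteed by Lemma~\ref{lemma:slicing} to the slices inspected in Definition~\ref{def:acceptance_hna}. One must align the indexing (the slice of index $i$ ends with the $i$-th action, which is the action that must be read to leave the $i$-th state), handle the several initial states of the slicing (one for each possible first action, with $\SLabel(\initstd)$ evaluated on the first slice), argue that the slices generated along the run genuinely coincide with those of $\Sync{K,\ActLab,\KState_0}$ induced by the sequence read so far, and check that the side condition ``$\Sync{K,\ActLab,\KState_0}[p]\neq\emptyset$'' of the acceptance definition coincides with ``$p$ labels a run of $\Slice{K,\ActLab,\KState_0}$'', so that the universal quantification over feasible $p$ on the specification side corresponds to the existence of a single accepting run on the automaton side. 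The rest only uses determinism and totality of $\fhnaut$, the determinism of $\Slice{K,\ActLab,\KState_0}$ on feasible inputs, and the already-proved correctness of the hypernode-logic model-checking subroutine.
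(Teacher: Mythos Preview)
Your proposal is correct and follows essentially the same route as the paper: prove the contrapositive by unfolding Definition~\ref{def:acceptance_hna}, invoke Lemma~\ref{lemma:slicing} to identify the slices of $\Sync{K,\ActLab,\KState_0}$ with the states $\mathbb{K}_i$ along the unique run of the slicing automaton, pair them with the deterministic run of $\fhnaut$, and observe that a barred state is reached iff some slice fails its hypernode formula. The paper compresses your truncation argument (``the slice of index $i$ is already determined once $a_0\ldots a_i$ is fixed'') into a single ``without loss of generality only the last slice fails'' step, but otherwise the structure and the ingredients are the same.
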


\begin{proof}
	Consider arbitrary \(\Kpke\! =\! (\KStates, \domain^\Prop, \KTransition,  \KLabel)\) over a domain \(\alphabet\), a world \(\KState_0 \in \KStates\) and a action labeling \({\ActLab: (\KStates \times \Actions)\! \rightarrow\! \KStates}\).
	We want to prove that \(\Sync{\ActLab(\Kpke, \KState_0)} \notin \Lang(\fhnaut)\) iff \(\Lang(\Intersect(\fhnaut, \ActLab(\Kpke, \KState_0))) \neq \emptyset.\)
	
	\(\Sync{\ActLab(\Kpke, \KState_0)} \notin \Lang(\fhnaut)\) iff there exists a sequence of actions \(p = a_0 \ldots a_n\) that is in \(\Sync{\ActLab(\Kpke, \KState_0)}\), i.e.\
	\(\Sync{\ActLab(\Kpke, \KState_0)}[p] \neq \emptyset\), 
	and \(\Sync{\ActLab(\Kpke, \KState_0)}[p] \not\models \fhnaut[p]\). Wlog, we can assume that only the last slice does not satisfy the corresponding node in \(\fhnaut\). Let \(\fhnaut[p] = \std_0 a_0 \ldots \std_n a_n\).
	Then, \(\Sync{\ActLab(\Kpke, \KState_0)}[p] \not\models \fhnaut[p]\) iff \((\star)\) for all \(0\leq j <n\), \(\Sync{\ActLab(\Kpke, \KState_0)}(a_0 \ldots a_{j-1}, a_{j}) \models \std_j\) while \(\Sync{\ActLab(\Kpke, \KState_0)}(a_0 \ldots a_{n-1}, a_{n}) \not\models \std_n\).
	By Lemma \ref{lemma:slicing}, \(\Sync{\ActLab(\Kpke, \KState_0)}[p] \neq \emptyset\) defines an unique path in \(\Slice{\ActLab(\Kpke, \KState_0)}\), \(\mathbb{K}_0 a_0 \ldots \mathbb{K}_n a_{n}\) that preserves the path slicing defined by \(\ActLab(\Kpke, \KState_0)\). 
	Thus, from \((\star)\), definition of \(\Intersect(\fhnaut, \ActLab(\Kpke, \KState_0))\) and Lemma \ref{lemma:slicing}, \((\mathbb{K}_0, \std_0) a_0 \ldots (\mathbb{K}_n, \overline{\std_n}) a_n\) defines an accepting run in \(\Intersect(\fhnaut, \ActLab(\Kpke, \KState_0))\). Note that \((\mathbb{K}_n, \overline{\std_n})\) is in \(\Intersect(\fhnaut, \ActLab(\Kpke, \KState_0))\) (and it is final) because \(\Sync{\ActLab(\Kpke, \KState_0)}(a_0 \ldots a_{n-1}, a_{n}) \not\models \std_n\).
\end{proof}

The following theorem puts all results from this section together.

\begin{theorem}
Model checking of hypernode automata over pointed Kripke structures with action labelings 
is decidable.
\end{theorem}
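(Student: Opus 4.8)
The plan is to assemble the constructions of Section~\ref{sec:mc} into a single terminating procedure and to check that every step in it is effective. By Theorem~\ref{thm:mc_hypernode_aut}, the question ``is $\Sync{\Kpke,\ActLab,\KState_0}\in\Lang(\fhnaut)$?'' is equivalent to ``is the language of the finite automaton $\Intersect(\fhnaut,\Kpke,\ActLab,\KState_0)$ empty?''. Emptiness of a finite automaton is decidable by a reachability test (is any final state reachable from an initial state?), so it suffices to show that $\Intersect(\fhnaut,\Kpke,\ActLab,\KState_0)$ can be effectively constructed from the inputs $\fhnaut$, $\Kpke$, $\ActLab$, and $\KState_0$.

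First I would argue that the slicing automaton $\Slice{\Kpke,\ActLab,\KState_0}$ of Definition~\ref{def:slicing_aut} is computable: its states are the open substructures $\mathbb{K}[(\Kpke,\ActLab,\KSIn)\downarrow a]$, of which there are only finitely many, since $a$ ranges over the finite set $\Actions$ and $\KSIn$ over the finite powerset $2^{\KStates}$; each such substructure is obtained by a finite reachability computation (collect the transitions lying on an $\varepsilon$-labeled path from $\KSIn$ up to a first $a$-labeled transition, with the induced worlds and their labels), and the transition function is fixed by the maximal-entry-set condition, which is again decidable by inspecting the finitely many candidate states. Then $\Intersect(\fhnaut,\Kpke,\ActLab,\KState_0)$ is formed by pairing states of $\Slice{\Kpke,\ActLab,\KState_0}$ with states of $\fhnaut$; its initial states, transition function, and the ``barred'' bookkeeping are purely syntactic, so the only ingredient needing justification is the membership test $\mathbb{K}\models\SLabel(\std)$, i.e.\ $\Async{\mathbb{K}}\models\SLabel(\std)$, which decides both whether $(\mathbb{K},\std)$ or $(\mathbb{K},\overline{\std})$ belongs to the state set and which states are final.

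This last test is precisely the model-checking problem for hypernode logic over an open Kripke structure, which is decidable by Theorem~\ref{thm:hypernode_logic:decidable} (through Theorems~\ref{thm:mc_hypernode_logic} and~\ref{thm:mc_hypernode_logic:openKS} and Lemma~\ref{thm:Kripke_lang_Redux_Stutter_free_lang}). Since $\Intersect(\fhnaut,\Kpke,\ActLab,\KState_0)$ has only finitely many candidate states, this subroutine is invoked finitely often, so the whole automaton, including its final states, is computable; running the emptiness test on it then decides the model-checking question by Theorem~\ref{thm:mc_hypernode_aut}. I do not expect a genuine obstacle at this stage: all the conceptual difficulty has already been discharged, in particular in the decidability of hypernode-logic model checking, which rests on the stutter-free automata machinery. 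The one point deserving care in writing the argument is that the reduction of Theorem~\ref{thm:mc_hypernode_aut} correctly internalises the universal quantification over \emph{all} finite action sequences $p$ with $\Sync{\Kpke,\ActLab,\KState_0}[p]\neq\emptyset$ into the runs of $\Intersect(\fhnaut,\Kpke,\ActLab,\KState_0)$; this is exactly what Lemma~\ref{lemma:slicing} provides, and once it is in place the present theorem is an immediate corollary of the cited results.
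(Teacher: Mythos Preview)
Your proposal is correct and follows essentially the same approach as the paper: reduce to emptiness of the finite automaton $\Intersect(\fhnaut,\Kpke,\ActLab,\KState_0)$ via Theorem~\ref{thm:mc_hypernode_aut}, observe that the slicing automaton has finitely many states (since they are open substructures of the fixed Kripke structure indexed by subsets of worlds and actions), and invoke Theorem~\ref{thm:hypernode_logic:decidable} for the per-state test $\mathbb{K}\models\SLabel(\std)$. Your write-up is in fact more detailed than the paper's own proof, which dispatches the argument in three sentences; the extra care you take in arguing effectiveness of each construction and in flagging the role of Lemma~\ref{lemma:slicing} is welcome but not a departure in substance.
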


\begin{proof}
We have seen that the model checking of hypernode logic over open Kripke structures
is decidable (Theorem \ref{thm:hypernode_logic:decidable}).
Evaluating \(\Intersect(\fhnaut, K, \ActLab, \KState_0)\) is also decidable.
The main challenge is the slicing of \(\ActLab(\Kpke, \KState_0)\). 
Note that there is a finite number of states that can be in \(\ActLab(\Kpke, \KState_0)\), 
as they are all substructures of the Kripke structure \(\Kpke\). 
\end{proof}

The hardest part of model checking a hypernode automaton over an action-labeled Kripke structure 
is checking the formulas of all hypernodes. 
Therefore, also the running time for model checking hypernode automata is dominated,
as with hypernode logic, 
by a doubly exponential dependency on the number of propositional variables.
Furthermore, our model-checking algorithm depends
singly exponentially on both the size of the Kripke structure and the size of the hypernode automaton. 

\begin{corollary}
\label{thm:upper_bound_mc_hypernonde_aut}
Let \(\Actions\) be a set of actions and \(\Prop\) a set of $m$ propositional variables.
Let \((\Kpke, \KState_0)\) be a pointed Kripke structure over \(\Prop\), and \(\ActLab\) an action labeling 
for $K$ over \(\Actions\). 
Let \(\fhnaut\) be a hypernode automaton over \(\Prop\) and \(\Actions\). 
The time complexity of checking whether \(\Sync{\Kpke,\ActLab,\KState_0)}\mathop{\in} \Lang(\fhnaut)\) 
is \(\mathcal{O}(|\fhnaut|\cdot 2^{|\Actions| + n\cdot|\Kpke|^m})\),  
where \(n\) is the largest number of trace quantifiers that occurs in any hypernode formula in \(\fhnaut\).
\end{corollary}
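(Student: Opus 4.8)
The plan is to run through the model-checking procedure of Section~\ref{sub:mc_hypernode_automata} and multiply the sizes of the objects it constructs, reading off the running time from the product. Recall that the algorithm first builds the slicing automaton $\Slice{\Kpke,\ActLab,\KState_0}$, then the composition $\Intersect(\fhnaut,\Kpke,\ActLab,\KState_0)$, and finally tests the latter for emptiness; by Theorem~\ref{thm:mc_hypernode_aut} this decides whether $\Sync{\Kpke,\ActLab,\KState_0}\in\Lang(\fhnaut)$, so it suffices to bound the cost of each of these three phases.

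First I would bound the number of states of $\Slice{\Kpke,\ActLab,\KState_0}$. By Definition~\ref{def:slicing_aut} every state is an open substructure $\mathbb{K}[(\Kpke,\ActLab,\KSIn)\downarrow a]$, and such a substructure is completely determined by the set $\KSIn\subseteq\KStates$ of entry worlds together with the action $a\in\Actions$ (the exit worlds being fixed once $a$ is chosen). Hence the slicing automaton has at most $|\Actions|\cdot 2^{|\Kpke|}$ states, and crucially each of them is a substructure of $\Kpke$ and therefore has at most $|\Kpke|$ worlds over the same $m$ variables. This also re-establishes, for the present quantitative statement, the finiteness that was already invoked in the decidability proof.

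Second I would bound $\Intersect(\fhnaut,\Kpke,\ActLab,\KState_0)$ together with the cost of materializing it. Its states are pairs of a slicing state $\mathbb{K}$ with a hypernode $\std$ (barred or unbarred) of $\fhnaut$, so there are $O(|\fhnaut|\cdot|\Actions|\cdot 2^{|\Kpke|})$ of them, which is $O(|\fhnaut|\cdot 2^{|\Actions|+n\cdot|\Kpke|^m})$ using $|\Actions|\le 2^{|\Actions|}$ and $|\Kpke|\le n\cdot|\Kpke|^m$. Building this automaton requires, for each pair $(\mathbb{K},\std)$, deciding the membership condition $\mathbb{K}\models\SLabel(\std)$, i.e.\ solving the hypernode-logic model-checking problem of Theorem~\ref{thm:mc_hypernode_logic:openKS}; since $\mathbb{K}$ has at most $|\Kpke|$ worlds and $\SLabel(\std)$ has at most $n$ trace quantifiers, Corollary~\ref{thm:upper_bound_mc_hypernonde} bounds each such test by $\mathcal{O}(2^{n\cdot|\Kpke|^m})$. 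The final emptiness test on $\Intersect(\fhnaut,\Kpke,\ActLab,\KState_0)$ is linear in its size and is thus dominated. Multiplying the state count by the per-pair cost, and suppressing constant factors inside the exponent as is already done in the proof of Corollary~\ref{thm:upper_bound_mc_hypernonde}, yields the claimed bound $\mathcal{O}(|\fhnaut|\cdot 2^{|\Actions|+n\cdot|\Kpke|^m})$.

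The step I expect to be the main obstacle is the tightness of this accounting rather than any new construction: one must observe that the slicing automaton is only singly exponential (in $|\Kpke|$ and $|\Actions|$), and that because every slicing state is a substructure of $\Kpke$ with no more worlds than $\Kpke$ itself, the hypernode-logic model checks are all instantiated with world-parameter $|\Kpke|$, so the term $n\cdot|\Kpke|^m$ from Corollary~\ref{thm:upper_bound_mc_hypernonde} dominates the $|\Kpke|$ coming from the slicing and the whole bound collapses to a single exponent $\Theta(n\cdot|\Kpke|^m)$ plus the additive $|\Actions|$. The linear dependence on $|\fhnaut|$ then falls out because $\fhnaut$ enters only as the number of hypernode factors in the state pairs.
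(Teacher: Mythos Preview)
Your proposal is correct and follows the same route as the paper, which only offers an informal one-paragraph justification before the corollary (observing that the hypernode-logic checks dominate and invoking the bound of Corollary~\ref{thm:upper_bound_mc_hypernonde}); you have simply fleshed out that sketch by explicitly counting the slicing states, the product states of $\Intersect$, and the per-pair cost. Your remark about absorbing the extra $2^{|\Kpke|}$ factor into the exponent via $|\Kpke|\le n\cdot|\Kpke|^m$ (at the price of a constant in the exponent, as the paper itself does in the proof of Corollary~\ref{thm:upper_bound_mc_hypernonde}) is exactly the accounting the stated bound needs.
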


  \section{Related work}
  \label{sec:related_work}

The first logic studied to express asynchronous hyperproperties
was an extension of \(\mu\)-calculus with explicit quantification over traces, called H\(\mu\) \cite{GutsfeldMO21}.
The trace-quantifier free formulas of  H\(\mu\) are expressively equivalent 
to the parity multi-tape Alternating Asynchronous
Word Automata (AAWA) introduced in \cite{GutsfeldMO21}. 
Both formalisms have highly undecidable model-checking problems.
The undecidability stems from comparing positions in different traces that are arbitrarily far apart, 
over an unbounded number of traces. 
When one of the two dimensions (the distance between compared positions, or the number of traces) 
is given an explicit finite bound, model checking becomes decidable \cite{GutsfeldMO21}. 
In comparison, we achieve decidability by an entirely different means: 
we decouple the progress of different program variables (asynchronicity), 
while allowing resynchronization through automaton-level transitions.

In H\(\mu\) formulas, trace quantifiers always precede time operators, 
while hypernode automata allow a restricted form of quantifier alternation between time 
operators and trace quantifiers.
In particular, the automaton-level transitions correspond to outermost time operators, 
which precede the trace quantifies of hypernode logic formulas, 
whose stutter-reduced prefixing relations correspond to innermost time quantifiers.
As was shown in \cite{BartocciFHNC22}, a change in the quantifier order between trace and 
time quantifiers can cause a synchronous hyperproperty to become asynchronous, and vice versa.
For this reason, 
we conjecture that H\(\mu\) and hypernode automata have incomparable expressive powers.
Consider, for example, the hypernode automaton shown in Figure \ref{fig:exp:hu_ha}, 
which specifies that the asynchronous progress of a propositional variable \(p\) is fully described 
by a finite trace $\pi$ within each slice induced by a repeated action \(a\). 
Each new slice can have a different trace $\pi$ witnessing the asynchronous progress of \(p\).
The length of the traces in each slice is unbounded, 
and as we do not know how many times \(a\) repeats, the number of slices is also unbounded. 
Hence we do not know how many outermost existential trace quantifiers would be needed 
in order to guarantee separate trace witness for each slice. 
Therefore we conjecture that the hyperproperty that is specified by the hypernode automaton 
of Figure \ref{fig:exp:hu_ha} cannot be expressed in H\(\mu\).

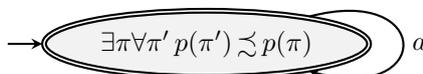
\begin{figure}
\caption{Hypernode automaton specifying that within each slice of a trace set induced by the 
repeated action \(a\), there exists a (possibly different) trace that describes the asynchronous 
progress of the propositional variable \(p\) within the current slice. 
}
\label{fig:exp:hu_ha}
 \centering
	 \begin{tikzpicture}[thick]
	 \node[state, initial, accepting] (s) {\(\exists \traceVar \forall \traceVar' \,  p(\traceVar') \Rel p(\traceVar)\)};
	 \draw
	 	(s) edge[loop right,align=center, looseness = 6] node{\(a\)} (s)
	 	;
	 \end{tikzpicture}
\vspace{-0.5cm}
\end{figure}

Also various extensions of HyperLTL were explored recently in order to support asynchronous 
hyperproperties.
\emph{Stuttering} HyperLTL (HyperLTL$_S$) and \emph{context} HyperLTL (HyperLTL$_C$),
both introduced in~\cite{lics2021},  extend HyperLTL with new operators. 
Stuttering HyperLTL allows the annotation of temporal operators with LTL formulas that describe indistinguishable time sequences. 
In this semantics, the next observable step within a trace is the next point in time at which 
the LTL formula annotating the current time operator has a different truth value. 
Hence time does not progress synchronously across different traces, as it depends on how the valuation of a LTL formula changes within a trace. 
Context HyperLTL introduces a unary modality parameterized by a set of trace variables called \emph{context}. 
The trace variables within a context specify which traces progress as temporal subformulas 
are evaluated (the traces that are not in the context will stutter).
\emph{Asynchronous} HyperLTL (A-HyperLTL)~\cite{cav2021}
extends HyperLTL with quantification over \emph{trajectories}. 
Similar to a context, a trajectory specifies the traces that progress in each evaluation step. 
While the model-checking problems for all of these extensions of HyperLTL are undecidable, 
the authors identify syntactic fragments that support certain asynchronous hyperproperties. 
These decidable fragments adopt restrictions akin to the decidable parts of H\(\mu\).
All of HyperLTL$_S$, HyperLTL$_C$, and A-HyperLTL are subsumed by H\(\mu\) \cite{bozzelli2022expressiveness}.
As we argued that the hypernode automaton 
from Figure \ref{fig:exp:hu_ha} cannot be expressed in H\(\mu\), 
it would neither be expressible in any of the three proposed asynchronous extensions of HyperLTL.

Krebs et al.~\cite{krebs2018team} propose to reinterpret LTL under a so-called \emph{team} semantics.
Team semantics works with sets of variable assignments, 
and the authors introduce both synchronous and asynchronous varieties.
They prove that under asynchronous team semantics, LTL is as expressive as universal HyperLTL 
(where all trace quantifiers are universal quantifiers).
As hypernode logic allows existential quantification over traces, again, our approach is 
orthogonal and expressively incomparable.

In~\cite{LIPIcs.CONCUR.2021.24} the authors introduce HyperATL*, which extends 
alternating-time temporal logic~\cite{AlurHK02}
with strategy quantifiers that bind strategies to trace variables, 
and an explicit construct to resolve games in parallel.
HyperATL* enables the specification of strategic hyperproperties. 
This work is orthogonal to ours as we are interested in \emph{linear-time} asynchronous hyperproperties, 
rather than strategic hyperproperties.

Although many logic-based specification languages have been proposed to express asynchronous hyperproperties, there is a lack of automaton-based approaches to specify such properties. 
Note that AAWA~\cite{GutsfeldMO21} do not support explicit quantification over trace variables. 
The finite-word \emph{hyperautomata} of \cite{BonakdarpourS21} 
constitute a step in this direction by prefixing finite automata with explicit quantification over traces,
but they are limited to \emph{synchronous} hyperproperties.

  \section{Conclusion}
  \label{sec:conclusion}

  We presented a new formalism for specifying hyperproperties of concurrent systems. 
  Our formalism mixes synchronization between different execution traces, 
  expressed as action-labeled transitions of a specification automaton, 
  with asynchronous comparisons between corresponding segments of different traces, 
  expressed as hypernode logic formulas that label the states of the specification automaton. 
  In this way, the specification language of hypernode automata can alternate asynchronous 
  requirements on trace segments of possibly different lengths with synchronization points. 
  Unlike previous formalisms for specifying asynchronous hyperproperties, 
  hypernode automata fully support automatic verification.
  Our model-checking algorithm for hypernode automata is based on an entirely novel technique that 
  introduces stutter-free automata and operations on these automata, 
  thus providing a nice example for the power of automata-theoretic methods in verification.
  
  Besides having a decidable verification problem,
  hypernode automata also represent a genuinely new and useful specification language.
  We demonstrated this by 
  specifying several published variations of observational determinism using hypernode logic,
  by specifying information declassification using hypernode automata, 
  and by arguing that the formalism of hypernode automata is expressively incomparable 
  to various hyperlogics that have been proposed recently for specifying asynchronous 
  hyperproperties.

  The boundary between asynchronicity and synchronicity of trace comparisons can be 
  fine-tuned by introducing variables with compound types, such as boolean arrays, 
  which can be used, for example, 
  to couple the variables of each thread of a multi-threaded program.
  The ramifications of such alphabet variations on hypernode logic and hypernode automata 
  are to be explored in future work.
  There is no shortage of additional topics that follow immediately from the present work 
  but, even if straightforward, require further investigations, 
  including the study of 
  hypernode automata with partial and nondeterministic transition relations, and of 
  hypernode automata with infinitary acceptance conditions (such as hypernode B\"uchi automata), 
  as well as 
  the extension of formal expressiveness studies for hyperproperty specifications
  in order to include hypernode logic and automata, 
  and the presentation of algorithms for solving classical decision problems for 
  hypernode logic and automata other than model checking 
  (such as satisfiability and emptiness). 
  Also the applicability of stutter-free automata in other asynchronous verification contexts 
  (not necessarily concerning hyperproperties) is an interesting question.

\bibliography{main}


\end{document}